\documentclass[12pt]{article}%
\usepackage{enumitem}
\usepackage{adjustbox}
\usepackage{bm}
\usepackage{amsfonts}
\usepackage[colorlinks,citecolor=blue,urlcolor=blue,hyperfootnotes=false]{hyperref}
\usepackage{amssymb}
\usepackage[bottom]{footmisc}
\usepackage[hyperref]{ntheorem}
\usepackage{graphicx}
\usepackage{float}
\usepackage[figuresright]{rotating}
\usepackage{natbib}  
\usepackage[doublespacing,nodisplayskipstretch]{setspace}
\usepackage[margin=1in, a4paper]{geometry}
\usepackage{booktabs}
\usepackage{threeparttable}
\usepackage{amsmath}
\usepackage{lscape}
\usepackage{scrextend}
\usepackage{relsize}
\usepackage{multicol}
\usepackage{chngcntr}
\usepackage{etoolbox}
\usepackage{amssymb}
\usepackage{tabularx}
\usepackage[english]{babel}
\usepackage{multirow}
\usepackage{booktabs}
\usepackage[labelfont=bf]{caption}
\usepackage{float}
\usepackage{titlesec}
\usepackage{array}
\usepackage{color}
\usepackage{afterpage}
\usepackage{framed}
\usepackage{float}
\usepackage{afterpage}%
\usepackage{appendix}
\usepackage{subcaption}
\providecommand{\U}[1]{\protect\rule{.1in}{.1in}}
\allowdisplaybreaks

\theoremstyle{plain}
\newcommand{\ii}{{\rm i}}
\newcommand{\ee}{{\rm e}}
\newtheorem{theorem}{Theorem}[section]
\newtheorem{Assumption}{Assumption}[section]

\newtheorem{lemma}{Lemma}[section]

\newenvironment{proof}[1][Proof]{\noindent\textbf{#1.} }{\ \rule{0.5em}{0.5em}}

\numberwithin{equation}{section}
\numberwithin{table}{section}
\numberwithin{figure}{section}

\theorembodyfont{\normalfont}

\deffootnote{1em}{1.6em}{\thefootnotemark \enskip}

\numberwithin{equation}{section}
\numberwithin{Assumption}{section}

\definecolor{lightgray}{gray}{0.9}
\begin{document}

%Testing heteroscedasticity in nonparametric regression

%\title{Tests for heteroskedasticity with measurement errors}
\title{Testing Heteroskedasticity Under Measurement Error}
\author{Xiaojun Song\thanks{Corresponding author. Guanghua School of Management, Peking University, Email: \texttt{sxj@gsm.pku.edu.cn}. This work was supported by the National Natural Science Foundation of China [Grant Numbers 72373007 and 72333001]. The author also gratefully acknowledges the research support from the Center for Statistical Science of Peking University, and the Key Laboratory of Mathematical Economics and Quantitative Finance (Peking University) of the Ministry of Education, China.}\\
	\and Jichao Yuan\thanks{Guanghua School of Management, Peking University, Email: \texttt{2201111030@stu.pku.edu.cn}.}
	}

\maketitle
\begin{abstract}
In this paper, we propose a novel approach to detect heteroskedasticity in regression models with regressors contaminated by measurement error. Specifically, inspired by the integrated conditional moment (ICM) approach, we construct test statistics based on a deconvolved residual-marked empirical process and establish their asymptotic properties in both ordinary smooth and supersmooth cases, assuming the measurement error distribution is known. The issue of an unknown measurement error distribution is addressed by employing estimators of the measurement error characteristic function based on repeated measurements. Furthermore, depending on whether the measurement error distribution is known or not, to obtain critical values from the case-dependent limiting null distributions, we propose two computationally attractive multiplier bootstrap methods where the %well-known 
``parameter estimation effect'' is successfully addressed. %eliminated by an orthogonal projection onto the tangent space of the nuisance parameter. 
Finally, simulation results and empirical studies about corn yields and household budget shares confirm the favorable properties of the proposed tests.
\end{abstract}

\newpage
\doublespacing

\section{Introduction}\label{sec.Intro}
As \cite{white1980heteroskedasticity} famously observed, \textquotedblleft the presence of heteroskedasticity in the disturbances of an otherwise properly specified linear model leads to consistent but inefficient parameter estimates and inconsistent covariance matrix estimates \textquotedblright{}, testing whether the error terms in regression models satisfy the homoskedasticity assumption is of fundamental importance, in order to avoid invalid hypothesis testing and a substantial deterioration in the predictive accuracy of the model. 

An extensive literature has emerged on testing for heteroskedasticity across a wide range of regression frameworks over the past half-century. Work includes \cite{bickel1978using}, \cite{breusch1979simple}, \cite{white1980heteroskedasticity}, \cite{cook1983diagnostics}, \cite{dette1998testing}, \cite{zhu2001heteroscedasticity}, \cite{dette2002consistent}, \cite{zheng2009testing}, \cite{su2013nonparametric}, \cite{guo2020pairwise}, \cite{tan2021testing} and \cite{xu2021distance}. Despite the maturity of this line of research, almost all existing procedures rest on a strong maintained assumption that all random variables are observed without error. This no–measurement-error assumption is rarely credible in empirical work, and the literature’s longstanding silence on it is largely driven by the severe technical difficulties. 

A growing literature has examined inference and testing problems in regression models with measurement error. Early work includes \cite{carroll2006measurement} and \cite{hu2012estimation}, which primarily focus on identification and estimation. Building on these developments, more recent work further extends these ideas: \cite{dong2021average} study the density-weighted average derivative estimator in measurement error models; \cite{otsu2021specification} develop smoothing-based specification tests, and \cite{dong2022estimation} develop semiparametric estimation for varying coefficient models with mismeasured regressors; \cite{dong2022nonparametric} propose nonparametric significance tests based on deconvolution estimators. More fundamentally, for estimating characteristic functions under unknown measurement error distributions, \cite{delaigle2008deconvolution} and \cite{kurisu2022uniform} provide nonparametric identification and uniform convergence results, respectively. However, in this setting, heteroskedasticity testing becomes more complex, as the researcher must consistently estimate both the underlying regression function and the disturbance variance structure, potentially without knowledge of the measurement error distribution.

Unfortunately, if researchers ignore the noise in the data and mechanically apply conventional procedures to test for heteroskedasticity, they can easily be led to misleading conclusions. From a theoretical perspective, the seminal analysis of heteroskedasticity tests in linear errors-in-variables models in \cite{wooldridge1996solutions} shows that, once the regressors are measured with error, the asymptotic behavior of the usual statistics based on auxiliary regressions of squared residuals no longer follows from the classical framework and has to be re-derived under additional regularity conditions. On the applied side, \cite{alica2025comparison}, by comparing the simulation performance of traditional tests (such as those discussed in \cite{goldfeld1965some}, \cite{park1966estimation}, \cite{glejser1969new}, \cite{harvey1976estimating}, \cite{breusch1979simple} and \cite{white1980heteroskedasticity}) when the data are contaminated by measurement error, demonstrates that ignoring measurement error can cause the empirical size of these tests to deviate substantially from the nominal level and their power to deteriorate dramatically.

We next review the limited literature addressing tests for heteroskedasticity in the presence of measurement error. In an early contribution, \cite{carroll1992diagnostics} proposed corrected residual plots that adjust for measurement error, together with associated formal tests. \cite{wooldridge1996solutions} provided an asymptotic analysis of familiar Lagrange multiplier–type tests for homoskedasticity when the regressors are subject to classical measurement error. \cite{wallentin2002test} showed, by simulation, that least-squares residuals may still be informative for detecting heteroskedasticity even when parameters are estimated by instrumental variables, whereas naive implementations of White’s test based on instrumental-variables residuals and observed regressors can exhibit severely inflated rejection rates. Most recently, \cite{romeo2024detecting} developed a measurement-error–adapted version of White’s test, implemented via a model-based bootstrap. However, these contributions are all derived in simple linear regression settings that are difficult to satisfy in many empirical applications involving multiple covariates and more flexible parametric specifications.

The primary focus of our analysis is to extend heteroskedasticity tests for measurement error to parametric regression models that are more representative of empirical practice. Specifically, we develop integrated conditional moment (ICM)–type tests based on a residual-marked empirical process in the spirit of \cite{bierens1982consistent} and \cite{bierens1990consistent}. The presence of measurement error in the regressors is handled by employing deconvolution kernel estimators, as motivated by \cite{carroll1988optimal} and \cite{stefanski1990deconvolving}, to construct residuals. Critical values are then obtained through a computationally effective multiplier bootstrap. In implementing this bootstrap, we project the empirical process orthogonally onto the tangent space of the nuisance parameter to remove the \textquotedblleft parametric estimation effect\textquotedblright{}, which is initially discussed in \cite{durbin1973distribution} and has long been regarded as an unavoidable and notoriously difficult obstacle to the practical implementation of ICM-type tests, thereby ensuring the validity of the bootstrap approximation.

It is important to highlight our contribution to addressing the long-standing difficulty of jointly recovering the regression relationship and the disturbance variance structure in the presence of measurement error. In particular, the resulting test statistics enjoy a parametric convergence rate, the associated bootstrap implementation is fast and conceptually transparent, and the procedure is robust with respect to tuning parameters, making it a promising tool for other testing problems beyond the specific setting considered here. In addition, we develop a companion testing procedure for the empirically realistic case in which the distribution of the measurement error is unknown, and we provide both asymptotic theory and empirical evidence for its performance. 

The rest of the paper is organized as follows. In Section \ref{sec.Test}, we outline the testing
framework and the construction of our statistics. Then the asymptotic properties of the test statistics with some reasonable assumption are discussed under the null, local alternatives, and global alternative in Section \ref{sec.Asy}. The cases of unknown measurement error distribution are addressed in Section \ref{sec.Unknown} and the analysis about asymptotic behaviors of the proposed test statistics in Section \ref{sec.Unknown} parallels that in Section \ref{sec.Asy} and differ only in the strength of assumptions. In Section \ref{sec.boot}, we detail the implementation of the projection-based multiplier bootstrap procedure, emphasizing that the customized, explicitly constructed projection is particularly appealing due to its ease of interpretation and computational convenience. Results of Monte Carlo simulations and empirical studies are presented in Section \ref{sec.Simulation} and \ref{sec.Examples}, respectively. Additional simulation results and proofs are provided in the online supplementary appendix.

\section{Testing Procedure}\label{sec.Test}
Let $Y$ be a %scalar 
response variable and $X$ be the scalar unobservable %error-free 
explanatory variable. Suppose that we consider a potentially heteroskedastic parametric regression model
\begin{align}\label{var.Y}
    Y = g(X;\theta_0)+U,
\end{align}
where $g(\cdot;\theta_0)$ is the conditional mean function of a known parametric form characterized by the unknown parameter $\theta_0\in\Theta\subset\mathbb{R}^p$ and $U$ is the error term, which is the part of $Y$ not explained by $X$. In addition, the condition $\mathbb{E}[U\vert X]=0$ almost surely (a.s.) is imposed to ensure the correct specification of the conditional mean function $g(X;\theta_0)$. Moreover, since the true regressor $X$ is not directly observed, we instead observe a scalar variable $W$ through an additive measurement error model
\begin{align}\label{var.W}
    W=X+\epsilon,\quad \mathbb{E}\left[\epsilon\right]=0,
\end{align}
where $\epsilon$ is assumed to follow the classical measurement error assumption in the sense that $\epsilon$ is independent of $X$. The density of $\epsilon$, $f_\epsilon$, is assumed to be known here and in Section \ref{sec.Asy}, while the case of an unknown density $f_\epsilon$ is addressed in Section \ref{sec.Unknown}. We are interested in testing whether the conditional variance function of $U$ given $X$ is equal to an unknown positive constant. That is, our null hypothesis of interest is
\begin{align}\label{hyp.H0}
    H_0:\mathbb{E}\left[U^2\vert X\right]= \sigma^2_0 \quad \text{a.s. for some }0<\sigma_0^2<\infty,
\end{align}
and the alternative hypothesis is 
\begin{align}\label{hyp.H1}
    H_1:\Pr\{\mathbb{E}\left[U^2\vert X\right]\neq \sigma^2\}>0 \quad \text{for all }0<\sigma^2<\infty,
\end{align}
which is the negation of $H_0$. Note that $\sigma^2_0$ is also the unconditional variance of $U$ when $H_0$ holds. %$\sigma^2_0=\mathbb{E}[U^2]$. 

Following \cite{bierens1982consistent}, \cite{bierens1990consistent}, and \cite{bierens1997asymptotic}, the conditional moment restriction in \eqref{hyp.H0} that characterizes $H_0$ can be equivalently expressed as a continuum number of unconditional moment restrictions as follows:
\begin{align}\label{stat.Population}
    S(\xi,\theta_0,\sigma^2_0) = \mathbb{E}\left\{\left[\left(Y-g(X;\theta_0)\right)^2-\sigma_0^2\right]\ee^{\ii X\xi}\right\}=0 \quad \text{for all } \xi\in\Pi,
\end{align}
%for some $\theta_0\in\Theta\subset\mathbb{R}^p\,,\sigma_0\in\mathbb{R}$, 
where $\ii=\sqrt{-1}$ denotes the imaginary unit, $\Pi$ is a properly chosen compact set with nonempty interior, and the weighting function $\ee^{\ii X\xi}$ is selected from a parametric family indexed by $\xi$. In the absence of measurement error, the parameters $\theta_0$ and $\sigma^2_0$ are estimable via parametric methods, enabling the construction of test statistics based on a sample analog of equation \eqref{stat.Population}. However, in the presence of measurement error, the explanatory variable $X$ is unobservable and only the contaminated version $W$ is available. Motivated by \cite{dong2022nonparametric}, equation \eqref{stat.Population} can be reformulated in terms of the joint density of $(Y,X)$, which is denoted by $f_{Y,X}(y,x)$, as detailed below
\begin{align}\label{stat.Integral}
    S(\xi,\theta_0,\sigma^2_0) = \iint \left[\left(y-g(x;\theta_0)\right)^2-\sigma_0^2\right]f_{Y,X}(y,x)\ee^{\ii x\xi}\,dy\,dx.
\end{align}
Given a random sample $\{(Y_i,W_i)'\}_{i=1}^n$ of size $n\geq 1$ and motivated by \cite{carroll1988optimal} and \cite{stefanski1990deconvolving}, the unknown density $f_{Y,X}(y,x)$ can then be estimated using deconvolution kernels associated with the measurement error $\epsilon$,
\begin{align}\label{stat.gfest}
    \hat{f}_{Y,X}(y,x) = \frac{1}{nb^2}\sum_{i=1}^nK\left(\frac{y-Y_i}{b}\right)\mathcal{K}_\epsilon\left(\frac{x-W_i}{b}\right)
\end{align}
where $K$ is a kernel, $b$ is bandwidth shrinking to zero at suitable rates and 
\begin{align}\label{stat.ft}
    \mathcal{K}_\epsilon(x) = \frac{1}{2\pi}\int \ee^{-\ii tx}\frac{K^{\text{ft}}(t)}{f^{\text{ft}}_\epsilon(t/b)}\,dt.
\end{align}
For notational simplicity, throughout the paper, let $\mathcal{K}_\epsilon(x) = b\mathcal{K}_b(x)$ and let $f_\eta^{\text{ft}}(t)=\int \ee^{\ii tx}f_\eta(x)\,dx$ denote the Fourier transform of the generic random variable $\eta$. Let $\theta_n$ denote the estimator for $\theta_0$ based on the sample $\{(Y_i,W_i)'\}_{i=1}^n$ when the measurement error density $f_\epsilon$ is assumed to be known. The estimator for the variance $\sigma_0^2$ can be obtained by expressing $\mathbb{E}[(Y-g(X;\theta_0))^2]$ as an integral by replacing the joint density $f_{Y,X}(y,x)$ with its estimator $\hat{f}_{Y,X}(y,x)$ in \eqref{stat.gfest}, which, after rearrangement, is given by %plugging in the estimator $\theta_n$ for $\theta_0$, and rearranging,
\begin{align}\label{stat.var}
    \sigma_n^2 = \frac{1}{n}\sum_{i=1}^n\int\left(Y_i-g(x;\theta_n)\right)^2\mathcal{K}_b\left(\frac{x-W_i}{b}\right)\,dx.
\end{align}
Based on the above estimators, we can express the sample version of $S(\xi,\theta_0,\sigma^2_0)$ in \eqref{stat.Integral} as the following empirical process:
\begin{align}\label{stat.known}
    S_n(\xi,\theta_n,\sigma^2_n) = \frac{1}{n}\sum_{i=1}^n\int\left[\left(Y_i-g(x;\theta_n)\right)^2-\sigma_n^2\right]\mathcal{K}_b\left(\frac{x-W_i}{b}\right)\ee^{\ii x\xi}\,dx,
\end{align}
which is expected to be close to zero under the null and deviate from zero under the alternative. 

The test statistics are constructed based on appropriate distances from $S_n(\cdot,\theta_n,\sigma^2_n)$ to zero. We adopt the commonly used Kolmogorov--Smirnov (KS)-type statistic, which is based on the sup norm, and the Cram\'{e}r--von Mises (CvM)-type statistic, which relies on the squared norm, respectively:
\begin{align*}
KS_{n}=\sup_{\xi\in\Pi}\left\vert\sqrt{n}S_{n}(\xi,\theta_n,\sigma^2_n)\right\vert\quad\text{and}\quad
CvM_{n}=\int_\Pi\left\vert\sqrt{n}S_{n}(\xi,\theta_n,\sigma^2_n)\right\vert^2\,d\xi,
\end{align*}
where the uniform integrating measure on $\Pi$ is employed as recommended in \cite{dong2022nonparametric}. The null hypothesis $H_0$ is rejected when the test statistics $KS_{n}$ and $CvM_{n}$ exceed their critical values, which are obtained via a computationally attractive multiplier bootstrap procedure detailed in Section \ref{sec.boot}.

\section{Theoretical Results}\label{sec.Asy}
In this section, we study the asymptotic properties of the $KS_n$ and $CvM_n$ statistics introduced in Section \ref{sec.Test}, under the assumption that the distribution of the measurement error is known. The case with unknown measurement error distribution will be discussed in detail in Section \ref{sec.Unknown}. Based on the smoothness of the measurement error $\epsilon$, we distinguish between the ordinary smooth and supersmooth cases. For each case, we provide the required assumptions and establish the limiting distribution of the empirical process $S_n(\cdot,\theta_n,\sigma^2_n)$ in \eqref{stat.known} under the null. A sequence of local alternatives converging to the null at the parametric rate $n^{-1/2}$ is subsequently investigated, and the corresponding asymptotic local power is derived. Finally, the global power of the proposed test is discussed by analyzing the asymptotic behavior of $S_n(\cdot,\theta_n,\sigma^2_n)$ under the alternative. 

In the following, let $f_X(x)$ denote the density function of $X$, $g^{(p)}(x;\theta)$ denote $p$-th derivative of function $g(x;\theta)$ with respect to $x$ and $\binom{a}{b}=a!/[(a-b)!b!]$ denote the binomial coefficient for nonnegative integers $a\geq b$. To facilitate the theoretical analysis, we begin by imposing a set of regularity conditions that hold for both the ordinary smooth and supersmooth cases.
\begin{Assumption}\label{ass.D}

    \quad

    \begin{enumerate}[label=(\roman*)]
	\item $\{(Y_i,W_i)'\}_{i=1}^n$ is an independent and identically distributed (i.i.d.) sample of $(Y,W)'$, which satisfies \eqref{var.Y}, \eqref{var.W}, and $\mathbb{E}|Y|^4<\infty$.
        
    \item The estimator $\theta_n$ satisfies a convergence rate $\theta_n-\theta_0=O_p(n^{\delta-1/2})$ for some $0\leq \delta<1/4$ in parametric model \eqref{var.Y} with measurement error \eqref{var.W}.

	\item The measurement error $\epsilon$ is independent of $X$.
    \end{enumerate}

\end{Assumption}

Assumption \ref{ass.D}({\romannumeral1}) imposes random sampling and the existence of the fourth moment of $Y$, which are necessary to ensure that the statistic constructed from $Y^2$ has a finite second-order moment. Assumption \ref{ass.D}({\romannumeral2}) relaxes the convergence rate condition on the parameter compared with the case without measurement error. As pointed out in the literature, estimation in parametric models involving measurement error often fails to attain the parametric rate unless additional assumptions are imposed, see \cite{taupin1998estimation}. To account for this, we relax the convergence rate assumption to allow rates between $n^{-1/4}$ and $n^{-1/2}$. Finally, Assumption \ref{ass.D}({\romannumeral3}) adopts the classical measurement error assumption, which ensures the asymptotic zero mean of the test statistic and the existence of its variance. Following \cite{delaigle2008deconvolution}, which highlights that ordinary smooth and supersmooth cases can be distinguished by the decay rate of the characteristic function of the measurement error, we first consider the setting where the characteristic function decays polynomially, and state the corresponding assumptions.
\begin{Assumption}\label{ass.O}

    \quad 

    \begin{enumerate}[label=(\roman*)]
	\item The functions $f_X(x)$ and $h(x;\theta)$ are $p$-times continuously differentiable with bounded and integrable derivatives, where $p$ is a positive integer satisfying $p>\alpha$ and $h(x;\theta)$ can be taken to be any of the following functions: $g(x;\theta)$, $g^2(x;\theta)$, $g^2(x;\theta)f_X(x)$, $\partial g(x;\theta)/\partial\theta$, $g(x;\theta)[\partial g(x;\theta)/\partial\theta]$, and $[\partial g(x;\theta)/\partial\theta][\partial g(x;\theta)/\partial\theta^\top]$, with $\theta$ takeing values in a neighborhood of $\theta_0$. Furthermore, we impose additional assumptions about the Lipschitz continuous properties of $f_X^{(p)}(x)$, $h^{(p)}(x;\theta)$, and $g(x;\theta_0)f_X(x)$ for almost every $x$ as follows:
    \begin{align*}
        &\left\vert f_X^{(p)}(x+y) - f_X^{(p)}(x) \right\vert \leq L_{f_X^{(p)}}(x)\vert y\vert,\\
        &\left\vert h^{(p)}(x+y;\theta) - h^{(p)}(x;\theta) \right\vert \leq L_{h^{(p)}}(x)\vert y\vert,\\
        & \left\vert \left[g(x+y;\theta_0)f_X(x)\right]^{(p)} - \left[g(x;\theta_0)f_X(x)\right]^{(p)} \right\vert \leq L_{\left[g f\right]^{(p)}}(x)\vert y\vert.
    \end{align*}
    The above functions $L_{f_X^{(p)}}(x)$, $L_{h^{(p)}}(x)$, and $L_{\left[g f\right]^{(p)}}(x)$ satisfy integrable conditions:
    \begin{align*}
        &\int \left\vert h(x;\theta)L_{f_X^{(p)}}(x)\right\vert\,dx<\infty,\quad \int \left\vert L_{h^{(p)}}(x)\right\vert\,dx<\infty,\\
        &\int \left\vert \frac{\partial g(x;\theta)}{\partial\theta}L_{\left[g f\right]^{(p)}}(x)\right\vert\,dx<\infty, \quad \text{and}\quad \int \left\vert \frac{\partial g(x;\theta)}{\partial\theta}\frac{\partial g(x;\theta)}{\partial\theta'}L_{\left[gf\right]^{(p)}}(x)\right\vert\,dx<\infty.
    \end{align*}

        \item The characteristic function of the measurement error $\epsilon$ is of the following form for all $t\in\mathbb{R}$, where $c_0^{os},c_1^{os},\ldots,c_{\alpha}^{os}$ are finite constants with $c_0^{os} = 1$ and $\alpha>0$,
        \begin{align*}
            f_\epsilon^{\mathrm{ft}}(t) = \frac{1}{c_0^{os}+c_{1}^{os}t+\cdots+c_{\alpha}^{os}t^{\alpha}}.
        \end{align*}
        
        \item The kernel function $K$ is differentiable to order $p+1$ and satisfies the following conditions:
        \begin{align*}
            &\int K(u)du = 1, \qquad \int u^{p}K(u)du\neq 0, \qquad \int u^lK(u)du = 0
        \end{align*}
        for $l=1,2,\cdots,p-1$. In addition, $K^{\mathrm{ft}}$ is compactly supported on $[-c_0,c_0]$, symmetric around zero, and bounded.
        
        \item $nb^{2p}\to 0$ as $n\to\infty$.
        
        \item For $c_l^{os}(\xi) = (-i)^l\sum\limits_{h=l}^{\alpha}c_h^{os}\binom{h}{l}\xi^{h-l}$, we have
        \begin{align*}
            \mathbb{E}\left[\sup_{\xi\in\Pi}\left\vert \sum_{l=0}^\alpha c_l^{os}(\xi)h^{(l)}(W;\theta)\right\vert^2\right]<\infty,
        \end{align*}
        where the function $h(x;\theta)$ is mentioned in ({\romannumeral1}).
    \end{enumerate}
\end{Assumption}

To facilitate a detailed analysis of the structural functions at the observed sample points $W_i$ using Taylor expansion, Assumption \ref{ass.O}({\romannumeral1}) places smoothness restrictions on the structural functions. This follows the assumptions of the ordinary smooth case in existing work on measurement error (e.g., \cite{dong2022nonparametric}), which impose Lipschitz continuity and integrability of the corresponding Lipschitz coefficients. Moreover, to ensure that the variance of the empirical process $S_n(\cdot,\theta_n,\sigma^2_n)$ in \eqref{stat.known} is not distorted by \textquotedblleft parameter estimation uncertainty\textquotedblright, initially discussed in \cite{durbin1973distribution}, we strengthen the smoothness requirements to hold uniformly over a neighborhood of the true value $\theta_0$. Assumption \ref{ass.O}({\romannumeral1}) strengthens the conventional ordinary smooth assumption by characterizing the exact limiting behavior of $f^{\mathrm{ft}}_\epsilon(\cdot)$, which is generalized to the form $f^{\mathrm{ft}}_\epsilon(t) = \exp(\ii t\zeta)/(c_0^{os}+c_{1}^{os}t+\cdots+c_{\alpha}^{os}t^{\alpha})$ for some real number $\zeta$ in \cite{fan1995average}. This refinement is essential for deriving the precise asymptotic form of the test statistic; otherwise, stronger assumptions would be required to impose on a more complex expression involving the Fourier transform to ensure the existence of the variance and the asymptotic negligibility of the expectation. In addition, commonly used distributions such as the Laplace and Gamma distributions are included under this assumption. Accordingly, we introduce higher-order kernel functions, following the construction method provided in \cite{alexander2009deconvolution} for kernels of arbitrary order in Assumption \ref{ass.O}({\romannumeral3}). Kernel functions are used to characterize the integration involving the deconvolution kernel, thereby deriving the precise asymptotic form of the test statistic. In addition, higher-order kernels are also employed to establish the asymptotic negligibility of the expectation. Assumption \ref{ass.O}({\romannumeral4}) requires that the bandwidth converges to zero sufficiently fast as the sample size increases. This so-called undersmoothing assumption, commonly used in the literature, is essential to guarantee the asymptotic negligibility of the expectation of the proposed statistics. Finally, Assumption \ref{ass.O}({\romannumeral5}) is a necessary condition to ensure the boundedness of the variance of $S_n(\cdot,\theta_n,\sigma^2_n)$, as noted in \cite{dong2022nonparametric}, and has been widely adopted in the literature on measurement error models.

Based on the assumptions stated above, the following theorem characterizes the asymptotic behavior of the empirical process $S_n(\cdot,\theta_n,\sigma^2_n)$ in \eqref{stat.known} under the null, in the case where the measurement error distribution is known and is of the ordinary smooth type. Let \textquotedblleft $\Longrightarrow$\textquotedblright{} denote weak convergence on $(l^{\infty}(\Pi),\mathcal{B}_\infty)$ in the sense of Hoffmann--J\textup{\o{}}rgensen, where $\mathcal{B}_\infty$ denotes the corresponding Borel $\sigma$-algebra, see, e.g., Definition $1.3.3$ in \cite{van1996weak}.
\begin{theorem}\label{theorem.known ordinary smooth under H0}
    Suppose Assumptions \ref{ass.D} and \ref{ass.O} hold. Under $H_0$ in \eqref{hyp.H0},
    \begin{align}\label{theorem.known ordinary smooth under H0 eq1}
        \sqrt n S_{n}(\cdot,\theta_n,\sigma^2_n)\Longrightarrow S_{\infty}^{os}(\cdot,\theta_0,\sigma^2_0),
    \end{align}
    where $S_{\infty}^{os}(\cdot,\theta_0,\sigma^2_0)$ is a Gaussian process with mean zero and covariance structure
    \begin{align*}
        &Cov\left[S_{\infty}^{os}(\xi_1,\theta_0,\sigma^2_0),S_{\infty}^{os}(\xi_2,\theta_0,\sigma^2_0)\right] = \mathbb{E}\left[r^{os}_{\infty}(Y,W;\xi_1,\theta_0,\sigma^2_0)r^{os}_{\infty}(Y,W;\xi_2,\theta_0,\sigma^2_0)\right],
    \end{align*}
    with
    \begin{align}\label{theorem.ordinary smooth under H_0 ros}
        r^{os}_{\infty}(Y,W;\xi,\theta_0,\sigma^2_0) =& \ee^{\ii W\xi}\sum\limits_{l=0}^{\alpha} c_l^{os}(\xi) \left[\left(Y-g(W;\theta_0)\right)^2-\sigma_0^2\right]^{(l)} \notag\\
        & - f_X^{\mathrm{ft}}(\xi)\sum\limits_{l=0}^{\alpha} c_l^{os}(0) \left[\left(Y-g(W;\theta_0)\right)^2-\sigma_0^2\right]^{(l)}.
    \end{align} 
\end{theorem}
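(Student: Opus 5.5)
The plan is to write $\sqrt n S_n(\xi,\theta_n,\sigma_n^2)$ as a sum of i.i.d. terms plus remainders that vanish uniformly over $\xi\in\Pi$. The key algebraic fact is the following. For ordinary smooth $\epsilon$ as in Assumption \ref{ass.O}(ii), $1/f_\epsilon^{\mathrm{ft}}(t)=\sum_{h=0}^\alpha c_h^{os}t^h$ is a polynomial. Hence, for a smooth function $m$,
\begin{align*}
\int m(x)\mathcal{K}_b\Big(\frac{x-w}{b}\Big)\ee^{\ii x\xi}\,dx
= \int \sum_{h=0}^{\alpha}c_h^{os}(-\ii\partial_x)^h\big[m(x)\ee^{\ii x\xi}\big]\,\frac1b K\Big(\frac{x-w}{b}\Big)dx .
\end{align*}
This follows from Parseval and integration by parts, since $\mathcal K_b$ is the inverse Fourier transform of $K^{\mathrm{ft}}(bt)/f_\epsilon^{\mathrm{ft}}(t)$ up to scaling. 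Expanding $(-\ii\partial_x)^h[m\,\ee^{\ii x\xi}]$ by Leibniz gives exactly $\ee^{\ii x\xi}\sum_{l}c_l^{os}(\xi)m^{(l)}(x)$, with $c_l^{os}(\xi)$ as defined in Assumption \ref{ass.O}(v). I would apply this with $m(x)=(Y_i-g(x;\theta))^2-\sigma^2$, which is a polynomial in $g$, so the smoothness in Assumption \ref{ass.O}(i) covers it. The remaining kernel convolution with the $p$-th order kernel $K$ then contributes a bias $O(b^p)$ in expectation, and $\sqrt n\, b^p\to0$ by Assumption \ref{ass.O}(iv). The same identity with $\xi=0$ handles $\sigma_n^2$.

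\textbf{Step 1 (decomposition).} I would write $S_n(\xi,\theta_n,\sigma_n^2)=S_n(\xi,\theta_0,\sigma_0^2)+[S_n(\xi,\theta_n,\sigma_0^2)-S_n(\xi,\theta_0,\sigma_0^2)]-(\sigma_n^2-\sigma_0^2)A_n(\xi)$, where $A_n(\xi)=n^{-1}\sum_i\int\mathcal K_b((x-W_i)/b)\ee^{\ii x\xi}dx$.

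The first term is $n^{-1}\sum_i \ee^{\ii W_i\xi}\sum_l c_l^{os}(\xi)[(Y_i-g(W_i;\theta_0))^2-\sigma_0^2]^{(l)}+o_p(n^{-1/2})$ uniformly. To see this, I would take expectations using the independence of $\epsilon$ and $X$. The deconvolution identity $\mathbb E[\,\cdot\mid X]$ returns the integrand evaluated at $X$, so the mean equals $S(\xi,\theta_0,\sigma_0^2)+O(b^p)$, and $S(\xi,\theta_0,\sigma_0^2)=0$ under $H_0$. The variance of the difference between the kernel-smoothed and unsmoothed summands is $o(1)$ by Assumption \ref{ass.O}(v).

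For the $\sigma_n^2$ term, $A_n(\xi)\to_p f_X^{\mathrm{ft}}(\xi)$ uniformly. Also $\sqrt n(\sigma_n^2-\sigma_0^2)=n^{-1/2}\sum_i\sum_l c_l^{os}(0)[(Y_i-g(W_i;\theta_0))^2-\sigma_0^2]^{(l)}+o_p(1)$. Together these produce the second line of $r_\infty^{os}$.

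\textbf{Step 2 (parameter effect).} The term $S_n(\xi,\theta_n,\sigma_0^2)-S_n(\xi,\theta_0,\sigma_0^2)$ must be shown to be $o_p(n^{-1/2})$ uniformly, including after combination with the $\theta_n$-dependence of $\sigma_n^2$. A Taylor expansion in $\theta$ gives a first-order term $-2(\theta_n-\theta_0)^\top n^{-1}\sum_i\int (Y_i-g(x;\theta_0))\partial_\theta g(x;\theta_0)\mathcal K_b\,\ee^{\ii x\xi}dx$. Its mean is $-2\mathbb E[U\partial_\theta g\,\ee^{\ii X\xi}]+O(b^p)=O(b^p)$ because $\mathbb E[U|X]=0$, and its fluctuation is $O_p(n^{-1/2})$ by Assumption \ref{ass.O}(v). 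Hence the term is $O_p(n^{\delta-1/2})(O_p(n^{-1/2})+O(b^p))=o_p(n^{-1/2})$. The second-order term is $O_p(\|\theta_n-\theta_0\|^2)=O_p(n^{2\delta-1})=o_p(n^{-1/2})$ since $\delta<1/4$, using the $\partial g\,\partial g^\top$ moment conditions. This is exactly why Assumption \ref{ass.D}(ii) suffices and no estimation effect appears in $r_\infty^{os}$.

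\textbf{Step 3 (weak convergence).} The leading term is $n^{-1/2}\sum_i r_\infty^{os}(Y_i,W_i;\xi,\theta_0,\sigma_0^2)$, which has mean zero under $H_0$ and an envelope that is square integrable by Assumptions \ref{ass.D}(i) and \ref{ass.O}(v). The class indexed by $\xi\in\Pi$ is a finite sum of products of $\ee^{\ii W\xi}$, polynomials in $\xi$ and fixed functions. It is therefore Lipschitz in $\xi$ on the compact set $\Pi$, with Lipschitz constant $|W|\cdot$envelope plus polynomial terms. Hence it is Donsker (bracketing, Theorem 2.7.11 of van der Vaart and Wellner), and finite-dimensional convergence follows from the multivariate CLT.

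I expect the main obstacle to be the uniform-in-$\xi$ control of the $o_p(n^{-1/2})$ remainders in Step 1. These are $\xi$-indexed processes whose summands depend on $b$, so the class changes with $n$. To handle it I would bound the second moment of the smoothed-minus-unsmoothed summand uniformly in $\xi$ by $O(b^{2})$, via the Lipschitz conditions in Assumption \ref{ass.O}(i). I would then apply a maximal inequality for Lipschitz-in-$\xi$ classes, or a chaining bound on a grid of $\Pi$ of mesh $n^{-1}$, to obtain $\sup_\xi=o_p(1)$ after scaling by $\sqrt n$.
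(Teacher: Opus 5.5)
Your proposal is correct. Its overall architecture matches the paper's: the same three-way split into the $\theta_0$-process, the $\theta_n$-effect (negligible via $\mathbb{E}[U\mid X]=0$ and $\delta<1/4$), and $(\sigma_n^2-\sigma_0^2)G_n(\xi)$ with $G_n\to f_X^{\mathrm{ft}}$ uniformly, followed by a Lipschitz-in-$\xi$ Donsker argument for the limit class.

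The genuine difference is in how you extract $r^{os}_\infty$. The paper proceeds as follows:
\begin{itemize}
\item It Taylor-expands $H(Y_i,W_i+b\tilde x)$ to order $p$.
\item It reads off the coefficients from the moment formula $\int\mathcal{K}_\epsilon(x)x^l\ee^{\ii bx\xi}dx=I\{l\le\alpha\}c_l^{os}(\xi)\,l!\,b^{-l}+O(b)$ of Lemma \ref{lemma.power of decon}.
\item It controls the remainder through $\int|x^l\mathcal{K}_\epsilon(x)|dx=O(b^{-\alpha})$, which gives an $O(b^{2(p-\alpha)})$ second moment.
\item It computes the bias separately, splitting $(Y-g(x))^2-\sigma_0^2$ into a $U^2$ part, a cross term killed by $\mathbb{E}[U\mid X]=0$, and a $(g(X)-g(x))^2$ part handled by smoothing $g^2f_X$, $gf_X$ and $f_X$.
\end{itemize}

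You instead use the exact identity $\mathcal{K}_\epsilon=\sum_h c_h^{os}(\ii/b)^hK^{(h)}$ and integrate by parts. Each summand is then exactly an ordinary $K$-smoother of $\phi_i(x)=\ee^{\ii x\xi}\sum_l c_l^{os}(\xi)m_i^{(l)}(x)$. The stochastic remainder needs only $\alpha+1\le p$ derivatives and is $O(b)$ in $L_2$. The mean follows from the inversion identity $\mathbb{E}[\sum_h c_h^{os}(-\ii\partial)^h\psi(X+\epsilon)\mid X,Y]=\psi(X)$, which reduces everything to an $X$-domain smoothing bias of order $O(b^p)$.

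What each route buys:
\begin{itemize}
\item Your representation is exact rather than asymptotic. It shows directly why $r^{os}_\infty$ has mean exactly zero under $H_0$ and where $c_l^{os}(\xi)$ comes from, and it avoids the external kernel-moment lemmas.
\item The price is that it relies on $1/f^{\mathrm{ft}}_\epsilon$ being exactly a polynomial. The paper's moment-formula template is what it reuses for the supersmooth and estimated-kernel cases.
\end{itemize}

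Your maximal-inequality treatment of the uniform remainders is more complete than the paper's. The paper only bounds the variance of $\sup_\xi|n^{-1/2}\sum_i t_{1,n}(d_i;\xi)|$, which by itself does not show the supremum is $o_p(1)$.

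One looseness is shared with the paper. In Step 2, evaluating the cross-term derivative at $\theta_0$ tacitly requires second-order (or Lipschitz-in-$\theta$) control of $\partial_\theta g$, which Assumption \ref{ass.O} does not state. The paper's mean-value argument, with a random intermediate $\tilde\theta$ inside the sum, has the same issue, so this is not a defect specific to your proof.
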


Theorem \ref{theorem.known ordinary smooth under H0} shows that the proposed empirical process $S_n(\cdot,\theta_n,\sigma^2_n)$ converges at the parametric rate (i.e., $\sqrt{n}$) to a centered Gaussian process. As a consequence, the tests $KS_n$ and $CvM_n$ based on $S_n(\cdot,\theta_n,\sigma^2_n)$ also achieve the parametric rate.\footnote{Given the conditions in Theorem \ref{theorem.known ordinary smooth under H0}, the continuous mapping theorem (see \cite{van1996weak}) implies that the proposed $KS_n$ and $CvM_n$ statistics converge to the sup norm and the squared $L_2$-norm of the limiting Gaussian process $S_\infty^{os}(\cdot,\theta_0,\sigma^2_0)$, respectively. As similar techniques are used to derive the limiting distributions of these statistics under the null and the alternative hypotheses, for both ordinary smooth and supersmooth measurement errors, and in the multiplier bootstrap procedure, we do not repeat the details in the subsequent analysis.} Notably, this parametric rate is not affected by the use of deconvolution kernels, which are nonparametric estimators, highlighting a key advantage of the proposed procedure. In fact, such an advantage has also been observed in various methods that address measurement error problems using nonparametric estimators, including semiparametric estimators for regression functions, nonlinear regression estimation, and significance testing; see, for example, the discussions in \cite{hall2007semiparametric} and \cite{dong2022nonparametric}. It is also worth noting that although we use a deconvolution kernel to estimate $f_{Y,X}(y,x)$, the proposed test statistics evaluate the discrepancy over the entire parameter space for $\xi$. As a result, both the size accuracy and power of the tests exhibit robustness with respect to the bandwidth choice, which will be further discussed in Section \ref{sec.Simulation}.

Next, we consider the scenario where the characteristic function of the measurement error decays at an exponential rate, a setting commonly referred to as the supersmooth case, as described in Assumption \ref{ass.S}({\romannumeral2}). The required assumptions are presented below.
\begin{Assumption}\label{ass.S}

    \quad

    \begin{enumerate}[label=(\roman*)]
	    \item The functions $f_X(x)$ and $h(x;\theta)$ are infinitely differentiable with respect to $x$, where $h(x;\theta)$ is as defined in Assumption \ref{ass.O}({\romannumeral1}).

        \item The measurement error $\epsilon$ follows a Gaussian distribution with the characteristic function of the following form for all $t\in\mathbb{R}$ and some positive constant $\mu$,
        \begin{align*}
            f^{\mathrm{ft}}_\epsilon(t) = \ee^{-\mu t^2}.
        \end{align*}
        
        \item The kernel function $K$ is infinitely differentiable and satisfies the following conditions for all $l\in\mathbb{N}$,
        \begin{align*}
            \int K(u)du = 1, \qquad \int u^lK(u)du = 0.
        \end{align*}
        In addition, $K^{\mathrm{ft}}$ has a compact support set, is symmetric around zero, and is bounded.
        
        \item $b\to0$ as $n\to\infty$.
        \item For $c_l^{ss}(\xi) = (-i)^l\sum\limits_{h\geq\frac{l}{2}}^{\infty}\frac{\mu^h}{h!}\binom{2h}{l}\xi^{2h-l}$,
        \begin{align*}
            \mathbb{E}\left[\sup_{\xi\in\Pi}\left\vert \sum_{l=0}^\alpha c_l^{ss}(\xi)h^{(l)}(W;\theta)\right\vert^2\right]<\infty.
        \end{align*}
    \end{enumerate}
\end{Assumption}
It is worth noting that for the supersmooth case, Assumptions \ref{ass.S}({\romannumeral1}), ({\romannumeral2}), and ({\romannumeral3}) impose stronger conditions than their counterparts for the ordinary smooth case in Assumption \ref{ass.O}. Specifically, we require that the characteristic function of the measurement error be of exponential type, the structural functions be infinitely differentiable, and the kernel functions be of infinite order. While these assumptions are more restrictive, they are satisfied by many practical examples—for instance, using polynomials, exponentials, or circular functions as structural functions (with additional examples provided in \cite{alexander2009deconvolution}); normally distributed measurement error; and the infinite-order kernel constructed in \cite{mcmurry2004nonparametric}, which is also employed in our simulation study in Section \ref{sec.Simulation}. Moreover, such strengthened assumptions are necessary in the supersmooth setting. Without these strengthened assumptions, two main difficulties arise. First, it becomes challenging to ensure the boundedness of the variance of the test statistic, as required in Assumption \ref{ass.S}({\romannumeral5}). Second, one would need to impose additional conditions to guarantee the asymptotic negligibility of the bias, such as the undersmoothing bandwidths and the Lipschitz continuity of the structural functions stated in Assumption \ref{ass.O}, which would significantly complicate the set of required assumptions. In addition, the conditions can be relaxed to permit the measurement error to be distributed as the convolution of a Gaussian density and any ordinary smooth density that satisfies Assumption \ref{ass.O}({\romannumeral2}), thereby broadening the applicability of the supersmooth case.

\begin{theorem}\label{theorem.known supersmooth under H0}
    Suppose Assumptions \ref{ass.D} and \ref{ass.S} hold. Under $H_0$ in \eqref{hyp.H0},
    \begin{align}\label{theorem.known supersmooth under H0 eq1}
        \sqrt n S_{n}(\cdot,\theta_n,\sigma^2_n)\Longrightarrow S_{\infty}^{ss}(\cdot,\theta_0,\sigma^2_0),
    \end{align}
    where $S_{\infty}^{ss}(\cdot,\theta_0,\sigma^2_0)$ is a Gaussian process with mean zero and covariance structure
    \begin{align*}
        &Cov\left[S_{\infty}^{ss}(\xi_1,\theta_0,\sigma^2_0),S_{\infty}^{ss}(\xi_2,\theta_0,\sigma^2_0)\right] = \mathbb{E}\left[r^{ss}_{\infty}(Y,W;\xi_1,\theta_0,\sigma^2_0)r^{ss}_{\infty}(Y,W;\xi_2,\theta_0,\sigma^2_0)\right],
    \end{align*}
    with
    \begin{align}\label{theorem.supersmooth under H_0 ros}
        r^{ss}_{\infty}(Y,W;\xi,\theta_0,\sigma^2_0) =& \ee^{\ii W\xi}\sum\limits_{l=0}^{\infty} c_l^{ss}(\xi) \left[\left(Y-g(W;\theta_0)\right)^2-\sigma_0^2\right]^{(l)} \notag\\
        & - f_X^{\mathrm{ft}}(\xi)\sum\limits_{l=0}^{\infty} c_l^{ss}(0) \left[\left(Y-g(W;\theta_0)\right)^2-\sigma_0^2\right]^{(l)}.
    \end{align} 
\end{theorem}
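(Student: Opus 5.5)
The plan mirrors the ordinary smooth argument behind Theorem \ref{theorem.known ordinary smooth under H0}, with the finite polynomial $1/f^{\mathrm{ft}}_\epsilon$ replaced by the entire function $1/f^{\mathrm{ft}}_\epsilon(t)=\ee^{\mu t^2}=\sum_{h\ge0}\mu^h t^{2h}/h!$.

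\textbf{Step 1: deconvolution identity.} First compute $\int m(x)\mathcal{K}_b((x-W_i)/b)\ee^{\ii x\xi}\,dx$ for a smooth function $m$. By Parseval and the definition \eqref{stat.ft}, this integral equals $\frac{1}{2\pi}\int \overline{K^{\mathrm{ft}}(bs)}\,\ee^{\mu s^2}\,(m\,\ee^{\ii\cdot\xi})^{\mathrm{ft}}(s)\,\ee^{-\ii sW_i}\,ds$, up to the paper's sign conventions. Expand $\ee^{\mu s^2}=\sum_h \mu^h s^{2h}/h!$ and write $s^{2h}=((s+\xi)-\xi)^{2h}$ by the binomial theorem. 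Each power of $s$ acts as a derivative, so termwise the integral becomes the series $\ee^{\ii W_i\xi}\sum_{l\ge0} c_l^{ss}(\xi)\,m^{(l)}(W_i)$. The kernel $K$ is infinite order and $K^{\mathrm{ft}}$ equals $1$ near zero, so the factor $K^{\mathrm{ft}}(bs)$ only introduces an error of order $o(1)$ uniformly in $\xi\in\Pi$. Interchanging the sum and the integral is justified by Assumption \ref{ass.S}(i) and (v); (v) makes the series square integrable uniformly in $\xi$.

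Apply this identity with $m(x)=(Y_i-g(x;\theta))^2$. Then $\sqrt n S_n(\xi,\theta,\sigma^2)$ becomes $n^{-1/2}\sum_i \ee^{\ii W_i\xi}\sum_l c_l^{ss}(\xi)[(Y_i-g(W_i;\theta))^2-\sigma^2]^{(l)}$ plus a remainder, where the derivative acts on the $x$-argument evaluated at $W_i$. Unbiasedness follows from $\mathbb{E}[m^{(l)}$-series$\mid X]$ reproducing $m(X)\ee^{\ii X\xi}$. This is the deconvolution identity $\mathbb{E}[\int m(x)\mathcal{K}_b\ee^{\ii x\xi}dx\mid X,Y]\to m(X)\ee^{\ii X\xi}$, which holds because $\epsilon$ is independent of $X$ (Assumption \ref{ass.D}(iii)). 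Hence the centred summands at $(\theta_0,\sigma^2_0)$ have mean $S(\xi,\theta_0,\sigma_0^2)=0$ under $H_0$.

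\textbf{Step 2: estimation effects.} Take a Taylor expansion in $\theta$ around $\theta_0$. The derivative term is
\[
\frac{\partial}{\partial\theta}S_n=-2\,\mathbb{E}[(Y-g(X;\theta_0))\,\partial_\theta g\,\ee^{\ii X\xi}]+o_p(1),
\]
and this vanishes because $\mathbb{E}[U\mid X]=0$. The second-order term is $O_p(\|\theta_n-\theta_0\|^2)$, and Assumption \ref{ass.D}(ii) with $\delta<1/4$ gives $\sqrt n\,O_p(n^{2\delta-1})=o_p(1)$. Uniformity of these bounds over a neighbourhood of $\theta_0$ uses the $h$-functions in Assumption \ref{ass.S}(v). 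The $\sigma_n^2$ term enters through $-(\sigma_n^2-\sigma_0^2)\,\frac1n\sum_i\int\mathcal{K}_b\ee^{\ii x\xi}dx$. The sample average converges to $f^{\mathrm{ft}}_X(\xi)$, and $\sqrt n(\sigma_n^2-\sigma_0^2)$ is itself a $\xi=0$ version of Step 1. This produces the second line of \eqref{theorem.supersmooth under H_0 ros}.

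\textbf{Step 3: weak convergence.} Conclude that $\sqrt nS_n(\cdot,\theta_n,\sigma_n^2)=n^{-1/2}\sum_i r^{ss}_\infty(Y_i,W_i;\cdot,\theta_0,\sigma_0^2)+o_p(1)$ uniformly on $\Pi$. Finite-dimensional convergence then follows from the Lindeberg CLT. For tightness, use that $\xi\mapsto r^{ss}_\infty$ is Lipschitz on the compact set $\Pi$ with a square-integrable envelope (Assumption \ref{ass.S}(v) together with $\mathbb{E}|Y|^4<\infty$). The class is therefore Donsker by a bracketing argument, as in \cite{van1996weak}.

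\textbf{Main obstacle.} The hard step is the uniform control of the $b$-dependent remainder in Step 1. Here no undersmoothing rate is available, since Assumption \ref{ass.S}(iv) only requires $b\to0$. The plan is to show that the remainder's second moment is bounded by a tail sum of the series in (v) times $\sup_{|s|\le c/b}|1-K^{\mathrm{ft}}(bs)|$, which vanishes. The exact cancellation from the infinite-order kernel then kills the bias, so no rate on $b$ is needed.
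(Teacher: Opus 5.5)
Your decomposition is the paper's: $S_n(\xi,\theta_n,\sigma_n^2)=S_n(\xi,\theta_0,\sigma_0^2)+S_{n1}(\xi,\theta_n)-(\sigma_n^2-\sigma_0^2)G_n(\xi)$, with $\sigma_n^2-\sigma_0^2=S_n(0,\theta_0,\sigma_0^2)+S_{n1}(0,\theta_n)$ and $G_n\to f_X^{\mathrm{ft}}$ producing the second line of $r^{ss}_\infty$, followed by a Lipschitz-in-$\xi$ Donsker argument.

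The gap is in the step you call the main obstacle. Your bound does not vanish. Substituting $t=bs$ gives $\sup_{|s|\le c/b}|1-K^{\mathrm{ft}}(bs)|=\sup_{|t|\le c}|1-K^{\mathrm{ft}}(t)|$, which does not depend on $n$. It equals $0$ only if $c$ lies in the flat region of $K^{\mathrm{ft}}$, and then frequencies $c/b<|s|\le c_0/b$ are left uncovered. It is at least $1$ once $c\ge c_0$. Nor does Assumption~\ref{ass.S}(v) help: it bounds the full series $\sum_l c_l^{ss}(\xi)h^{(l)}(W)$ in $L^2$, and gives no way to dominate a remainder at frequencies of order $1/b$ by a tail in $l$.

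The factor $K^{\mathrm{ft}}$ is harmless because of localisation, not smallness, and the paper uses this so that no remainder arises at all. It Taylor-expands $H(Y_i,W_i+bx)$ in $x$ to all orders and applies the exact identity \eqref{Power of decon without ME Assm S},
\[
\int\mathcal{K}_\epsilon(x)x^l\ee^{\ii bx\xi}\,dx=c_l^{ss}(\xi)\,l!\,b^{-l}.
\]
The factors $b^l/l!$ then cancel, so $\int H(Y_i,x)\mathcal{K}_b((x-W_i)/b)\ee^{\ii x\xi}\,dx=r_{2,\infty}(d_i;\xi)$ exactly. The identity is exact because this moment is an $l$-th derivative of $K^{\mathrm{ft}}(s)\ee^{\mu s^2/b^2}$ evaluated at the single point $s=b\xi$; once $b\Pi$ lies where $K^{\mathrm{ft}}\equiv1$, the kernel drops out.

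The bias is likewise exactly zero. The $(U^2-\sigma_0^2)$ part vanishes by $H_0$, the cross part by $\mathbb{E}[U\mid X]=0$, and the quadratic part by the infinite order of $K$, see \eqref{lemmaproof.main term known biastermdecomp qua super}. This exactness, not a vanishing perturbation bound, is why $b\to0$ alone suffices. Replace your frequency-domain perturbation argument with this moment computation.

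There is also a rate slip in Step 2. You write the first-order $\theta$-term as $(\theta_n-\theta_0)'\{0+o_p(1)\}$. Assumption~\ref{ass.D}(ii) allows $\delta>0$, so $\sqrt n(\theta_n-\theta_0)=O_p(n^{\delta})$ may diverge, and $O_p(n^{\delta})\,o_p(1)$ need not be $o_p(1)$. You need the bracket to be $o_p(n^{-\delta})$ uniformly in $\xi$, and what is actually available is $O_p(n^{-1/2})$. There are two pieces:
\begin{itemize}
\item The $U$-part $\frac1n\sum_iU_i\int\partial_\theta g(x;\tilde\theta)\mathcal{K}_b(\frac{x-W_i}{b})\ee^{\ii x\xi}\,dx$ has mean zero and fluctuates at rate $n^{-1/2}$.
\item The $(g(X)-g(x))\partial_\theta g$ part has fluctuation $O_p(n^{-1/2})$ and bias $o(n^{-1/2})$.
\end{itemize}
These are the terms $R_{n,21}$ and $R_{n,22}$ in Lemma~\ref{lemma.negligible term known}. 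Together they give $\sqrt n\,O_p(n^{\delta-1/2})\,O_p(n^{-1/2})=O_p(n^{\delta-1/2})=o_p(1)$.
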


Theorem \ref{theorem.known supersmooth under H0} shows that our empirical process $S_n(\cdot,\theta_n,\sigma^2_n)$ also achieves the parametric rate in the supersmooth case, and that the limiting Gaussian process exhibits a variance structure similar to that of the ordinary smooth case. However, existing literature suggests that, in other measurement error-related problems, the convergence rate and the variance structure of the limiting null process may differ substantially between the ordinary smooth and the supersmooth cases. In particular, \cite{otsu2021specification} demonstrates that, in specification testing based on nonparametric estimators, the supersmooth case often yields a slower convergence rate and a more complex asymptotic structure. A detailed comparison between the ordinary smooth and the supersmooth cases is beyond the scope of this paper. Instead, in Section \ref{sec.Unknown}, we focus on constructing alternative test statistics when the distribution of the measurement error is unknown.

We next introduce a sequence of local alternatives and evaluate the asymptotic behavior of the proposed tests under these alternatives, demonstrating that the tests possess nontrivial local power to detect nonconstant variance. Specifically, we assume 
\begin{align}\label{hyp.H1n}
    H_{1n}:\mathbb{E}\left[U^2\mid X\right] = \sigma_0^2+\frac{\Delta(X)}{\sqrt{n}}\quad \text{a.s. for some }0<\sigma_0^2<\infty,
\end{align}
where $\Delta:\mathbb{R}\to\mathbb{R}$ is a nonzero function satisfying $\mathbb E[\Delta(X)]=0$ and $\mathbb{E}|\Delta(X)|<\infty$, so that the sequence of local alternatives converges to the null hypothesis at the parametric rate. We then investigate the asymptotic convergence results of the test statistic separately under the ordinary smooth and the supersmooth cases.
\begin{theorem}\label{theorem.known under H1n}
   Suppose Assumption \ref{ass.D} holds. Under $H_{1n}$ in \eqref{hyp.H1n}, if Assumption \ref{ass.O} holds for the ordinary smooth case,
    \begin{equation*}
        \sqrt n S_{n}(\cdot,\theta_n,\sigma^2_n)\Longrightarrow S_{\infty}^{os}(\cdot,\theta_0,\sigma^2_0)+\mu_\Delta(\cdot),
    \end{equation*}
    and if Assumption \ref{ass.S} holds for the supersmooth case,
    \begin{equation*}
        \sqrt n S_{n}(\cdot,\theta_n,\sigma_n)\Longrightarrow S_{\infty}^{ss}(\cdot,\theta_0,\sigma_0)+\mu_\Delta(\cdot),
    \end{equation*}
    where $S_{\infty}^{os}(\cdot,\theta_0,\sigma^2_0)$ and $S_{\infty}^{ss}(\cdot,\theta_0,\sigma^2_0)$ are the centered Gaussian processes as defined in Theorems \ref{theorem.known ordinary smooth under H0} and \ref{theorem.known supersmooth under H0}, respectively, and $\mu_\Delta(\cdot)$ is a deterministic shift function given by
    %\begin{align*}
    %    \mu_\Delta(\xi) = \mathbb{E}\left[\Delta(X)\ee^{\ii X\xi}\right]-\mathbb{E}\left[\Delta(X)\right]f_{X}^{\mathrm{ft}}(\xi).
    %\end{align*}
    \begin{align*}
        \mu_\Delta(\xi) = \mathbb{E}\left[\Delta(X)\ee^{\ii X\xi}\right].
    \end{align*}
\end{theorem}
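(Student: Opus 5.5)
The plan is to reuse the decomposition behind Theorems \ref{theorem.known ordinary smooth under H0} and \ref{theorem.known supersmooth under H0} and track the one place where $H_{1n}$ enters. Write $\varepsilon_i^2$ for $(Y_i-g(X_i;\theta_0))^2$. Under $H_{1n}$ we split $\varepsilon_i^2-\sigma_0^2 = \eta_i + \Delta(X_i)/\sqrt n$, where $\eta_i := \varepsilon_i^2 - \mathbb{E}[U^2\mid X_i]$ satisfies $\mathbb{E}[\eta_i\mid X_i]=0$. This is exactly the centering property the null proofs use.

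We then expand $\sqrt n S_n(\xi,\theta_n,\sigma_n^2)$ as in the null case:
\begin{align*}
\sqrt n S_n(\xi,\theta_n,\sigma_n^2)=\sqrt n S_n(\xi,\theta_0,\sigma_0^2) - \sqrt n(\sigma_n^2-\sigma_0^2)\,\hat\phi_n(\xi) + R_n(\xi),
\end{align*}
with $\hat\phi_n(\xi)=n^{-1}\sum_i\int \mathcal{K}_b((x-W_i)/b)\ee^{\ii x\xi}dx$. Here $R_n$ collects the $\theta_n-\theta_0$ terms; they are linear and quadratic in $g(x;\theta_n)-g(x;\theta_0)$.

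\emph{Step 1: the $\theta$-estimation terms.} The linear term is
\[
-2(\theta_n-\theta_0)^\top n^{-1}\sum_i \int (Y_i-g(x;\theta_0))\,\partial_\theta g(x;\theta_0)\,\mathcal{K}_b(\cdot)\ee^{\ii x\xi}dx,
\]
and its mean involves only $\mathbb{E}[U\mid X]=0$. That condition is unchanged under $H_{1n}$, so the argument from the null proofs, which gives $o_p(1)$ uniformly in $\xi$, carries over verbatim. The same holds for the quadratic term, using $\delta<1/4$.

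\emph{Step 2: the leading term.} Split $\sqrt nS_n(\xi,\theta_0,\sigma_0^2)$ into an $\eta$-part and a $\Delta$-part. The $\eta$-part is handled exactly as under $H_0$. Note that its covariance involves $\mathbb{E}[\eta^2\cdots]$, whose conditional variance differs from the null one by $O(n^{-1/2})$ under the fourth-moment bound, so its limit covariance is unchanged.

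The $\Delta$-part equals $n^{-1}\sum_i\int \Delta(x)\mathcal K_b((x-W_i)/b)\ee^{\ii x\xi}dx$ in effect. More precisely, conditional on $X_i$ the deconvolution kernel satisfies $\mathbb{E}[\int h(x)\mathcal K_b((x-W_i)/b)\,dx\mid X_i]\approx h(X_i)$, so the expectation of this part is $\mathbb{E}[\Delta(X)\ee^{\ii X\xi}]+o(1)$. The bias is controlled by the same higher-order kernel arguments as in Assumption \ref{ass.O}(iv) or \ref{ass.S}(iii). Its fluctuation is $O_p(n^{-1/2})$ times a finite variance, hence negligible.

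\emph{Step 3: the variance correction.} Similarly, $\sqrt n(\sigma_n^2-\sigma_0^2)$ equals its null linear representation plus $\mathbb{E}[\Delta(X)]+o_p(1)=o_p(1)$. This is where $\mathbb{E}\Delta(X)=0$ is used, and it explains why $\mu_\Delta$ has no $f_X^{\mathrm{ft}}$ correction. Moreover $\hat\phi_n(\xi)\to f_X^{\mathrm{ft}}(\xi)$ uniformly, which reproduces the second term of $r_\infty$.

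\emph{Step 4: conclusion and the main obstacle.} Tightness and finite-dimensional convergence of the $\eta$-part are inherited from the null theorems. By Slutsky's lemma in $l^\infty(\Pi)$, the limit is $S_\infty^{os}$ or $S_\infty^{ss}$ plus $\mu_\Delta$.

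The main obstacle is the uniform-in-$\xi$ bias statement for the $\Delta$-part. The difficulty is that $\Delta$ is only assumed integrable, not smooth, so the Taylor-expansion bias bounds of Assumption \ref{ass.O}(i) do not apply directly. I would instead use the identity
\[
\mathbb{E}\Big[\int h(x)\,\mathcal K_b\big((x-W)/b\big)\,dx \,\Big|\, X\Big] = \int h(X+bu)K(u)\,du,
\]
which is exact for the deconvolution kernel. Taking $h(x)=\Delta(x)\ee^{\ii x\xi}$, the conditional mean is $\int \Delta(X+bu)\ee^{\ii(X+bu)\xi}K(u)du$. Its expectation tends to $\mathbb{E}[\Delta(X)\ee^{\ii X\xi}]$ by $L^1$-continuity of translation applied to $\Delta f_X$. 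Uniformity in $\xi$ over compact $\Pi$ follows because $|\ee^{\ii bu\xi}-1|\le b|u|\sup_\Pi|\xi|$.
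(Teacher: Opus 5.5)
Your skeleton is the paper's proof. You use the same three-term decomposition, and the $\theta_n$-terms are negligible exactly as under $H_0$. The main term is the null fluctuation plus its mean $\mathbb{E}[(U^2-\sigma_0^2)\ee^{\ii X\xi}]K^{\mathrm{ft}}(b\xi)=n^{-1/2}\mu_\Delta(\xi)K^{\mathrm{ft}}(b\xi)$. The $\sigma_n^2$-correction adds no shift, because at $\xi=0$ the shift is $\mathbb{E}[\Delta(X)]=0$. What does not work as written is your Step 2 treatment of the $\Delta$-part. The ``main obstacle'' in Step 4 comes only from that misidentification.

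\textbf{The $\Delta$-part you actually get.} With $\varepsilon_i^2=(Y_i-g(X_i;\theta_0))^2$, your own split gives $(Y_i-g(x;\theta_0))^2-\sigma_0^2=\eta_i+\Delta(X_i)/\sqrt n+2U_i[g(X_i;\theta_0)-g(x;\theta_0)]+[g(X_i;\theta_0)-g(x;\theta_0)]^2$. So the $\Delta$-part of $\sqrt n S_n(\xi,\theta_0,\sigma_0^2)$ is $n^{-1}\sum_i\Delta(X_i)\int\mathcal K_b((x-W_i)/b)\ee^{\ii x\xi}dx=n^{-1}\sum_i\Delta(X_i)\ee^{\ii W_i\xi}K^{\mathrm{ft}}(b\xi)/f^{\mathrm{ft}}_\epsilon(\xi)$. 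Here $\Delta$ is evaluated at $X_i$ and sits outside the $x$-integral.
\begin{itemize}
\item Since $\mathbb{E}[\mathcal K_b((x-W)/b)\mid X]=b^{-1}K((x-X)/b)$, the conditional mean is exactly $\Delta(X_i)\ee^{\ii X_i\xi}K^{\mathrm{ft}}(b\xi)$. Hence the mean is $\mu_\Delta(\xi)K^{\mathrm{ft}}(b\xi)\to\mu_\Delta(\xi)$ uniformly on $\Pi$.
\item The summands are dominated by $C|\Delta(X_i)|$, uniformly in $b$ and $\xi\in\Pi$, so a uniform law of large numbers removes the fluctuation.
\item Only $\mathbb{E}|\Delta(X)|<\infty$ is used, and no smoothness of $\Delta$ is needed.
\end{itemize}
This is exactly how the paper obtains the shift, through the bias identity in Lemma \ref{lemma.main term known}.

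\textbf{Why your version of the term fails.} The term you analyze instead is $\int\Delta(x)\mathcal K_b((x-W_i)/b)\ee^{\ii x\xi}dx$, and it is genuinely problematic in two ways.
\begin{itemize}
\item In the ordinary smooth case, $\mathcal K_\epsilon$ is a combination of $b^{-h}K^{(h)}$, $0\le h\le\alpha$. Without differentiability of $\Delta$ you cannot integrate these by parts, so the variance of this term need not stay bounded as $b\to0$. Assumption \ref{ass.O}(iv) gives no lower bound on $b$, so your claim ``$O_p(n^{-1/2})$ times a finite variance'' is unjustified for it.
\item Your translation argument targets the wrong quantity. You need $\int|\Delta(z)|\,|f_X(z-bu)-f_X(z)|\,dz\to0$, not $L^1$-continuity of translation of $\Delta f_X$. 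Integrability of $\Delta(X)$ alone does not deliver the former.
\end{itemize}
Drop the ``in effect'' reformulation, keep $\Delta(X_i)$ outside the integral, and the proof goes through along the paper's lines.
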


Theorem \ref{theorem.known under H1n} implies that under the sequence of local alternatives $H_{1n}$ in \eqref{hyp.H1n}, the empirical process $S_{n}(\cdot,\theta_n,\sigma^2_n)$ still converges at the parametric rate to its limiting distribution. Notably, for both the ordinary smooth and the supersmooth cases, the limiting distribution under $H_{1n}$ differs from that under $H_0$ by a deterministic shift function $\mu_\Delta(\xi)$. This shift corresponds to the covariance between $\Delta(X)$ and $\ee^{\ii X\xi}$, ensuring that the limiting distributions under $H_{1n}$ and $H_0$ are different and thus yielding nontrivial asymptotic local power for the proposed $S_{n}(\cdot,\theta_n,\sigma^2_n)$.

At the end of this section, we establish the global consistency of $S_{n}(\cdot,\theta_n,\sigma^2_n)$ by showing that it possesses nontrivial power under the alternative hypothesis $H_1$ in \eqref{hyp.H1}, as formalized in the following theorem. Let $\sigma_\ast^2$ denote the probability limit of $\sigma_n^2$ under $H_1$, i.e., the pseudo-true value and the unconditional variance of $U$. Note that %$\sigma^2_\ast=\mathbb{E}[U^2]$ is the unconditional variance of $U$ under $H_1$ and 
$\sigma_\ast^2=\sigma_0^2$ under $H_0$.
\begin{theorem}\label{theorem.known under H1}
    Suppose Assumption \ref{ass.D} holds. Under $H_{1}$ in \eqref{hyp.H1}, if either Assumption \ref{ass.O} holds for the ordinary smooth case or Assumption \ref{ass.S} holds for the supersmooth case,     
    %\begin{align*}
    %    &\sup_{\xi\in\Pi}\vert S_{n}(\xi,\theta_n,\sigma^2_n) - C(\xi,\theta_0,\sigma^2_\ast)\vert =  o_p(1),
    %\end{align*}
    \begin{align*}
        \sup_{\xi\in\Pi}\vert S_{n}(\xi,\theta_n,\sigma^2_n) - S(\xi,\theta_0,\sigma^2_\ast)\vert =  o_p(1),
    \end{align*}
    where $S(\xi,\theta_0,\sigma^2_\ast) = \mathbb{E}\{[(Y-g(X;\theta_0)^2-\sigma_\ast^2]\ee^{\ii X\xi}\}\neq 0$ for some $\xi\in\Pi$ with a positive measure.
    %\begin{align}\label{theorem.known under H1 eq1}
     %   &C(\xi,\theta_0,\sigma^2_\ast) = \mathbb{E}\left[\left(U^2-\sigma_\ast^2\right)\ee^{\ii X\xi}\right]-\mathbb{E}\left(U^2-\sigma_\ast^2\right)f_{X}^{\mathrm{ft}}(\xi).
    %\end{align}
    %\begin{align}\label{theorem.known under H1 eq1}
    %    C(\xi,\theta_0,\sigma^2_\ast) = \mathbb{E}\left[\left(U^2-\sigma_\ast^2\right)\ee^{\ii X\xi}\right].
    %\end{align}
\end{theorem}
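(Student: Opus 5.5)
The proof has two parts: a uniform law of large numbers for $S_n(\cdot,\theta_n,\sigma_n^2)$ around $S(\cdot,\theta_0,\sigma_\ast^2)$, and a separate argument that the limit is nonzero on a set of positive measure.

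\textbf{Step 1: reduce to an infeasible process.} By linearity in $\sigma^2$,
\begin{align*}
S_n(\xi,\theta_n,\sigma_n^2)=A_n(\xi,\theta_n)-\sigma_n^2 B_n(\xi),
\end{align*}
where
\begin{align*}
A_n(\xi,\theta)=\frac{1}{n}\sum_{i=1}^n\int (Y_i-g(x;\theta))^2\mathcal{K}_b\left(\frac{x-W_i}{b}\right)\ee^{\ii x\xi}\,dx
\quad\text{and}\quad
B_n(\xi)=\frac{1}{n}\sum_{i=1}^n\int\mathcal{K}_b\left(\frac{x-W_i}{b}\right)\ee^{\ii x\xi}\,dx .
\end{align*}
Also $\sigma_n^2=A_n(0,\theta_n)$. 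It therefore suffices to prove two uniform limits over $\xi\in\Pi$: $\sup_\xi|A_n(\xi,\theta_n)-\mathbb{E}[U^2\ee^{\ii X\xi}]|=o_p(1)$ and $\sup_\xi|B_n(\xi)-f_X^{\mathrm{ft}}(\xi)|=o_p(1)$. The $\xi=0$ case of the first gives $\sigma_n^2\to_p\sigma_\ast^2=\mathbb{E}U^2$. Since $f_X^{\mathrm{ft}}(\xi)=\mathbb{E}\ee^{\ii X\xi}$, these combine to the claimed limit $S(\xi,\theta_0,\sigma^2_\ast)$.

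\textbf{Step 2: remove the estimated parameter.} Expand $(Y_i-g(x;\theta_n))^2$ around $\theta_0$ by a mean value argument in $\theta$. The difference $A_n(\xi,\theta_n)-A_n(\xi,\theta_0)$ is then $(\theta_n-\theta_0)$ times sample averages of deconvolution integrals of $g\,\partial g/\partial\theta$ and $Y\partial g/\partial\theta$, plus a quadratic term in $(\theta_n-\theta_0)$ involving $[\partial g/\partial\theta][\partial g/\partial\theta']$. These $h$-functions are exactly those listed in Assumption~\ref{ass.O}(i) and Assumption~\ref{ass.S}(i). I use the same calculation as in the proofs of Theorems~\ref{theorem.known ordinary smooth under H0} and \ref{theorem.known supersmooth under H0}, which the earlier theorems already rely on and which I take as given here. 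In the ordinary smooth case the deconvolution integral $\int h(x)\mathcal{K}_b((x-W)/b)\ee^{\ii x\xi}dx$ equals $\ee^{\ii W\xi}\sum_l c_l^{os}(\xi)h^{(l)}(W)$ plus an $O(b^p)$ remainder; in the supersmooth case it has the analogous series form with $c_l^{ss}$. The moment bounds in Assumptions~\ref{ass.O}(v) and \ref{ass.S}(v) make these averages $O_p(1)$ uniformly in $\xi$. Together with $\theta_n-\theta_0=O_p(n^{\delta-1/2})=o_p(1)$ from Assumption~\ref{ass.D}(ii), the whole difference is $o_p(1)$ uniformly. Note that Step 2 holds under $H_1$ because no variance restriction is used.

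\textbf{Step 3: ULLN and identification.} With $\theta_0$ fixed, write $A_n(\xi,\theta_0)=n^{-1}\sum_i m_n(Y_i,W_i;\xi)$.
\begin{enumerate}[label=(\alph*)]
\item \emph{Mean.} Compute $\mathbb{E}m_n(\cdot;\xi)$ by Fubini and Fourier inversion. Independence of $\epsilon$ and $X$ (Assumption~\ref{ass.D}(iii)) cancels $f_\epsilon^{\mathrm{ft}}$, giving $\iint (y-g(x;\theta_0))^2 f_{Y,X}$ smoothed by $K_b$ against $\ee^{\ii x\xi}$. This converges to $\mathbb{E}[U^2\ee^{\ii X\xi}]$ uniformly on compact $\Pi$ by the kernel moment conditions. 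No undersmoothing is needed, since only $b\to0$ is required for consistency.
\item \emph{Fluctuation.} Under the ordinary smooth (resp.\ supersmooth) form of Step 2, $m_n$ is approximated by a fixed function of $(Y,W,\xi)$ with square-integrable envelope, by Assumption~\ref{ass.O}(v) or \ref{ass.S}(v), together with $\mathbb{E}Y^4<\infty$.
\item \emph{Glivenko--Cantelli.} The class is Lipschitz in $\xi$ on compact $\Pi$, since the derivative in $\xi$ produces $W$ times similar terms and $c_l(\xi)$ are polynomials or entire functions. So a bracketing or Lipschitz Glivenko--Cantelli argument in the sense of \cite{van1996weak} gives the uniform LLN. $B_n$ is handled identically with $h\equiv1$.
\item \emph{Nonzero limit.} Finally, $S(\cdot,\theta_0,\sigma_\ast^2)$ is the Fourier transform of the finite signed measure $(\mathbb{E}[U^2|X=x]-\sigma_\ast^2)f_X(x)dx$. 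Under $H_1$ this measure is nonzero, so its transform is not identically zero. The transform is continuous and in fact analytic when the measure has suitable moments, so it is nonzero on an open subset of $\Pi$, which has positive measure. If analyticity is unavailable, I would instead take $\Pi$ rich enough, as in \cite{bierens1990consistent}, to invoke the uniqueness of the Fourier transform on a set with nonempty interior.
\end{enumerate}

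\textbf{Main obstacle.} The main difficulty is the uniformity over $\xi$ of the deconvolution approximation when $b\to0$ without an explicit rate. In the supersmooth case $1/f_\epsilon^{\mathrm{ft}}(t/b)$ explodes, so the variance of $m_n$ is controlled only through the exact series representation in $c_l^{ss}(\xi)$ and Assumption~\ref{ass.S}(v). I would first try to prove that the remainder of that series is $o(1)$ uniformly on $\Pi$ using the infinite-order kernel.
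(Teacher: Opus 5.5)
The one place where your argument does not go through as written is Step~3(d). Steps~1--3(c) are the paper's own argument with different bookkeeping.

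The paper writes $S_n(\xi,\theta_n,\sigma_n^2)=S_n(\xi,\theta_0,\sigma_0^2)+S_{n1}(\xi,\theta_n)-(\sigma_n^2-\sigma_0^2)G_n(\xi)$ with $\sigma_n^2-\sigma_0^2=S_n(0,\theta_0,\sigma_0^2)+S_{n1}(0,\theta_n)$. It then invokes Lemmas \ref{lemma.main term known}, \ref{lemma.negligible term known} and \ref{lemma.Gn known}. Your split $A_n(\xi,\theta_n)-A_n(0,\theta_n)B_n(\xi)$, with $B_n=G_n$, is the same algebra. It is slightly cleaner: it never refers to a constant $\sigma_0^2$ that does not exist under $H_1$. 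It also needs only $b\to0$ and $\theta_n-\theta_0=o_p(1)$, rather than the $n^{-1/2}$-rate lemmas that rely on undersmoothing. One small point in Step~3(a): cancelling $f^{\mathrm{ft}}_\epsilon$ inside $\mathbb{E}[(Y-g(x;\theta_0))^2\mathcal{K}_b((x-W)/b)]$ needs $\epsilon$ independent of $(X,Y)$, not just of $X$ as Assumption \ref{ass.D}(iii) literally says. The paper makes the same implicit use.

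\textbf{Step 3(d).} The paper's proof does not address the nonvanishing claim at all; it is simply inherited from the characterization \eqref{stat.Population}, which is taken as given.

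\emph{Analyticity route.} This is the right route, but it needs a condition not contained in Assumptions \ref{ass.D}, \ref{ass.O} or \ref{ass.S}, e.g.\ $\mathbb{E}[(1+U^2)\ee^{c|X|}]<\infty$ for some $c>0$. Under such a condition, $\xi\mapsto S(\xi,\theta_0,\sigma_\ast^2)$ extends analytically to a strip. It is not identically zero by Fourier uniqueness on all of $\mathbb{R}$, so it has only finitely many zeros in the compact $\Pi$.

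\emph{Your fallback fails.} A finite signed measure is not determined by its Fourier transform on a set with nonempty interior. For instance, let $\phi$ be real, even and nonzero, with smooth Fourier transform supported in $\{T\le|t|\le T+1\}$, where $T>\sup_{\xi\in\Pi}|\xi|$. Let $f_X$ be positive with polynomial tails. Since $\phi$ is a Schwartz function, $\phi/f_X$ is bounded. Then $\mathbb{E}[U^2\mid X=x]=\sigma_\ast^2+\kappa\phi(x)/f_X(x)$ with small $\kappa>0$ is a valid heteroskedastic alternative with $\mathbb{E}U^2=\sigma_\ast^2$. Yet $S(\xi,\theta_0,\sigma_\ast^2)=\kappa\phi^{\mathrm{ft}}(\xi)=0$ for every $\xi\in\Pi$. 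The consistency in \cite{bierens1990consistent} comes from analyticity produced by a bounded one-to-one transformation of the regressor, not from the richness of $\Pi$.

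\emph{Two ways to close the gap.} Either add an exponential-moment (or bounded-support) condition, or state explicitly that \eqref{stat.Population} is a maintained assumption. In the latter case, continuity of $S$ in $\xi$ (which follows from $\mathbb{E}U^2<\infty$) upgrades a single nonzero point to a set of positive measure, provided $\Pi$ is the closure of its interior.
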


%Note that the deterministic function $C(\xi,\theta_0,\sigma^2_0)$ in \eqref{theorem.known under H1 eq1} is the covariance between $U^2=(Y-g(X;\theta_0))^2$ and $\ee^{\ii X\xi}$. 
Under $H_1$ %in which the error in the regression model is not independent of the regressors, such as the setting discussed in \cite{wang2007assessing}, 
where the conditional variance $\mathbb E[U^2|X]$ is a nonconstant function of the unobserved regressor $X$, Theorem \ref{theorem.known under H1} implies that $S_{n}(\xi,\theta_n,\sigma^2_n)$ converges in probability uniformly to the nonzero constant function $S(\xi,\theta_0,\sigma^2_\ast)$. Consequently, the test statistics $KS_n$ and $CvM_n$ based on $\sqrt nS_{n}(\xi,\theta_n,\sigma^2_n)$ diverge to positive infinity in probability, thereby guaranteeing the consistency of $KS_n$ and $CvM_n$ against $H_1$.

\section{Case of Unknown Measurement Error}\label{sec.Unknown}
In this section, we focus on constructing tests for heteroskedasticity when information about the measurement error distribution is unavailable. The necessity arises because a misspecified measurement error model can lead to biased estimators, as noted in \cite{hall2007semiparametric}. The assumption of replicated measurements has been widely adopted in numerous theoretical studies, see, e.g., \cite{carroll1992diagnostics} and \cite{delaigle2008deconvolution}. In practice, repeated measurements can be obtained in certain settings, for example, through high-frequency observations over a short time interval or by rapidly generating data from simulated systems. As noted in \cite{allen1999checking}, climate scientists, when applying optimal fingerprinting methods to detect and attribute anthropogenic influences on climate, can readily obtain repeated measurements from a climate model's control run. As commonly stated in the literature \citep{carroll1992diagnostics}, repeated measurements on unobservable regressors are usually required in multiple replicates to obtain accurate information. One of the notable advantages of our method is that a single set of repeated measurements suffices to guarantee favorable asymptotic behavior of the test statistics, as detailed in the setting below:
\begin{align}\label{var.repeated ME}
    &W^r = X + \epsilon^r,
\end{align}
where $W$ and $W^r$ denote a pair of repeated measurements on the latent variable $X$ and $(\epsilon,\epsilon^r)$ are assumed to be i.i.d. measurement errors. Additionally, we make the following assumptions,
\begin{Assumption}\label{ass.D'}

    \quad

    \begin{enumerate}[label=(\roman*)]
	\item $\{W_i^r\}_{i=1}^n$ is an i.i.d sample of $W^r$ satisfying \eqref{var.repeated ME}.

	\item $\mathbb{E}\vert\epsilon\vert^{(p+1)(2+\zeta)}<\infty$ for some positive constant $\zeta$ and $f_\epsilon$ is symmetric around zero.
    \end{enumerate}

\end{Assumption}

Based on the repeated measurements described in Assumption \ref{ass.D'}({\romannumeral1}), Assumption \ref{ass.D'}({\romannumeral2}) first imposes a symmetry condition on the distribution of the measurement error, which is equivalent to requiring that its characteristic function be real-valued. It then places relatively strong moment conditions on the measurement error $\epsilon$. Commonly used measurement error distributions, such as the Laplace and normal distributions, satisfy these conditions. While some studies have considered relaxations of this assumption, for instance, \cite{li1998nonparametric} relaxes the symmetry condition, and \cite{delaigle2016methodology} weakens the repeated measurement requirement by imposing stronger restrictions on the true regressor, we do not pursue these extensions here for simplicity. Assumption \ref{ass.D'}({\romannumeral2}), motivated by the later analysis of the asymptotic behavior of the derivative of the unknown characteristic function of the measurement error, is discussed in detail in \cite{kurisu2022uniform}. It characterizes the integration of the estimated deconvolution kernel in our work.

Given the availability of replicate measurements for each sample $X_i$, we can obtain an estimator for $\theta_0$, which we denote as $\hat\theta_n$. Note that $\hat\theta_n$ is different from $\theta_n$, which is constructed under the known measurement error distribution assumption. The repeated measurements are also used to estimate the characteristic function of the measurement error, which is then plugged into the previously constructed deconvolution kernel to obtain an estimator, as given by
\begin{align}\label{stat.ME density}
    &\hat{f}^{\mathrm{ft}}_\epsilon(t) = \left\vert\frac{1}{n}\sum\limits_{i=1}^n\cos\left[t(W_i-W_i^r)\right]\right\vert^{1/2},\quad \hat{\mathcal{K}}_b(x) = \frac{1}{2\pi b}\int \ee^{-\ii tx}\frac{K^{\mathrm{ft}}(t)}{\hat{f}^{\mathrm{ft}}_\epsilon(t/b)}\,dt.
\end{align}
The estimated kernel $\hat{\mathcal{K}}_b(x)$ is subsequently used in place of the original kernel in the proposed estimator of the variance \eqref{stat.var} and the empirical process \eqref{stat.known} to obtain 
\begin{align}\label{stat.var unknown}
    \hat{\sigma}_n^2 = \frac{1}{n}\sum_{i=1}^n\int\left(Y_i-g(x;\hat{\theta}_n)\right)^2\hat{\mathcal{K}}_b\left(\frac{x-W_i}{b}\right)\,dx
\end{align}
and
\begin{align}\label{stat.unknown}
    \hat{S}_n(\xi,\hat{\theta}_n,\hat{\sigma}^2_n) = \frac{1}{n}\sum_{i=1}^n\int\left[\left(Y_i-g(x;\hat{\theta}_n)\right)^2-\hat{\sigma}_n^2\right]\hat{\mathcal{K}}_b\left(\frac{x-W_i}{b}\right)\ee^{\ii x\xi}\,dx.
\end{align}
The corresponding test statistics with repeated measurements, denoted by $\widehat{KS}_n$ and $\widehat{CvM}_n$, are then constructed in the same manner as the original $KS_n$ and $CvM_n$ statistics, respectively. The bootstrap procedure for obtaining critical values will be discussed in detail in Section \ref{sec.boot}. In addition, we introduce the notation $\Pi_\epsilon$ to evaluate the accuracy of $\hat{f}^{\mathrm{ft}}_\epsilon(\cdot)$ as an estimator of $f^{\mathrm{ft}}_\epsilon(\cdot)$,
\begin{align*}
    \Pi_{\epsilon}(t) = \frac{1}{2}-\frac{\cos(t(W-W^r))}{2\left\vert f^{\mathrm{ft}}_\epsilon(t)\right\vert^2}.
\end{align*}
In fact, due to the repeated measurements and the symmetry of the error distribution as assumed in Assumption \ref{ass.D'}, $\Pi_\epsilon$ is unbiased. 

We still distinguish between the ordinary smooth and the supersmooth cases based on the decay rate of the characteristic function of the measurement error in the case of an unknown measurement error distribution due to the differing conditions required to derive the asymptotic properties of the reconstructed test statistics. In the first case, the ordinary smooth scenario, we add the following assumption to Assumption \ref{ass.O}.
\begin{Assumption}\label{ass.O'}

    \quad 

    \begin{enumerate}[label=(\roman*)]
        \item $nb^{10\alpha+6}\log(\frac{1}{b})^{-4}\to \infty$ as $n\to\infty$.

        \item Strengthened convergence condition for the term brought by the unknown measurement error distribution estimation:
    \begin{align*}
        &\mathbb{E}\left[\sup_{\xi\in\Pi} \left\vert r^{\epsilon,os}_{\infty}(Y,W,W^r;\xi,\theta_0,\sigma^2_0)\right\vert^2\right]<\infty,
    \end{align*}
    where
    \begin{align*}
        r^{\epsilon,os}_{\infty}(Y,W,W^r;\xi,\theta_0,\sigma^2_0) =&\ee^{\ii W\xi}\sum_{l=0}^\alpha c_l^{os}(\xi)\left[\left(Y-g(W;\theta_0)\right)^2-\sigma_0^2\right]^{(l)} \\
        &+\left[g^2 f_X\right]^{\mathrm{ft}}(\xi)\Pi_{\epsilon}(\xi)+\frac{1}{2\pi}\int f_X^{\mathrm{ft}}(t)(g^2)^{\mathrm{ft}}(\xi-t)\Pi_{\epsilon}(t)\,dt\\
        &-\frac{1}{\pi}\int (g f_X)^{\mathrm{ft}}(t)g^{\mathrm{ft}}(\xi-t)\Pi_{\epsilon}(t)\,dt.
    \end{align*}
    
    \end{enumerate}

\end{Assumption}
Assumption \ref{ass.O'} is essential, as the estimation of $\hat{f}^{\mathrm{ft}}_\epsilon(\cdot)$ introduces additional higher-order terms and, unfortunately, distorts the variance structure of the main term. By imposing Assumption \ref{ass.O'}({\romannumeral1}), we establish a lower bound on the bandwidth, which ensures the negligibility of the unavoidable higher-order terms and provides a theoretically justified bandwidth range combined with the upper bound specified in Assumption \ref{ass.O}({\romannumeral4}). In addition, estimating the error characteristic functions modifies the form of the main term of the original empirical process. Specifically, it adds an additional component to the leading term $r^{os}_{\infty}(Y,W,W^r;\cdot,\theta_0,\sigma^2_0)$ that involves the stochastic term $r^{\epsilon,os}_{\infty}(Y,W,W^r;\cdot,\theta_0,\sigma^2_0)$ mentioned in Assumption \ref{ass.O'}({\romannumeral2}), thereby necessitating a strengthening of Assumption \ref{ass.O}({\romannumeral5}) through the requirement of bounded second moment of $r^{\epsilon,os}_{\infty}(Y,W,W^r;\cdot,\theta_0,\sigma^2_0)$ to ensure the finiteness of the variance. We then establish the asymptotic theory for the reconstructed test statistics for the ordinary smooth case with unknown measurement error distribution, as stated in the following theorem.
\begin{theorem}\label{theorem.unknown ordinary smooth under H0}
    Suppose Assumptions \ref{ass.D}, \ref{ass.O}, \ref{ass.D'}, and \ref{ass.O'} hold. Under $H_0$ in \eqref{hyp.H0},
    \begin{align}\label{theorem.unknown ordinary smooth under H0 eq1}
        \sqrt n \hat{S}_{n}(\cdot,\hat\theta_n,\hat{\sigma}^2_n)\Longrightarrow \hat{S}_{\infty}^{os}(\cdot,\theta_0,\sigma^2_0),
     \end{align}
    where $\hat{S}_{\infty}^{os}(\cdot,\theta_0,\sigma^2_0)$ is a Gaussian process with mean zero and covariance structure
    \begin{align*}
        &Cov\left[\hat{S}_{\infty}^{os}(\xi_1,\theta_0,\sigma^2_0),\hat{S}_{\infty}^{os}(\xi_2,\theta_0,\sigma^2_0)\right] = \mathbb{E}\left[\hat{r}^{os}_{\infty}(Y,W,W^r;\xi_1,\theta_0,\sigma^2_0)\hat{r}^{os}_{\infty}(Y,W,W^r;\xi_2,\theta_0,\sigma^2_0)\right],
    \end{align*}
    with
    \begin{align*}
        &\hat{r}^{os}_{\infty}(Y,W,W^r;\xi,\theta_0,\sigma^2_0) = r^{\epsilon,os}_{\infty}(Y,W,W^r;\xi,\theta_0,\sigma^2_0)-f_X^{\mathrm{ft}}(\xi)r^{\epsilon,os}_{\infty}(Y,W,W^r;0,\theta_0,\sigma^2_0),
    \end{align*}
    where $r^{\epsilon,os}_{\infty}(Y,W,W^r;\xi,\theta_0,\sigma^2_0)$ is defined in Assumption \ref{ass.O'}({\romannumeral2}).
\end{theorem}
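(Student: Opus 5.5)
The plan is to decompose $\hat S_n(\xi,\hat\theta_n,\hat\sigma_n^2)$ into the known-error process plus a correction from estimating $f_\epsilon^{\mathrm{ft}}$. Then reuse the argument behind Theorem \ref{theorem.known ordinary smooth under H0} for the first part and linearize the second. Write
\begin{align*}
\hat S_n(\xi,\hat\theta_n,\hat\sigma_n^2)=S_n(\xi,\hat\theta_n,\hat\sigma_n^2)+\frac{1}{n}\sum_{i=1}^n\int\left[\left(Y_i-g(x;\hat\theta_n)\right)^2-\hat\sigma_n^2\right]\left[\hat{\mathcal K}_b-\mathcal K_b\right]\left(\frac{x-W_i}{b}\right)\ee^{\ii x\xi}\,dx ,
\end{align*}
where $S_n$ uses the true kernel. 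Since $\hat\sigma_n^2$ enters linearly and equals the $\xi=0$ version of the same functional (up to normalization by $f_X^{\mathrm{ft}}(0)=1$), it suffices to obtain an expansion of the uncentered functional
\[
T_n(\xi)=n^{-1}\sum_i\int (Y_i-g(x;\hat\theta_n))^2\hat{\mathcal K}_b(\cdot)\ee^{\ii x\xi}dx .
\]
The centering by $\hat\sigma_n^2$ is then recovered from $T_n(0)$, and the combination produces the subtraction $-f_X^{\mathrm{ft}}(\xi)\,r^{\epsilon,os}_\infty(\cdot;0,\cdot)$ appearing in $\hat r^{os}_\infty$.

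\textbf{Step 1: the known-kernel part.} Replacing $\hat\theta_n$ by $\theta_0$ is handled exactly as in Theorem \ref{theorem.known ordinary smooth under H0}. A Taylor expansion in $\theta$ uses Assumption \ref{ass.D}(ii), and the derivative terms involve $\partial g/\partial\theta$ and $g\,\partial g/\partial\theta$, which are covered by Assumption \ref{ass.O}(i),(v). Under $H_0$ the linear term in $\hat\theta_n-\theta_0$ has mean proportional to $\mathbb E[U\,\partial g/\partial\theta\,\ee^{\ii X\xi}]=0$. Its fluctuation is $O_p(n^{-1/2})$, so the product with $\hat\theta_n-\theta_0$ is $o_p(n^{-1/2})$. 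The quadratic term is $O_p(n^{2\delta-1})=o_p(n^{-1/2})$ because $\delta<1/4$.

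For the known-kernel piece, the ordinary smooth form of $f_\epsilon^{\mathrm{ft}}$ in Assumption \ref{ass.O}(ii) makes $\int h(x)\mathcal K_b((x-W)/b)\ee^{\ii x\xi}dx$ a finite differential operator. Expanding $1/f_\epsilon^{\mathrm{ft}}(t/b)$ as a polynomial and integrating by parts gives $\ee^{\ii W\xi}\sum_l c_l^{os}(\xi)h^{(l)}(W)$ plus a kernel-smoothing bias. That bias is $O(b^p)=o(n^{-1/2})$ by the higher-order kernel and Assumption \ref{ass.O}(iv). This step is already done in the known case and yields the first line of $r^{\epsilon,os}_\infty$.

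\textbf{Step 2: linearizing the estimated characteristic function.} Write
\[
1/\hat f_\epsilon^{\mathrm{ft}}-1/f_\epsilon^{\mathrm{ft}}=-\frac{\hat f^{\mathrm{ft}}_\epsilon-f^{\mathrm{ft}}_\epsilon}{(f^{\mathrm{ft}}_\epsilon)^2}+R_n .
\]
With $\hat f^{\mathrm{ft}}_\epsilon=|\hat\phi|^{1/2}$, where $\hat\phi(t)=n^{-1}\sum\cos(t(W_i-W_i^r))$ has mean $|f_\epsilon^{\mathrm{ft}}(t)|^2$ by symmetry, the delta method gives
\[
(\hat f^{\mathrm{ft}}_\epsilon-f^{\mathrm{ft}}_\epsilon)/f^{\mathrm{ft}}_\epsilon\approx -n^{-1}\sum_j\Pi_{\epsilon,j}(t).
\]
Plugging this into the correction term and replacing the empirical average over $i$ by its expectation (the cross fluctuation is a degenerate U-statistic of order $o_p(n^{-1/2})$) gives $n^{-1}\sum_j\int \Pi_{\epsilon,j}(t)\,\Psi_b(\xi,t)\,dt$. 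Here $\Psi_b$ is built from
\[
\mathbb E\big[(Y-g(x))^2\big]\text{-type Fourier products}.
\]
Expanding $(Y-g)^2=Y^2-2Yg+g^2$ and using $\mathbb E[Y|X]=g$ turns these, via Parseval, into the three convolution terms involving $[g^2f_X]^{\mathrm{ft}}$, $f_X^{\mathrm{ft}}(g^2)^{\mathrm{ft}}$ and $(gf_X)^{\mathrm{ft}}g^{\mathrm{ft}}$. The limit $b\to0$ of $K^{\mathrm{ft}}(bt)\to1$ is justified by dominated convergence.

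\textbf{Step 3: remainders, then weak convergence.} The main obstacle is the remainder $R_n$ together with the uniform control of $\hat\phi-|f^{\mathrm{ft}}_\epsilon|^2$ over $|t|\le c_0/b$, where $f_\epsilon^{\mathrm{ft}}(t/b)\asymp b^{\alpha}$. I would invoke the uniform rate of \cite{kurisu2022uniform}, namely $\sup_{|t|\le c_0/b}|\hat\phi-\phi|=O_p(\sqrt{\log(1/b)/n})$, which uses the moment condition in Assumption \ref{ass.D'}(ii). Dividing by $(f_\epsilon^{\mathrm{ft}})^3$ and integrating the resulting polynomial growth bounds the quadratic remainder by $O_p(\log(1/b)\,n^{-1}b^{-c\alpha-c'})$. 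The bandwidth lower bound $nb^{10\alpha+6}\log(1/b)^{-4}\to\infty$ in Assumption \ref{ass.O'}(i) is exactly what should make this $o_p(n^{-1/2})$ uniformly in $\xi\in\Pi$. I expect the exponent bookkeeping in this step to be the delicate part.

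Once $\sqrt n\hat S_n(\cdot)=n^{-1/2}\sum_i\hat r^{os}_\infty(Y_i,W_i,W_i^r;\cdot)+o_p(1)$ holds uniformly on $\Pi$, the conclusion follows from i.i.d. terms with mean zero under $H_0$. Finite-dimensional convergence comes from the multivariate CLT. Tightness holds because $\{\hat r^{os}_\infty(\cdot;\xi):\xi\in\Pi\}$ is Lipschitz in $\xi$ on the compact $\Pi$ with square-integrable envelope (Assumption \ref{ass.O'}(ii)), hence Donsker. This yields the Gaussian limit with the stated covariance.
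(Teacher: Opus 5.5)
Your architecture matches the paper's. Splitting $\hat{\mathcal K}_b$ into a part linear in $[\hat f^{\mathrm{ft}}_\epsilon]^2-[f^{\mathrm{ft}}_\epsilon]^2$ plus a quadratic remainder is exactly the paper's $\psi_1,\psi_2$ decomposition; your delta-method term is $\psi_1=n^{-1}\sum_j\Pi_{\epsilon,j}$. The linear part is then a V-statistic whose H\'{a}jek projection onto $D_j$ yields the $\Pi_\epsilon$ terms, and the remainder is controlled by the uniform rate plus Assumption \ref{ass.O'}(i).

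\textbf{The reduction to the uncentered $T_n$ is wrong as stated.} You have $\hat S_n(\xi,\hat\theta_n,\hat\sigma^2_n)=T_n(\xi)-T_n(0)\hat G_n(\xi)$, where $\hat G_n(\xi)=K^{\mathrm{ft}}(b\xi)[\hat f^{\mathrm{ft}}_\epsilon(\xi)]^{-1}n^{-1}\sum_i\ee^{\ii W_i\xi}$. The deviation $\hat G_n(\xi)-f_X^{\mathrm{ft}}(\xi)$ is $O_p(n^{-1/2})$ but not $o_p(n^{-1/2})$. Its influence function is $c_0^{os}(\xi)\ee^{\ii W\xi}-f_X^{\mathrm{ft}}(\xi)+f_X^{\mathrm{ft}}(\xi)\Pi_\epsilon(\xi)$, and it is multiplied by $T_n(0)\to\sigma_0^2>0$. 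If you expand only $T_n$ and use $\hat G_n\approx f_X^{\mathrm{ft}}$, your influence function differs from $\hat r^{os}_\infty$ by $\sigma_0^2$ times that nondegenerate term. Concretely:
\begin{itemize}
\item the first line comes out with $[(Y-g)^2]^{(l)}$ instead of $[(Y-g)^2-\sigma_0^2]^{(l)}$;
\item the Step 2 projection acquires an extra $\sigma_0^2f_X^{\mathrm{ft}}(\xi)\Pi_\epsilon(\xi)$, coming from $\mathbb E[U^2\ee^{\ii Xt}]=\sigma_0^2f_X^{\mathrm{ft}}(t)$ under $H_0$.
\end{itemize}
Note that $\mathbb E[Y\mid X]=g$ alone does not determine $\mathbb E[Y^2\ee^{\ii Xt}]$; $H_0$ is needed there. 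The resulting limiting covariance would be wrong.

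\textbf{Fix.} Keep the integrand centred at $\sigma_0^2$, as your own first display nearly does and as the paper does:
$\hat S_n(\xi,\hat\theta_n,\hat\sigma^2_n)=\hat S_n(\xi,\theta_0,\sigma^2_0)+\hat S_{n1}(\xi,\hat\theta_n)-(\hat\sigma^2_n-\sigma^2_0)\hat G_n(\xi)$, where $\hat S_{n1}$ collects the effect of replacing $\theta_0$ by $\hat\theta_n$. Under $H_0$, $\mathbb E[(U^2-\sigma_0^2)\ee^{\ii Xt}]=0$, so the projection gives exactly the three convolution terms of $r^{\epsilon,os}_\infty$. Since $\hat G_n(0)=1$ exactly, $\hat\sigma^2_n-\sigma^2_0=\hat S_n(0,\theta_0,\sigma^2_0)+\hat S_{n1}(0,\hat\theta_n)$. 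Then only $\sup_{\xi\in\Pi}|\hat G_n(\xi)-f_X^{\mathrm{ft}}(\xi)|=o_p(1)$ is needed to produce $-f_X^{\mathrm{ft}}(\xi)r^{\epsilon,os}_\infty(\cdot;0,\cdot)$.

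\textbf{Two lesser points in Steps 2--3.}
\begin{itemize}
\item The double sum is a V-statistic, not only a degenerate U-statistic. The degenerate U-part needs second moments of the summand $p(D_i,D_j;\xi)$ of order $o(n)$. You must also control the diagonal $n^{-2}\sum_ip(D_i,D_i;\xi)$, which needs $\mathbb E\sup_{\xi\in\Pi}|p(D,D;\xi)|=o(n^{1/2})$. Both follow from Assumption \ref{ass.O'}(i), as in the paper's treatment of $S_{n,11}$ and $S_{n,12}$.
\item The quadratic remainder cannot be bounded by integrating $|(Y-g(x;\theta_0))^2-\sigma_0^2|$ against $|\mathcal K_{b,2}|$, because that integrand grows in $x$ (e.g.\ for linear $g$). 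Instead, Taylor-expand it around $W_i$ and use the moment properties of $K_{\epsilon,2}$ and $\mathcal K_{\epsilon,2}$ in Lemmas \ref{lemma.power of tsf kernel}--\ref{lemma.power of decon}. There, low-order moments are zero or $o_p(n^{-1/2})$, and absolute moments of orders $p,p+1$ are $o_p(n^{-1/2})$.
\end{itemize}
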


Theorem \ref{theorem.unknown ordinary smooth under H0} establishes that the reconstructed empirical process still converges at the parametric rate\footnote{The asymptotic null distributions of statistics $\widehat{KS}_n$ and $\widehat{CvM}_n$ are established in the same manner as in Theorem \ref{theorem.known ordinary smooth under H0}. By applying the continuous mapping theorem, it follows directly that $\widehat{KS}_n$ and $\widehat{CvM}_n$ converge to the sup norm and the squared $L_2$-norm of the limiting Gaussian process $\hat{S}^{os}_\infty(\cdot,\theta_0,\sigma^2_0)$, respectively.}, unaffected by the error brought by the estimation of the error characteristic functions. In contrast to the case with a known measurement error distribution, the covariance structure of the limiting process differs and requires more restrictive assumptions on the variance. 

For the supersmooth case, we assume
\begin{Assumption}\label{ass.S'}

    \quad 

    \begin{enumerate}[label=(\roman*)]
        \item $n\ee^{-6\mu (1+b^{-1})^{2}}\log(\frac{1}{b})^{-2}\to \infty$ as $n\to\infty$.
        
        \item We assume
        \begin{align*}
            &\mathbb{E}\left[\sup_{\xi\in\Pi} \left\vert r^{\epsilon,ss}_{\infty}(Y,W,W^r;\xi,\theta_0,\sigma^2_0)\right\vert^2\right]<\infty.
        \end{align*}
        where
        \begin{align*}
            r^{\epsilon,ss}_{\infty}(Y,W,W^r;\xi,\theta_0,\sigma^2_0) =&\ee^{\ii W\xi}\sum_{l=0}^\infty c_l^{ss}(\xi)\left[\left(Y-g(W;\theta_0)\right)^2-\sigma_0^2\right]^{(l)} \\
            &+\left[g^2 f_X\right]^{\mathrm{ft}}(\xi)\Pi_{\epsilon}(\xi)+\frac{1}{2\pi}\int f_X^{\mathrm{ft}}(t)(g^2)^{\mathrm{ft}}(\xi-t)\Pi_{\epsilon}(t)\,dt\\
            &-\frac{1}{\pi}\int (g f_X)^{\mathrm{ft}}(t)g^{\mathrm{ft}}(\xi-t)\Pi_{\epsilon}(t)\,dt
        \end{align*}
        and $\Pi_{\epsilon}(t)$ is defined in Assumption \ref{ass.O'}({\romannumeral2}).
    \end{enumerate}

\end{Assumption}

Analogously, for the supersmooth case with unknown measurement error, the estimation brings a higher-order term, rendered asymptotically negligible through a stronger bandwidth condition in Assumption \ref{ass.S'}({\romannumeral1}), as well as additional uncertainty in the main term, which necessitates a more stringent condition to ensure variance boundedness. Compared with the ordinary smooth case under an unknown measurement error distribution, Assumption \ref{ass.S'}({\romannumeral2}) is imposed to ensure the finiteness of the variance. A similar comparison can be observed in Assumptions \ref{ass.O} and \ref{ass.S} when the error distribution is known.
\begin{theorem}\label{theorem.unknown supersmooth under H0}
    Suppose Assumptions \ref{ass.D}, \ref{ass.S}, \ref{ass.D'}, and \ref{ass.S'} hold. Under $H_0$ in \eqref{hyp.H0},
    \begin{align}\label{theorem.unknown supersmooth under H0 eq1}
        \sqrt n \hat{S}_{n}(\cdot,\hat\theta_n,\hat{\sigma}^2_n)\Longrightarrow \hat{S}_{\infty}^{ss}(\cdot,\theta_0,\sigma^2_0),
     \end{align}
    where $\hat{S}_{\infty}^{ss}(\cdot,\theta_0,\sigma^2_0)$ is a Gaussian process with mean zero and covariance structure
    \begin{align*}
        &Cov\left[\hat{S}_{\infty}^{ss}(\xi_1,\theta_0,\sigma^2_0),\hat{S}_{\infty}^{ss}(\xi_2,\theta_0,\sigma^2_0)\right] = \mathbb{E}\left[\hat{r}^{ss}_{\infty}(Y,W,W^r;\xi_1,\theta_0,\sigma^2_0)\hat{r}^{ss}_{\infty}(Y,W,W^r;\xi_2,\theta_0,\sigma^2_0)\right],
    \end{align*}
    with
    \begin{align*}
        &\hat{r}^{ss}_{\infty}(Y,W,W^r;\xi,\theta_0,\sigma^2_0) = r^{\epsilon,ss}_{\infty}(Y,W,W^r;\xi,\theta_0,\sigma^2_0)-f_X^{ft}(\xi)r^{\epsilon,ss}_{\infty}(Y,W,W^r;0,\theta_0,\sigma^2_0),
    \end{align*}
    where $r^{\epsilon,ss}_{\infty}(Y,W,W^r;\xi,\theta_0,\sigma^2_0)$ is defined in Assumption \ref{ass.S'}({\romannumeral2}).
\end{theorem}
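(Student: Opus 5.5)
The proof will follow the same architecture as Theorem \ref{theorem.unknown ordinary smooth under H0}, transplanted to the Gaussian (supersmooth) setting of Assumption \ref{ass.S}. We write $\hat{\mathcal K}_b = \mathcal K_b + (\hat{\mathcal K}_b-\mathcal K_b)$ and decompose
\[
\sqrt n\hat S_n(\xi,\hat\theta_n,\hat\sigma^2_n)=\sqrt n S_n(\xi,\hat\theta_n,\hat\sigma^2_n)+\sqrt n R_n(\xi),
\]
where $S_n$ is the known-error process (\ref{stat.known}) evaluated at $(\hat\theta_n,\hat\sigma^2_n)$, and $R_n$ collects the contribution of $1/\hat f^{\mathrm{ft}}_\epsilon-1/f^{\mathrm{ft}}_\epsilon$.

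For the first piece we reuse the argument behind Theorem \ref{theorem.known supersmooth under H0}.
\begin{itemize}
\item Expanding $g(x;\hat\theta_n)$ around $\theta_0$, the rate in Assumption \ref{ass.D}(ii) together with the smoothness of $\partial g/\partial\theta$ should make the parameter-estimation terms $o_p(1)$ uniformly in $\xi$. The reason is that, under $H_0$, $\mathbb E[U\,\partial g/\partial\theta\,\mathrm e^{\mathrm i X\xi}]=0$ and the quadratic remainder is $O_p(n^{2\delta-1/2})$ with $\delta<1/4$.
\item The integral $\int h(x)\mathcal K_b((x-W)/b)\mathrm e^{\mathrm i x\xi}\,dx$ is computed in Fourier form. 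For $f^{\mathrm{ft}}_\epsilon(t)=\mathrm e^{-\mu t^2}$, expanding $\mathrm e^{\mu(t+\xi)^2}$ as a power series, with an infinite-order kernel so that the bias vanishes identically in the limit, yields the operator $\sum_l c_l^{ss}(\xi)\partial^l$ applied to $h$ at $W$. This produces the leading term $\mathrm e^{\mathrm i W\xi}\sum_l c^{ss}_l(\xi)[(Y-g(W))^2-\sigma_0^2]^{(l)}$.
\item Substituting $\hat\sigma^2_n$, which is itself the $\xi=0$ version of the same construction, generates the centering term $-f_X^{\mathrm{ft}}(\xi)\,r^{\epsilon,ss}_\infty(\cdot;0)$.
\end{itemize}

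For $R_n$ we linearize
\[
\frac{1}{\hat f^{\mathrm{ft}}_\epsilon(t)}-\frac{1}{f^{\mathrm{ft}}_\epsilon(t)}=\frac{1}{f^{\mathrm{ft}}_\epsilon(t)}\cdot\frac1n\sum_i\Pi_{\epsilon,i}(t)+\text{remainder}.
\]
This follows from $\hat f^{\mathrm{ft}}_\epsilon=(\hat\varphi)^{1/2}$ with $\hat\varphi$ the empirical mean of $\cos(t(W-W^r))$ and $\mathbb E\cos(t(W-W^r))=|f^{\mathrm{ft}}_\epsilon(t)|^2$ under symmetry. Plugging the linear part back into $\frac1n\sum_j\int[(Y_j-g(x))^2-\sigma^2]\mathcal K_b(\cdot)\mathrm e^{\mathrm i x\xi}dx$ gives a V-statistic. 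Its Hájek projection replaces the sum over $j$ by its expectation, which by Parseval converts into the convolution terms $[g^2f_X]^{\mathrm{ft}}(\xi)\Pi_\epsilon(\xi)$, $\frac1{2\pi}\int f_X^{\mathrm{ft}}(g^2)^{\mathrm{ft}}(\xi-t)\Pi_\epsilon(t)\,dt$ and $-\frac1\pi\int(gf_X)^{\mathrm{ft}}g^{\mathrm{ft}}(\xi-t)\Pi_\epsilon(t)\,dt$. Here the $Y^2$ and $\sigma^2$ parts integrate against $f_X$ and are absorbed by the centering. The degenerate part of the V-statistic has order $n^{-1}\sup_{|t|\le c_0/b}|f^{\mathrm{ft}}_\epsilon(t)|^{-2}$ times polynomial factors in $1/b$.

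The step we expect to be the main obstacle is controlling this degenerate part and the nonlinear remainder uniformly over $|t|\le c_0/b$. For that we would invoke the uniform rate $\sup_{|t|\le T}|\hat\varphi(t)-\varphi(t)|=O_p(\sqrt{\log T/n})$ from \cite{kurisu2022uniform}, which uses the moment condition of Assumption \ref{ass.D'}(ii), together with $\inf_{|t|\le c_0/b}|f^{\mathrm{ft}}_\epsilon(t)|^2\ge \mathrm e^{-2\mu(c_0/b)^2}$. The bandwidth condition of Assumption \ref{ass.S'}(i), $n\mathrm e^{-6\mu(1+b^{-1})^2}/\log(1/b)^2\to\infty$, is exactly what should make $\sqrt n$ times the quadratic remainder, which is of order $n^{-1}\log(1/b)\,\mathrm e^{3\mu(1+1/b)^2}$ including the cube from the second-order expansion of the square root, converge to zero. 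We would first check this exponent bookkeeping, taking $c_0\le1$ after rescaling.

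Finally, the leading term is an i.i.d. average $n^{-1/2}\sum_i\hat r^{ss}_\infty(Y_i,W_i,W_i^r;\xi)$. Finite-dimensional convergence follows from the Lindeberg CLT, using Assumption \ref{ass.S'}(ii) for second moments. Tightness in $l^\infty(\Pi)$ follows because $\xi\mapsto\hat r^{ss}_\infty$ is Lipschitz on compact $\Pi$ with a square-integrable envelope, which places the class in a Donsker class (van der Vaart--Wellner, Thm.\ 2.7.11). This yields (\ref{theorem.unknown supersmooth under H0 eq1}).
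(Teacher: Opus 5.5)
Your plan is essentially the paper's argument; for this theorem the paper just transplants the ordinary-smooth proof. It peels off the known-kernel process and the negligible $\hat\theta_n$ terms, splits $\hat{\mathcal K}_b-\mathcal K_b$ into the linear Fr\'echet piece $\mathcal K_{b,1}$ and the remainder $\mathcal K_{b,2}$, treats the linear piece as a V-statistic whose H\'ajek projection yields the $\Pi_\epsilon$ convolution terms (the $U^2-\sigma_0^2$ part drops out by $H_0$, not by any centering), lets $\hat\sigma^2_n$ produce $-f_X^{\mathrm{ft}}(\xi)\,r^{\epsilon,ss}_\infty(\cdot;0)$, and finishes with a Donsker argument. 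One bookkeeping caveat concerns the step you flag: the exponent in Assumption \ref{ass.S'}(i) matches the diagonal and degenerate parts of the linear V-statistic, whose kernel has size $|f^{\mathrm{ft}}_\epsilon|^{-1}\cdot|f^{\mathrm{ft}}_\epsilon|^{-2}=\ee^{3\mu t^2}$, but the second-order term of $\hat\varphi^{-1/2}$ is $\tfrac38\varphi^{-5/2}(\hat\varphi-\varphi)^2$, so your crude sup bound over $|t|\le c_0/b$ gives $n^{-1}\log(1/b)\,\ee^{5\mu c_0^2/b^2}$, which Assumption \ref{ass.S'}(i) does not kill when $c_0=1$; the paper sidesteps this by Taylor-expanding $H(Y_i,W_i+bx)$ and needing only $\sum_l\sup_\xi\bigl|\int(bx)^l\mathcal K_{\epsilon,2}(x)\ee^{\ii bx\xi}\,dx\bigr|=o_p(1)$ (Lemma \ref{lemma.power of decon}), since these moments multiply $O_p(n^{-1/2})$ centered averages and the mean part vanishes under $H_0$ with the infinite-order kernel.
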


In general, the lack of measurement error information and the supersmooth case are typically considered to lead to slower convergence rates. Nevertheless, the reconstructed empirical process proposed in this paper retains the parametric rate of convergence, a desirable property not well established in the testing literature when measurement error is present. Overall, the findings from the two cases suggest that estimating the error characteristic functions does not alter the requirements on the smoothness of the structural functions or on the order of the kernel functions. Moreover, it does not affect the convergence rate of the test statistic toward its asymptotic null distribution. However, it leads to different admissible range for the bandwidth and changes the asymptotically equivalent form of $\hat{S}_{n}(\cdot,\hat\theta_n,\hat{\sigma}^2_n)$, denoted respectively by $\hat{r}^{os}_{\infty}(Y,W,W^r;\cdot,\theta_0,\sigma^2_0)$ and $\hat{r}^{ss}_{\infty}(Y,W,W^r;\cdot,\theta_0,\sigma^2_0)$ for ordinary smooth and supersmooth, resulting in different variance structures of the limiting process. 

Similar to the case where the measurement error distribution is known, we claim that the proposed test exhibits nontrivial local power even in the absence of the information about the measurement error distribution, as demonstrated by studying the asymptotic properties of the reconstructed statistics under the sequence of local alternative hypotheses described in \eqref{hyp.H1n}. We present the following theorem to establish this result.
\begin{theorem}\label{theorem.unknown under H1n}
    Suppose Assumptions \ref{ass.D} and \ref{ass.D'} hold. Under $H_{1n}$ in \eqref{hyp.H1n}, if Assumptions \ref{ass.O} and \ref{ass.O'} hold for the ordinary smooth case,
    \begin{align*}
        \sqrt n \hat{S}_{n}(\cdot,\hat\theta_n,\hat{\sigma}^2_n)\Longrightarrow \hat{S}_{\infty}^{os}(\cdot,\theta_0,\sigma^2_0)+\mu_\Delta(\cdot),
    \end{align*}
    and if Assumptions \ref{ass.S} and \ref{ass.S'} hold for the supersmooth case, 
    \begin{align*}
        \sqrt n \hat{S}_{n}(\cdot,\hat\theta_n,\hat{\sigma}^2_n)\Longrightarrow \hat{S}_{\infty}^{ss}(\cdot,\theta_0,\sigma^2_0)+\mu_\Delta(\cdot),
    \end{align*}
    where $\hat{S}_{\infty}^{os}(\cdot,\theta_0,\sigma^2_0)$ and $\hat{S}_{\infty}^{ss}(\cdot,\theta_0,\sigma^2_0)$ are the centered Gaussian processes as defined in Theorems \ref{theorem.unknown ordinary smooth under H0} and \ref{theorem.unknown supersmooth under H0}, respectively, and $\mu_\Delta(\cdot)$ is the deterministic shift function as defined in Theorem \ref{theorem.known under H1n}.
\end{theorem}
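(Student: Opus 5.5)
The plan is to reduce Theorem \ref{theorem.unknown under H1n} to the null results of Theorems \ref{theorem.unknown ordinary smooth under H0} and \ref{theorem.unknown supersmooth under H0}, in the same way Theorem \ref{theorem.known under H1n} follows from Theorems \ref{theorem.known ordinary smooth under H0} and \ref{theorem.known supersmooth under H0}. The deterministic shift will come from the drift $\Delta(X)/\sqrt n$ in the conditional variance. Write $U_i^2=\sigma_0^2+\Delta(X_i)/\sqrt n+V_i$, where $\mathbb E[V_i\mid X_i]=0$. Then decompose
\[
\sqrt n\hat S_n(\xi,\hat\theta_n,\hat\sigma_n^2)=\sqrt n\hat S_n^{0}(\xi)+A_n(\xi),
\]
where $\hat S_n^0$ is the process that would be computed from data whose squared residuals satisfy the null structure. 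Because the marks enter linearly through $(Y_i-g(x;\hat\theta_n))^2$, the difference $A_n(\xi)$ collects the contributions of the drift.

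First, I will reuse the asymptotic linear representation from the proofs of the two unknown-error null theorems:
\[
\sqrt n\hat S_n(\xi)=n^{-1/2}\sum_i \hat r_\infty(Y_i,W_i,W_i^r;\xi,\theta_0,\sigma_0^2)+R_n(\xi)+o_p(1)
\]
uniformly in $\xi\in\Pi$. That proof controls the bias by undersmoothing or the infinite-order kernel, the parameter estimation effect by Assumption \ref{ass.D}(ii), and the $\hat f_\epsilon^{\rm ft}$ linearization via $\Pi_\epsilon$ under Assumptions \ref{ass.O'} and \ref{ass.S'}. The only place $H_0$ entered was to make the summands centred, so that $\mathbb E[\hat r_\infty]=0$. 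Here I only need to check that each step still goes through when the conditional second moment carries an $O(n^{-1/2})$ perturbation. The higher-order remainders involving $\hat f^{\rm ft}_\epsilon-f^{\rm ft}_\epsilon$ are multiplied by marks whose distribution changes only by $O(n^{-1/2})$, so the bandwidth conditions in Assumptions \ref{ass.O'}(i) and \ref{ass.S'}(i) still make them $o_p(1)$ uniformly.

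Second, I compute the new mean. The deconvolution identity $\mathbb E[\int h(x)\mathcal K_b((x-W)/b)dx\mid X]=\int h(x)K_b$-smoothed around $X$ applies, with bias $o(n^{-1/2})$ under the kernel/smoothness assumptions. Since $\hat f^{\rm ft}_\epsilon$ is built from $W-W^r$ only, its first-order effect through $\Pi_\epsilon$ is mean zero and independent of the drift. Together these give
\[
\sqrt n\,\mathbb E[\hat r_\infty(\xi)]=\mathbb E[\Delta(X)\ee^{\ii X\xi}]-f_X^{\rm ft}(\xi)\mathbb E[\Delta(X)]+o(1)=\mu_\Delta(\xi)+o(1),
\]
using $\mathbb E\Delta(X)=0$. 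The term $-f_X^{\rm ft}(\xi)\,\cdot$ arises from $\hat\sigma_n^2$, whose probability limit is still $\sigma_0^2$ plus an $n^{-1/2}\mathbb E\Delta=0$ correction. I also need uniformity in $\xi$, which follows since $|\ee^{\ii X\xi}|\le 1$ and $\mathbb E|\Delta(X)|<\infty$.

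Third, the centred sums $n^{-1/2}\sum_i\{\hat r_\infty(\cdot)-\mathbb E\hat r_\infty(\cdot)\}$ form a triangular array. The Lindeberg–Feller condition and the covariance converge to those of $\hat S^{os}_\infty$ or $\hat S^{ss}_\infty$, because the variance differs from the null variance by $O(n^{-1/2})$ under the square-integrability in Assumptions \ref{ass.O'}(ii) and \ref{ass.S'}(ii). Tightness follows from the same bracketing or Lipschitz-in-$\xi$ argument as under $H_0$; $\Pi$ is compact and the envelope is square integrable. Combining these steps yields weak convergence to $\hat S_\infty+\mu_\Delta$ in both cases.

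The main obstacle is the first step: verifying that the remainder bounds in the null proofs, especially those from estimating $f_\epsilon^{\rm ft}$, did not secretly use exact homoskedasticity, for example a cancellation of cross terms. I would handle this by writing every remainder as the corresponding null remainder plus an $n^{-1/2}\Delta$-weighted copy. I would then bound the copy using the same moment bounds times $\mathbb E|\Delta(X)|$.
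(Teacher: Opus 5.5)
Your proposal is correct and follows essentially the same route as the paper. The paper also notes that Lemmas \ref{lemma.main term unknown}, \ref{lemma.negligible term unknown} and \ref{lemma.Gn unknown} use $H_0$ only to remove the bias, so that under $H_{1n}$ the main term picks up $\sqrt{n}\,\mathbb{E}[(U^2-\sigma_0^2)\ee^{\ii X\xi}]K^{\mathrm{ft}}(b\xi)\to\mathbb{E}[\Delta(X)\ee^{\ii X\xi}]$, while the $\hat\sigma_n^2$ term contributes $f_X^{\mathrm{ft}}(\xi)\mathbb{E}[\Delta(X)]=0$, exactly as in the proof of Theorem \ref{theorem.known under H1n}; your triangular-array Lindeberg--Feller and tightness checks add detail the paper leaves implicit, and your auxiliary split into $\hat S_n^0+A_n$ is unnecessary because those lemmas are already centred at the population mean rather than at zero.
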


We observe that when the alternative hypothesis deviates from the null at $\sqrt{n}$-rate, the limiting distribution of the test statistic, whether for the ordinary smooth or supersmooth case, adds a deterministic shift term to the corresponding null distribution, aligning with the case where the measurement error distribution is known. Furthermore, we proceed to analyze the limiting distribution of the test statistics under the alternative hypothesis \eqref{hyp.H1} to investigate the global power of our test.
\begin{theorem}\label{theorem.unknown under H1}
    Suppose Assumptions \ref{ass.D} and \ref{ass.D'} hold. Under $H_{1}$ in \eqref{hyp.H1}, if either Assumptions \ref{ass.O} and \ref{ass.O'} hold for the ordinary smooth case or Assumptions \ref{ass.S} and \ref{ass.S'} hold for the supersmooth case, 
    \begin{align*}
        &\sup_{\xi\in\Pi}\vert \hat{S}_{n}(\xi,\hat{\theta}_n,\hat{\sigma}^2_n) - S(\xi,\theta_0,\sigma^2_\ast) \vert = o_p\left(1\right),
    \end{align*}
    where $S(\xi,\theta_0,\sigma^2_\ast) = \mathbb{E}\{[(Y-g(X;\theta_0)^2-\sigma_\ast^2]\ee^{\ii X\xi}\}\neq 0$ for some $\xi\in\Pi$ with a positive measure. %$S(\xi,\theta_0,\sigma^2_\ast)$ is defined in Theorem \ref{theorem.known under H1}. %\eqref{theorem.known under H1 eq1}.
\end{theorem}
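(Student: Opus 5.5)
The plan is to reduce Theorem \ref{theorem.unknown under H1} to the known-distribution consistency result, Theorem \ref{theorem.known under H1}, by bounding the gap between the estimated and the true deconvolution kernels uniformly in $\xi$. I would write
\begin{align*}
\hat{S}_{n}(\xi,\hat{\theta}_n,\hat{\sigma}^2_n) - S(\xi,\theta_0,\sigma^2_\ast) =& \left[\hat{S}_{n}(\xi,\hat{\theta}_n,\hat{\sigma}^2_n)-S_{n}(\xi,\hat{\theta}_n,\hat{\sigma}^2_n)\right] \\
&+\left[S_{n}(\xi,\hat{\theta}_n,\hat{\sigma}^2_n)-S(\xi,\theta_0,\sigma^2_\ast)\right].
\end{align*}
Here $S_n$ uses the true kernel $\mathcal{K}_b$. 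The second bracket is handled exactly as in the proof of Theorem \ref{theorem.known under H1}. That argument only needs two facts: $\hat\theta_n$ is consistent (assumed at the rate in Assumption \ref{ass.D}(ii), now applied to $\hat\theta_n$), and $\hat\sigma_n^2\to\sigma_\ast^2$ in probability. The second fact follows from the first bracket applied at $\xi=0$, because $\hat\sigma_n^2$ is itself the unknown-kernel analogue of $\sigma_n^2$. Since $\hat{S}_n$ and $\sigma^2$ enter linearly, the $\hat\sigma_n^2$ part reduces to terms like $(\hat\sigma_n^2-\sigma_\ast^2)\, n^{-1}\sum_i\int \hat{\mathcal K}_b((x-W_i)/b)e^{ix\xi}dx$. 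The integral equals $e^{iW_i\xi}K^{\mathrm{ft}}(-b\xi)/\hat f^{\mathrm{ft}}_\epsilon(-\xi)$, which is uniformly bounded on the compact set $\Pi$ with high probability.

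For the first bracket I would not seek the fine expansion used for Theorems \ref{theorem.unknown ordinary smooth under H0}--\ref{theorem.unknown supersmooth under H0}. An $o_p(1)$ bound suffices. I would pass to the Fourier domain: each term $\int [(Y_i-g(x;\theta))^2-\sigma^2]\mathcal{K}_b((x-W_i)/b)e^{ix\xi}dx$ can be expressed through $\int K^{\mathrm{ft}}(bt)/f^{\mathrm{ft}}_\epsilon(t)\,(\cdot)\,dt$ over $|t|\le c_0/b$. The difference of kernels then has the multiplier
$$\frac{1}{\hat f^{\mathrm{ft}}_\epsilon(t)}-\frac{1}{f^{\mathrm{ft}}_\epsilon(t)}=\frac{f^{\mathrm{ft}}_\epsilon(t)-\hat f^{\mathrm{ft}}_\epsilon(t)}{\hat f^{\mathrm{ft}}_\epsilon(t)f^{\mathrm{ft}}_\epsilon(t)}.$$
Using the uniform convergence of the empirical cosine transform of $W-W^r$ on $|t|\le c_0/b$, I expect a rate of order $\sqrt{\log(1/b)/n}$ under the moment condition in Assumption \ref{ass.D'}(ii), as in \cite{kurisu2022uniform}. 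Since $|f_\epsilon^{\mathrm{ft}}|^2$ is the characteristic function of $W-W^r$, this controls $\sup_{|t|\le c_0/b}|\hat f^{\mathrm{ft}}_\epsilon-f^{\mathrm{ft}}_\epsilon|/|f^{\mathrm{ft}}_\epsilon|^2$ up to polynomial factors in $1/b$ (ordinary smooth) or factors $e^{c\mu/b^2}$ (supersmooth). Multiplying by the $L^1$-type norms of the Fourier transforms of $g,g^2,gf_X$, which are bounded via Assumption \ref{ass.O}(i) or \ref{ass.S}(i), bounds the bracket by such a factor times $\sqrt{\log(1/b)/n}$. The lower bandwidth conditions in Assumptions \ref{ass.O'}(i) and \ref{ass.S'}(i) make this $o_p(1)$.

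The main obstacle is this uniform control over the widening frequency window: keeping $\hat f^{\mathrm{ft}}_\epsilon$ bounded away from zero relative to $f^{\mathrm{ft}}_\epsilon$ on $|t|\le c_0/b$, with probability tending to one, and confirming that the resulting powers of $b^{-1}$ (or exponentials) are absorbed by Assumptions \ref{ass.O'}(i)/\ref{ass.S'}(i). I would handle this with the event $\{\sup_{|t|\le c_0/b}|\hat f^{\mathrm{ft}}_\epsilon(t)-f^{\mathrm{ft}}_\epsilon(t)|\le \tfrac12\inf_{|t|\le c_0/b}|f^{\mathrm{ft}}_\epsilon(t)|\}$, whose probability tends to one under those conditions. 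Finally, $S(\xi,\theta_0,\sigma^2_\ast)\neq0$ on a positive-measure subset of $\Pi$ follows as in Theorem \ref{theorem.known under H1}. Under $H_1$, $\mathbb E[U^2-\sigma_\ast^2\mid X]$ is a nonzero function, so its Fourier transform against $f_X$ is nonzero and continuous (analytic if moments permit). It therefore cannot vanish on a set of positive measure, by the ICM characterization of \cite{bierens1990consistent}.
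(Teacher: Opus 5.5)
\textbf{Gap: the Fourier-domain bound on the kernel perturbation.} Your route differs from the paper's. You compare $\hat S_n$ with the known-kernel statistic $S_n$ at the same $(\hat\theta_n,\hat\sigma^2_n)$, reuse Theorem~\ref{theorem.known under H1} for the second bracket, and control the kernel perturbation with a crude $o_p(1)$ bound that Assumptions~\ref{ass.O'}(i)/\ref{ass.S'}(i) absorb. That would be a simpler path to consistency, but the step that bounds the first bracket fails under the stated assumptions.

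Apart from the harmless constant part, that bracket contains
$\frac1n\sum_i Y_i\int g(x;\hat\theta_n)\bigl[\hat{\mathcal K}_b-\mathcal K_b\bigr]\bigl(\tfrac{x-W_i}{b}\bigr)\ee^{\ii x\xi}\,dx$
and the analogous $g^2$ term. Your bound
\[
\frac{\Vert K^{\mathrm{ft}}\Vert_\infty}{2\pi}\sup_{|s|\le c_0/b}\Bigl|\frac{1}{\hat f^{\mathrm{ft}}_\epsilon(s)}-\frac{1}{f^{\mathrm{ft}}_\epsilon(s)}\Bigr|\int\bigl|g^{\mathrm{ft}}(\xi-s;\hat\theta_n)\bigr|\,ds
\]
requires $g(\cdot;\theta)$ and $g^2(\cdot;\theta)$ to have integrable Fourier transforms, uniformly near $\theta_0$ because $\hat\theta_n$ sits inside. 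Neither Assumption~\ref{ass.O}(i) nor \ref{ass.S}(i) gives this.
\begin{itemize}
\item Assumption~\ref{ass.S}(i) imposes only infinite differentiability. The paper explicitly intends polynomial, exponential and trigonometric regression functions there, and its simulations and both applications use a linear $g$. For linear $g$, $g^{\mathrm{ft}}$ is a derivative of a point mass; for $\ee^{x}$ it is not even a tempered distribution.
\item Integrable derivatives as in \ref{ass.O}(i) do not help either: a logistic $g$ has an atom and a $1/t$ singularity at the origin.
\end{itemize}
For polynomial $g$ the relevant quantity is
\[
\int x^k\bigl[\hat{\mathcal K}_b-\mathcal K_b\bigr]\bigl(\tfrac{x-W_i}{b}\bigr)\ee^{\ii x\xi}\,dx=(-\ii\partial_\xi)^k\bigl\{\ee^{\ii\xi W_i}K^{\mathrm{ft}}(b\xi)\bigl[1/\hat f^{\mathrm{ft}}_\epsilon(\xi)-1/f^{\mathrm{ft}}_\epsilon(\xi)\bigr]\bigr\}.
\]
So you need uniform control of \emph{derivatives} of the empirical characteristic function of $W-W^r$; this is what the $(p+1)(2+\zeta)$ moments in Assumption~\ref{ass.D'}(ii) are for. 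A sup-bound on $|\hat f^{\mathrm{ft}}_\epsilon-f^{\mathrm{ft}}_\epsilon|$ does not deliver it. Since you derive $\hat\sigma^2_n\to\sigma^2_\ast$ from the same bound at $\xi=0$, that step inherits the gap.

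\textbf{How the paper avoids it, and how to repair your argument.} The paper never uses $g^{\mathrm{ft}}$ at the sample level. Its omitted proof mirrors Theorem~\ref{theorem.known under H1}:
\[
\hat S_n(\xi,\hat\theta_n,\hat\sigma^2_n)=\hat S_n(\xi,\theta_0,\sigma^2_\ast)+\hat S_{n1}(\xi,\hat\theta_n)-(\hat\sigma^2_n-\sigma^2_\ast)\hat G_n(\xi),
\]
with Lemmas~\ref{lemma.main term unknown}, \ref{lemma.negligible term unknown} and \ref{lemma.Gn unknown}. In Lemma~\ref{lemma.main term unknown} the perturbation is split in two.
\begin{itemize}
\item $\mathcal K_{\epsilon,1}$ is linear in $(\hat f^{\mathrm{ft}}_\epsilon)^2-(f^{\mathrm{ft}}_\epsilon)^2$ and is treated as a U-statistic; its H\'{a}jek projection produces the extra terms of $r^{\epsilon,os}_\infty$.
\item $\mathcal K_{\epsilon,2}$ is handled by Taylor-expanding $H(Y_i,W_i+bx)$ in $x$, so only the kernel moments of Lemmas~\ref{lemma.power of tsf kernel}--\ref{lemma.power of decon} and moments of $H^{(l)}(Y,W)$ enter.
\end{itemize}
To repair your proof, you have two options.
\begin{itemize}
\item Add $\sup_\theta\bigl(\Vert g^{\mathrm{ft}}(\cdot;\theta)\Vert_{L^1}+\Vert (g^2)^{\mathrm{ft}}(\cdot;\theta)\Vert_{L^1}\bigr)<\infty$ as an assumption; this excludes the paper's own examples.
\item Replace the Fourier-domain step by the $x$-domain expansion. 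Your $o_p(1)$ target then only needs $\int|bu|^l\,|\hat{\mathcal K}_\epsilon(u)-\mathcal K_\epsilon(u)|\,du=o_p(1)$ for the relevant $l$. This again rests on uniform convergence of the empirical characteristic function and its derivatives on $|t|\le c_0/b$.
\end{itemize}

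\textbf{Secondary point: non-vanishing of the limit.} Your argument that $S(\cdot,\theta_0,\sigma^2_\ast)\neq0$ on a positive-measure subset of $\Pi$ is incomplete. Continuity plus non-vanishing somewhere on $\mathbb R$ does not stop the transform from vanishing on all of the compact $\Pi$. You need analyticity (e.g.\ an exponential moment of $X$) or a bounded transformation of $X$ as in Bierens. The paper also only asserts this point.
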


Theorem \ref{theorem.unknown under H1} shows that $\sqrt{n}\hat{S}_{n}(\xi,\hat{\theta}_n,\hat{\sigma}^2_n)$ diverges to infinity as $n\to\infty$ under $H_1$, %when the regressor $X$ and error $U$ are correlated, 
thereby confirming that the consistency of the tests is not affected by the absence of information about the measurement error distribution. To summarize, when the measurement error distribution is unknown, the convergence rates of the test statistics under the null, as well as their local and global powers, remain unchanged, thereby facilitating the bootstrap procedure developed in Section \ref{sec.boot}.

\section{Multiplier Bootstrap}\label{sec.boot}
Theorems in Sections \ref{sec.Asy} and \ref{sec.Unknown} demonstrate that the limiting processes in both cases are case-dependent; specifically, the variance functions depend on the underlying data distribution, which motivates the use of a bootstrap method. In this section, we derive the corresponding test statistics for implementing a computationally attractive multiplier bootstrap and obtain the critical values. 

In classical bootstrap methods (see, for example, \cite{koul1994bootstrapping}), researchers typically compute residuals from observed data and estimate their joint population distribution. Bootstrap samples are then generated by resampling from the estimated distribution, and the bootstrap test statistics are constructed in the same manner as the original statistics. This so-called residual-based bootstrap approach, however, faces substantial challenges in the presence of measurement error. As noted in \cite{dong2022nonparametric}, the unavailability of the true regressors prevents direct estimation of the residual distribution. Consequently, researchers have resorted to deconvolution kernel techniques, as suggested in \cite{hall2007testing}. Nonetheless, such methods suffer from slow convergence rates, difficulties in selecting tuning parameters, and high computational costs. These limitations motivate the development of alternative bootstrap procedures. Inspired by \cite{van1996weak}, we compute the corresponding term from each observation and multiply each of these terms by an independent mean-zero, unit-variance random variable before summing them to obtain the bootstrap version of the statistics. By performing $B$ rounds of multiplier resampling, we obtain $B$ bootstrap versions of the test statistics. 

Unfortunately, the application of the multiplier bootstrap is complicated by the \textquotedblleft parameter estimation effect\textquotedblright, as noted in \cite{durbin1973distribution}. Specifically, the previously proposed empirical process for the presence of measurement error can be rewritten as
\begin{align*}
    &S_n(\xi,\theta_n,\sigma^2_n) \\
    = & S_n(\xi,\theta_0,\sigma^2_0)+\frac{1}{n}\sum_{i=1}^n \int\left[\left(Y_i-g(x;\theta_n)\right)^2-\left(Y_i-g(x;\theta_0)\right)^2\right]\mathcal{K}_b\left(\frac{x-W_i}{b}\right)\ee^{\ii x\xi}\,dx\\
    &-\left(\sigma_n^2-\sigma_0^2\right)\left[\frac{1}{n}\sum_{i=1}^n \int\mathcal{K}_b\left(\frac{x-W_i}{b}\right)\ee^{\ii x\xi}\,dx\right]
\end{align*}
and a similar formulation for the case where the measurement error distribution is unknown. The first term, which corresponds to the empirical process constructed using the true parameter values, is asymptotically equivalent to the leading term in each case. As for the second term, we claim that it converges to zero at $\sqrt{n}$-rate in probability, under reasonable assumptions on the structural functions, kernels and bandwidth (as discussed in Sections \ref{sec.Asy} and \ref{sec.Unknown}), and by employing the parameter estimation method satisfying the conditions in Assumption \ref{ass.D}({\romannumeral2}). That is, although estimation uncertainty for $\theta_0$ exists, it does not affect the asymptotic behavior of the test statistics. The third term, commonly referred to as \textquotedblleft parameter estimation effect\textquotedblright, converges to the limiting process at rate $O_p(\sigma_n^2-\sigma_0^2)$ in the original statistic but at a slower rate during the bootstrap procedure. Henceforth, the notation $O_p(\sigma_n^2-\sigma_0^2)$ denotes the convergence rate in probability of the stochastic process to its limiting distribution, which is the same as the rate at which $\sigma^2_n$ converges in probability to $\sigma_0^2$. We address this issue by modifying the weight function $\ee^{\ii X\xi}$ in \eqref{stat.Population} to its mean-centered version in the construction of the bootstrap version of statistics, that is, subtracting the sample analog of its expectation,
\begin{align}\label{stat.known boot}
    S_n^{\ast}(\xi,\theta_n,\sigma^2_n) = \frac{1}{n}\sum_{i=1}^n V_i\int\left[\left(Y_i-g(x;\theta_n)\right)^2-\sigma_n^2\right]\mathcal{K}_b\left(\frac{x-W_i}{b}\right)\mathcal{P}_n(x;\xi)\,dx,
\end{align}
where %\textcolor{red}{$e-G_n$}
\begin{align}\label{stat.known projection}
    \mathcal{P}_n(x;\xi) = \ee^{\ii x\xi} -  \frac{1}{n}\sum_{i=1}^n \int\mathcal{K}_b\left(\frac{x-W_i}{b}\right)\ee^{\ii x\xi}\,dx,
\end{align}
and $\{V_i\}_{i=1}^n$ is a sequence of i.i.d. random variables with mean zero, variance one, bounded support, and independent of the sample $\{(Y_i,W_i)'\}_{i=1}^n$, e.g., \cite{mammen1993bootstrap}'s two-point distribution: $$
V_i=\left\{\begin{array}{ll}
	(1-\sqrt{5}) / 2, & \text { with probability }(\sqrt{5}+1) / 2 \sqrt{5}, \\
	(1+\sqrt{5}) / 2, & \text { with probability }(\sqrt{5}-1) / 2 \sqrt{5}.
\end{array}\right.
$$ 

The above construction ensures that $(\sigma_n^2-\sigma_0^2)\mathbb{E}(\ee^{\ii X\xi}-\mathbb{E}\ee^{\ii X\xi}) = 0$, which in turn asymptotically eliminates the influence of the \textquotedblleft parameter estimation effect\textquotedblright arising from $\sigma_n^2$. As a result, the bootstrap versions of empirical processes $S_n^{\ast}(\cdot,\theta_n,\sigma^2_n)$ converge to the same limiting processes as the original counterparts $S_n(\cdot,\theta_n,\sigma^2_n)$ for both cases under the null hypothesis, as shown in Theorem \ref{theorem.known boot}. In the bootstrap procedure, the test statistics $KS^{\ast}_n$ and $CvM_n^{\ast}$ can be constructed in a completely analogous way to those in Section \ref{sec.Asy}, based on the proposed $S_{n}^{\ast}(\cdot,\theta_n,\sigma^2_n)$. Specifically, we replace the original process $S_{n}(\cdot,\theta_n,\sigma^2_n)$ in $KS_n$ and $CvM_n$ with its bootstrap counterpart $S_{n}^{\ast}(\cdot,\theta_n,\sigma^2_n)$ and compute the sup norm and the squared $L_2$-norm accordingly. 

Defining \textquotedblleft $\overset{\ast}{\Longrightarrow}$\textquotedblright{} as weak convergence and $\mathbb{P}_n^\ast$ as the bootstrap probability under the bootstrap law, i.e., conditional on the original sample, see, e.g., Section 2.9 of \cite{van1996weak}, the validity of the proposed multiplier bootstrap is formally established by the following theorem.
\begin{theorem}\label{theorem.known boot}
    Under $H_0$ in \eqref{hyp.H0} or $H_{1n}$ in \eqref{hyp.H1n},
    \begin{align}\label{theorem.known boot ordi}
        \sqrt n S_{n}^{\ast}(\cdot,\theta_n,\sigma^2_n)\overset{\ast}{\Longrightarrow}S_{\infty}^{os}(\cdot,\theta_0,\sigma^2_0)
    \end{align}
    holds for the ordinary smooth case under Assumptions \ref{ass.D} and \ref{ass.O}, and
    \begin{align}\label{theorem.known boot super}
        \sqrt n S_{n}^{\ast}(\cdot,\theta_n,\sigma^2_n)\overset{\ast}{\Longrightarrow}S_{\infty}^{ss}(\cdot,\theta_0,\sigma^2_0)
    \end{align}
    holds for the supersmooth case under Assumptions \ref{ass.D} and \ref{ass.S}, where $S^{os}_\infty(\cdot,\theta_0,\sigma^2_0)$ and $S^{ss}_\infty(\cdot,\theta_0,\sigma^2_0)$ are the centered Gaussian processes as defined in Theorems \ref{theorem.known ordinary smooth under H0} and \ref{theorem.known supersmooth under H0}, respectively.
\end{theorem}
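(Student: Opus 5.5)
The bootstrap statement is conditional on the data, so I will prove it by reducing $\sqrt n S_n^\ast(\cdot,\theta_n,\sigma^2_n)$ to a multiplier empirical process built from the same influence functions $r^{os}_\infty$ (resp. $r^{ss}_\infty$) that appear in Theorems \ref{theorem.known ordinary smooth under H0} and \ref{theorem.known supersmooth under H0}. Then I will apply the conditional multiplier central limit theorem (Theorem 2.9.6/2.9.7 in \cite{van1996weak}).

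\textbf{Rewriting the projection.} Write $\hat\phi_n(\xi)=n^{-1}\sum_j\int \mathcal{K}_b((x-W_j)/b)\ee^{\ii x\xi}dx$. This is the deconvolution estimator of $f_X^{\mathrm{ft}}(\xi)$, and in fact it equals $n^{-1}\sum_j \ee^{\ii W_j\xi}K^{\mathrm{ft}}(b\xi)/f_\epsilon^{\mathrm{ft}}(\xi)$. On the compact set $\Pi$ and as $b\to0$, it converges uniformly in probability to $f_X^{\mathrm{ft}}(\xi)$ at rate $O_p(n^{-1/2})$. Because $\mathcal{P}_n(x;\xi)=\ee^{\ii x\xi}-\hat\phi_n(\xi)$, I get
\[
\sqrt n S_n^\ast(\xi)=\frac{1}{\sqrt n}\sum_i V_i\left[A_i(\xi;\theta_n,\sigma_n^2)-\hat\phi_n(\xi)A_i(0;\theta_n,\sigma_n^2)\right],
\]
where $A_i(\xi;\theta,\sigma^2)=\int[(Y_i-g(x;\theta))^2-\sigma^2]\mathcal{K}_b((x-W_i)/b)\ee^{\ii x\xi}dx$.

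\textbf{Step 1: closed form of $A_i$.} Using the Fourier computation already carried out for Theorem \ref{theorem.known ordinary smooth under H0}, namely the identity $1/f_\epsilon^{\mathrm{ft}}(t)=\sum_h c_h^{os}t^h$ and integration by parts against the higher-order kernel, I will show that
\[
\sup_{\xi\in\Pi}\Big|A_i(\xi;\theta_0,\sigma_0^2)-\ee^{\ii W_i\xi}\sum_{l}c_l^{os}(\xi)[(Y_i-g(W_i;\theta_0))^2-\sigma_0^2]^{(l)}\Big|
\]
is a remainder of order $b^p$ times an integrable envelope. In the supersmooth case the same holds with $c_l^{ss}$ and an arbitrary power of $b$, by Assumption \ref{ass.S}. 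Multiplying by bounded $V_i$ and summing, this remainder contributes $O_p(\sqrt n b^p)=o_p(1)$ conditionally, by Assumption \ref{ass.O}(iv).

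\textbf{Step 2: replacing $(\theta_n,\sigma_n^2,\hat\phi_n)$ by their limits.} I expand $A_i(\xi;\theta_n,\sigma_n^2)-A_i(\xi;\theta_0,\sigma_0^2)$ as
\[
(\theta_n-\theta_0)^\top B_i(\xi)+O(|\theta_n-\theta_0|^2)-(\sigma_n^2-\sigma_0^2)\int\mathcal{K}_b((x-W_i)/b)\ee^{\ii x\xi}dx .
\]
Each piece is multiplied by the mean-zero, independent $V_i$. So $n^{-1/2}\sum_iV_iB_i(\xi)=O_{p^\ast}(1)$ uniformly in $\xi$, using the moment bounds in Assumption \ref{ass.O}(v) for the $h$ functions that involve $\partial g/\partial\theta$. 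Combined with $\theta_n-\theta_0=O_p(n^{\delta-1/2})$, this term is negligible. The quadratic term is $O_p(\sqrt n\, n^{2\delta-1})=o_p(1)$ since $\delta<1/4$.

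The $\sigma_n^2$ term is the parameter estimation effect. Its multiplier sum is $O_{p^\ast}(1)$ and $\sigma_n^2-\sigma_0^2=o_p(1)$, so it vanishes. The centering in $\mathcal{P}_n$ guarantees that this works cleanly, because the $\xi$ and $0$ pieces cancel in expectation. Finally, replacing $\hat\phi_n(\xi)$ by $f_X^{\mathrm{ft}}(\xi)$ costs $\sup_\xi|\hat\phi_n-f_X^{\mathrm{ft}}|\cdot|n^{-1/2}\sum_iV_iA_i(0)|=o_p(1)$.

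After these steps,
\[
\sqrt nS_n^\ast=n^{-1/2}\sum_iV_i\,r^{os}_\infty(Y_i,W_i;\cdot,\theta_0,\sigma_0^2)+o_{p^\ast}(1),
\]
where the remainder is uniform on $\Pi$ in outer probability. Under $H_{1n}$ the same reduction goes through. The drift $\Delta/\sqrt n$ only perturbs the summands by $O(n^{-1/2})$, and that perturbation is killed by the mean-zero $V_i$. This is why no shift $\mu_\Delta$ appears in the bootstrap limit.

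\textbf{Step 3: conditional multiplier CLT.} The class $\{r^{os}_\infty(\cdot;\xi):\xi\in\Pi\}$ is Lipschitz in $\xi$ on compact $\Pi$. Its envelope is square integrable by Assumption \ref{ass.O}(v), so it has a finite bracketing integral and is Donsker. The conditional multiplier theorem then gives convergence, conditional on the data, to the Gaussian process with covariance $\mathbb{E}[r_\infty(\xi_1)r_\infty(\xi_2)]$, i.e. to $S^{os}_\infty$. The supersmooth case follows identically.

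\textbf{Main obstacle.} I expect the hardest part to be Step 2's uniform-in-$\xi$ control of the $\theta_n$ terms at the slow rate $n^{\delta-1/2}$. This requires a stochastic equicontinuity bound, conditional on the data, for the multiplier process indexed by $(\xi,\theta)$ in a shrinking neighbourhood. I will try to obtain it by a bracketing maximal inequality, using the uniform-in-$\theta$ smoothness built into Assumption \ref{ass.O}(i).
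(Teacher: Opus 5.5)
Your decomposition $\sqrt n S_n^\ast(\xi)=n^{-1/2}\sum_iV_i[A_i(\xi)-\hat\phi_n(\xi)A_i(0)]$ is exactly the paper's, with $\hat\phi_n=G_n$. The remaining steps follow its proof:
\begin{itemize}
\item the $\theta_n$-piece is negligible;
\item the $\sigma_n^2$-piece vanishes because $\sigma_n^2-\sigma_0^2=O_p(n^{-1/2})$ and $\sup_\xi|G_n^\ast(\xi)|=O_p(n^{-1/2})$ (this is your $o_p(1)\cdot O_{p^\ast}(1)$ argument);
\item $G_n$ is replaced by $f_X^{\mathrm{ft}}$;
\item the reduction to $n^{-1/2}\sum_iV_i\,r^{os}_\infty(Y_i,W_i;\xi,\theta_0,\sigma_0^2)$ is closed with a multiplier CLT, which you state more explicitly than the paper does.
\end{itemize}

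\textbf{The gap: Step 1's bound fails as written.} The per-observation remainder $R_i(\xi)=A_i(\xi;\theta_0,\sigma_0^2)-\ee^{\ii W_i\xi}\sum_lc_l^{os}(\xi)H^{(l)}(Y_i,W_i)$ is not $O(b^p)$. After your integration by parts it equals $\ee^{\ii W_i\xi}\sum_{l\le\alpha}c_l^{os}(\xi)\int[\phi_l(bu)-\phi_l(0)]K(u)\,du$, with $\phi_l(v)=H^{(l)}(Y_i,W_i+v)\ee^{\ii v\xi}$. Under Assumption \ref{ass.O}(i), $\phi_l$ has only $p-l$ derivatives, so in general one only gets $O(b^{p-\alpha+1})$. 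The paper's own second-moment bound in the proof of Lemma \ref{lemma.main term known} is $O(b^{2(p-\alpha)})$.

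What is $O(b^p)$ is the mean $\mathbb{E}R_i(\xi)$. This is because $\mathbb{E}[\mathcal{K}_b((x-W)/b)\mid X]=K_b((x-X)/b)$ carries no deconvolution inflation, and it is this bias that $nb^{2p}\to0$ controls in Theorem \ref{theorem.known ordinary smooth under H0}. Your triangle-inequality bound therefore gives only $O_p(\sqrt n\,b^{p-\alpha+1})$. That is not $o_p(1)$ under $nb^{2p}\to0$: take $\alpha=2$ and $b=n^{-1/(2p-1)}$. In the supersmooth case, Assumption \ref{ass.S}(iv) imposes no rate on $b$ at all, so ``$\sqrt n$ times an arbitrary power of $b$'' need not vanish either.

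\textbf{The repair.} Use the mechanism you already use in Step 2. Since the $V_i$ are centered and independent of the data, $\mathbb{E}^\ast|n^{-1/2}\sum_iV_iR_i(\xi)|^2=n^{-1}\sum_i|R_i(\xi)|^2=o_p(1)$ as soon as $b\to0$. Uniformity over $\Pi$ then follows from Dudley's bound. Bounded multipliers make this process conditionally sub-Gaussian in the metric $(n^{-1}\sum_i|R_i(\xi_1)-R_i(\xi_2)|^2)^{1/2}$. That metric is dominated by $|\xi_1-\xi_2|$ through the Lipschitz-in-$\xi$ structure you invoke in Step 3. This is the variance route of Lemma \ref{lemma.main term known}, and it also shows the bootstrap statement needs no undersmoothing.

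\textbf{A smaller interpretive point.} The centering in $\mathcal{P}_n$ is not what removes the $\sigma_n^2$-term; your $o_p(1)\cdot O_{p^\ast}(1)$ argument removes it without any centering. Its actual job is to produce the $-f_X^{\mathrm{ft}}(\xi)\sum_lc_l^{os}(0)[\cdot]^{(l)}$ part of $r^{os}_\infty$, i.e.\ to reproduce the estimation effect present in $\sqrt nS_n$. Your final reduction does contain this piece. The uncentered multiplier sum has the correct conditional limit because $\mathbb{E}\,r^{os}_\infty(Y,W;\xi,\theta_0,\sigma_0^2)=0$ under $H_0$, and this mean is $O(n^{-1/2})$ under $H_{1n}$.
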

The centering adjustment of the weighting function does not affect the parametric-rate convergence of the bootstrap version of the empirical process to the limiting process. Moreover, the bootstrap version $S_{n}^{\ast}(\cdot,\theta_n,\sigma^2_n)$ converges to the same limiting process as that of the original test process $S_{n}(\cdot,\theta_n,\sigma^2_n)$ under the null hypothesis, for both the ordinary smooth and supersmooth cases, as established in Section \ref{sec.Asy}. Notably, the bootstrap limiting distributions for $KS^{\ast}_n$ and $CvM_n^{\ast}$ remain unchanged under both the null and local alternative hypothesis\footnote{The convergence results stated here follow directly from the continuous mapping theorem.}. By contrast, the limiting distribution of the test statistic under local alternatives includes an additional deterministic shift, which shifts its value away from the null distribution, making it more likely to exceed the bootstrap critical value and thus leading to the rejection of the null hypothesis. This ensures the validity of the bootstrap approach with respect to both size control and nontrivial local power. As a consequence of the above analysis, the asymptotic critical value at the significance level $\alpha$ is $c^\ast_{\alpha}=\inf\{c_\alpha\in[0,\infty):\lim_{n\to\infty}\mathbb{P}_n^\ast\{KS_n^\ast>c_\alpha\}=\alpha\}$, where we take $KS_n^\ast$ as an example and we also note that the bootstrap procedure for $CvM_n$ can be implemented in an analogous way. In practice, $c^\ast_{\alpha}$ can be approximated as $c^\ast_{n,\alpha} = \{KS_n^\ast\}_{B(1-\alpha)}$, the $B(1-\alpha)$-th order statistic for $B$ replicates $\{KS_n^\ast\}_{b=1}^B$ and we reject $H_0$ if $KS_n>c^\ast_{n,\alpha}$.

For the case where the measurement error distribution is unknown, empirical processes constructed in the manner above are not valid, as they fail to account for the additional terms arising from the estimation of the unknown characteristic function. Specifically, in the ordinary smooth case, directly replacing the deconvolution kernel in $S_{n}^{\ast}(\cdot,\theta_n,\sigma^2_n)$ by $\hat{\mathcal{K}}_b(\cdot)$ causes the zero-mean multipliers to eliminate the estimation effect brought by $\hat{f}^{\mathrm{ft}}_\epsilon(\cdot)$. Consequently, the bootstrapped empirical process converges to $S_\infty^{os}(\cdot,\theta_0,\sigma_0^2)$ rather than to the expected null limit $\hat{S}_\infty^{os}(\cdot,\theta_0,\sigma_0^2)$. This discrepancy invalidates the bootstrap approximation, and a similar issue persists in the supersmooth case. Fortunately, motivated by \cite{dong2022nonparametric}, we can perturb the estimator of the characteristic function using a set of multipliers with unit mean and unit variance. Specifically, we introduce a new sequence of multipliers $\{V^\ast_i\}_{i=1}^n$ satisfying $\mathbb{E}(V_i^\ast)=1$ and $\text{Var}(V_i^\ast)=1$ (e.g., the standard exponential distribution) and use them to construct perturbed analogs of the characteristic function estimator and the deconvolution kernel as in equation \eqref{stat.ME density},
\begin{align*}
    &\hat{f}^{\mathrm{ft}\ast}_\epsilon(t) = \left\vert\frac{1}{n}\sum\limits_{i=1}^nV_j^\ast\cos\left[t(W_i-W_i^r)\right]\right\vert^{1/2},\quad \hat{\mathcal{K}}^\ast_b(x) = \frac{1}{2\pi b}\int \ee^{-\ii tx}\frac{K^{\mathrm{ft}}(t)}{\hat{f}^{\mathrm{ft}\ast}_\epsilon(t/b)}\,dt.
\end{align*}
The resulting kernel estimator $\hat{\mathcal{K}}^\ast_b(\cdot)$ is then substituted into equations \eqref{stat.var unknown} and \eqref{stat.known projection} to obtain the bootstrap version of the variance estimator ${\hat{\sigma}_n^2}{}^\ast$ and the bootstrap empirical process, as defined by
\begin{align*}
    {\hat{\sigma}_n^2}{}^\ast = \frac{1}{n}\sum_{i=1}^n\int\left(Y_i-g(x;\hat{\theta}_n)\right)^2\hat{\mathcal{K}}^\ast_b\left(\frac{x-W_i}{b}\right)\,dx
\end{align*}
and
\begin{align}\label{stat.unknown boot}
    \hat{S}_n^\ast(\xi,\hat{\theta}_n,{\hat{\sigma}_n^2}{}^\ast) = \frac{1}{n}\sum_{i=1}^n\int\left[\left(Y_i-g(x;\hat{\theta}_n)\right)^2-{\hat{\sigma}_n^2}{}^\ast\right]\hat{\mathcal{K}}_b^\ast\left(\frac{x-W_i}{b}\right)\ee^{\ii x\xi}\,dx.
\end{align}
The following theorem confirms the validity of the bootstrap procedure even in the absence of information about the measurement error distribution, as in the case with a known measurement error distribution.
\begin{theorem}\label{theorem.unknown boot}
    Suppose Assumptions \ref{ass.D} and \ref{ass.D'} hold. %Under both the null and local alternative hypotheses, 
    Under $H_0$ in \eqref{hyp.H0} or $H_{1n}$ in \eqref{hyp.H1n},
    \begin{align}\label{theorem.unknown boot ordi}
        \sqrt n \left(\hat{S}^{\ast}_{n}(\cdot,\hat\theta_n,{\hat{\sigma}_n^2}{}^\ast)-\mathbb{E}\left[\hat{S}^{\ast}_{n}(\cdot,\hat\theta_n,{\hat{\sigma}_n^2}{}^\ast)\right]\right)\overset{\ast}{\Longrightarrow}\hat{S}_{\infty}^{os}(\cdot,\theta_0,\sigma_0^2)
    \end{align}
    holds for the ordinary smooth case under Assumptions \ref{ass.O} and \ref{ass.O'}, and
    \begin{align}\label{theorem.unknown boot super}
        \sqrt n \left(\hat{S}^{\ast}_{n}(\cdot,\hat\theta_n,{\hat{\sigma}_n^2}{}^\ast)-\mathbb{E}\left[\hat{S}^{\ast}_{n}(\cdot,\hat\theta_n,{\hat{\sigma}_n^2}{}^\ast)\right]\right)\overset{\ast}{\Longrightarrow}\hat{S}_{\infty}^{ss}(\cdot,\theta_0,\sigma_0^2)
    \end{align}
    holds for the supersmooth case under Assumptions \ref{ass.S} and \ref{ass.S'}, where $\hat{S}^{os}_\infty(\cdot,\theta_0,\sigma_0^2)$ and $\hat{S}^{ss}_\infty(\cdot,\theta_0,\sigma_0^2)$ are the centered Gaussian processes as defined in Theorems \ref{theorem.unknown ordinary smooth under H0} and \ref{theorem.unknown supersmooth under H0}, respectively.
\end{theorem}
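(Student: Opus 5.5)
The plan is to reduce the bootstrap process to a linear functional of the multiplier perturbation $\hat f^{\mathrm{ft}\ast}_\epsilon-\hat f^{\mathrm{ft}}_\epsilon$, plus a term that reproduces the randomness of the original sample, and then to apply a conditional multiplier central limit theorem.

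\textbf{Step 1: split the perturbation.} Write
\[
\frac{1}{n}\sum_i V_i^\ast\cos[t(W_i-W_i^r)]=\Big(\frac{1}{n}\sum_i\cos[t(W_i-W_i^r)]\Big)+\frac{1}{n}\sum_i(V_i^\ast-1)\cos[t(W_i-W_i^r)].
\]
Conditional on the data, the second piece is a centered multiplier average. The perturbed kernel $\hat{\mathcal K}_b^\ast$ then differs from $\hat{\mathcal K}_b$ by a term linear in this centered average, plus a quadratic remainder. This is the same first-order expansion of $1/\hat f^{\mathrm{ft}}_\epsilon$ around $1/f^{\mathrm{ft}}_\epsilon$ used in proving Theorems \ref{theorem.unknown ordinary smooth under H0} and \ref{theorem.unknown supersmooth under H0}. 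Those proofs produce the $\Pi_\epsilon$ terms in $r^{\epsilon,os}_\infty$ and $r^{\epsilon,ss}_\infty$.

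\textbf{Step 2: expand $\hat S_n^\ast$.} Substitute the expansion into $\hat S_n^\ast$ and ${\hat\sigma_n^2}{}^\ast$, and subtract the bootstrap expectation $\mathbb E[\hat S_n^\ast]$. This expectation absorbs the part that does not depend on $V_i^\ast-1$, which is essentially $\hat S_n(\cdot,\hat\theta_n,\hat\sigma_n^2)$. What remains should be asymptotically
\[
\frac{1}{n}\sum_i (V_i^\ast-1)\,\hat r_{n,i}(\xi),
\]
where $\hat r_{n,i}$ is the sample analog of $\hat r^{os}_\infty(Y_i,W_i,W_i^r;\xi,\hat\theta_n,\hat\sigma_n^2)$ (respectively $\hat r^{ss}_\infty$). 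The $\Pi_\epsilon$-part arises directly from the linear term. The deconvolution part $e^{\ii W\xi}\sum_l c_l(\xi)[\cdot]^{(l)}$ must also appear, and I would verify this carefully. The intended mechanism is that the perturbation enters both the empirical average over $i$ and the characteristic-function estimator with the same weights $V_i^\ast$. The centering term $-f_X^{\mathrm{ft}}(\xi)r^{\epsilon}(\cdot;0,\cdot)$ comes from perturbing ${\hat\sigma_n^2}{}^\ast$. To evaluate $\int[\cdot]\hat{\mathcal K}_b(\cdot)e^{\ii x\xi}dx$ as $b\to0$, I would reuse the kernel/Taylor computations behind Theorems \ref{theorem.known ordinary smooth under H0} and \ref{theorem.known supersmooth under H0}. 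These give the $c_l^{os}$ and $c_l^{ss}$ expansions under Assumption \ref{ass.O}(iii)--(iv) or \ref{ass.S}(iii).

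\textbf{Step 3: remainders.} I would show that the quadratic remainder is $o_{p^\ast}(n^{-1/2})$ uniformly in $\xi\in\Pi$. Uniform rates for $\sup_{|t|\le c_0/b}|\hat f^{\mathrm{ft}\ast}_\epsilon(t)-\hat f^{\mathrm{ft}}_\epsilon(t)|$ of order $\sqrt{\log(1/b)/n}$ follow as in \cite{kurisu2022uniform}, using the moment condition in Assumption \ref{ass.D'}(ii). Divided by $\inf_{|t|\le c_0/b}|f^{\mathrm{ft}}_\epsilon(t)|^3$, this is negligible exactly under the lower bandwidth bounds in Assumption \ref{ass.O'}(i) or \ref{ass.S'}(i). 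Replacing $\hat\theta_n$ by $\theta_0$ costs $o_p(1)$ by Assumption \ref{ass.D}(ii) and the smoothness in $\theta$. The same holds for the $O_p(n^{-1/2})$ discrepancy between $\hat\sigma_n^2$ and $\sigma_0^2$. Under $H_{1n}$ the only change is an $O(n^{-1/2})$ shift in the marks, which is killed by the centered multipliers, so the limit is the same as under $H_0$.

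\textbf{Step 4: conditional CLT.} Apply the conditional multiplier CLT of \cite{van1996weak} (Section 2.9) to $n^{-1/2}\sum_i(V_i^\ast-1)\hat r_{n,i}(\cdot)$. Covariance convergence follows from the law of large numbers together with Assumption \ref{ass.O'}(ii) or \ref{ass.S'}(ii). Conditional tightness follows from a Lipschitz-in-$\xi$ bound on $\Pi$ with a square-integrable envelope.

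The main obstacle is Step 2. One must check that centering by the bootstrap mean leaves exactly $\hat r_\infty$, with both the deconvolution component and the $\Pi_\epsilon$ component, and that the linearization of $|\cdot|^{1/2}$ and $1/\hat f^{\mathrm{ft}\ast}$ remains uniformly valid on $|t|\le c_0/b$. I would first establish this pointwise in $\xi$, then upgrade to uniformity over $\Pi$.
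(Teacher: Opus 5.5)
There is a genuine gap in Step 2. The mechanism you invoke does not exist in the statistic. In \eqref{stat.unknown boot} the multipliers enter \emph{only} through $\hat f^{\mathrm{ft}\ast}_\epsilon$, and hence through $\hat{\mathcal K}^\ast_b$ and ${\hat\sigma_n^2}{}^\ast$; the outer average over $i$ is unweighted. Write $\hat r=A+B$, where $A$ is the deconvolution part $\ee^{\ii W\xi}\sum_l c_l^{os}(\xi)[(Y-g(W;\theta_0))^2-\sigma_0^2]^{(l)}$ minus $f_X^{\mathrm{ft}}(\xi)$ times its value at $\xi=0$, and $B$ is the $\Pi_\epsilon$-part.

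Linearizing $\hat{\mathcal K}^\ast_b-\hat{\mathcal K}_b$ in $\frac1n\sum_j(V_j^\ast-1)\zeta_j(\cdot)$ gives a term $\frac1n\sum_j(V_j^\ast-1)\,\frac1n\sum_i q(D_i,D_j;\xi)$, whose inner average converges to $\mathbb E[q(D,D_j;\xi)\mid D_j]$. By the computation in \eqref{lemmaproof.main term unknown first term}, that conditional mean produces only $B_j$. The contribution through ${\hat\sigma_n^2}{}^\ast$ likewise produces only the $\xi=0$ piece of $B_j$. $A$ never acquires a multiplier, because it depends on the data alone. With $\mathbb E[\cdot]$ read as the conditional bootstrap mean, as you read it, $A$ is removed entirely by the centering.

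Your Step 4 would then give a conditional Gaussian limit whose covariance is built from $B$ alone. That limit misses the variance of $A$ and the $A$--$B$ cross-covariances, so it is not $\hat S^{os}_\infty$ or $\hat S^{ss}_\infty$. The representation $\frac1n\sum_i(V_i^\ast-1)\hat r_{n,i}(\xi)$ holds only for a \emph{different} bootstrap, one in which the outer sum is also reweighted by the same mean-one $V_i^\ast$. That is not the statistic in the theorem.

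The paper's proof does the bookkeeping differently. It writes $\hat S^\ast_n(\xi,\theta_0,\sigma_0^2)=S_n(\xi,\theta_0,\sigma_0^2)+S^\ast_{n,1}(\xi,\theta_0,\sigma_0^2)+S^\ast_{n,2}(\xi,\theta_0,\sigma_0^2)$. It keeps the unperturbed $S_n(\xi,\theta_0,\sigma_0^2)$ as linearized in Lemma \ref{lemma.main term known}, and shows $S^\ast_{n,2}$ is negligible. It then linearizes $S^\ast_{n,1}$ by rerunning the U-statistic/H\'ajek-projection argument of Lemma \ref{lemma.main term unknown} with $\Pi_{\epsilon,j}$ replaced by $\Pi^\ast_{\epsilon,j}$. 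The leading term is $\frac1n\sum_i\hat r^\ast_{1,\infty}(D_i;\xi)$, in which $A$ carries no multiplier and only $B$ is multiplied by $V_i^\ast$.

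The centering $\mathbb E[\hat S^\ast_n]$ is read as a population mean, asymptotically $\mu_\Delta/\sqrt n$ under $H_{1n}$, so $A$ keeps its sampling variability in the limit. The covariance is then identified from the unit mean, unit variance and independence of $V^\ast$ jointly with the data, not from a conditional multiplier CLT. The convergence is therefore effectively unconditional, which is loose relative to the notation $\overset{\ast}{\Longrightarrow}$.

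To salvage your outline you must either adopt that reading and replace Step 4 by a joint (data-and-multiplier) limit argument, or change the bootstrap statistic. With conditional centering, Step 2 cannot be established.
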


Under the null or the local alternatives, the bootstrap versions of the empirical processes with a centering adjustment for the ordinary smooth and supersmooth cases converge to the same limiting processes $\hat{S}^{os}_\infty(\cdot,\theta_0,\sigma_0^2)$ and $\hat{S}^{ss}_\infty(\cdot,\theta_0,\sigma_0^2)$ as those of the original statistics constructed in Section \ref{sec.Unknown}, respectively. A key implication of the above theorem is that the bootstrap versions of statistics can be constructed based on centralized correction of $\hat{S}^{\ast}_{n}(\cdot,\hat\theta_n,{\hat{\sigma}_n^2}{}^\ast)$, 
\begin{align*}
&\widehat{KS}_n^\ast=\sqrt{n}\sup_{\xi\in\Pi}\left\vert\hat{S}^{\ast}_{n}(\xi,\hat\theta_n,{\hat{\sigma}_n^2}{}^\ast)-\mathbb{E}\left[\hat{S}^{\ast}_{n}(\xi,\hat\theta_n,{\hat{\sigma}_n^2}{}^\ast)\right]\right\vert,\\
&\widehat{CvM}_n^\ast=n
\int_\Pi \left\vert\hat{S}^{\ast}_{n}(\xi,\hat\theta_n,{\hat{\sigma}_n^2}{}^\ast)-\mathbb{E}\left[\hat{S}^{\ast}_{n}(\xi,\hat\theta_n,{\hat{\sigma}_n^2}{}^\ast)\right]\right\vert^2\,d\xi.
\end{align*}
For the same reasons discussed earlier, these bootstrap statistics converge, under both the null hypothesis and local alternatives, to the supremum norm and the squared $L_2$-norm of the corresponding limiting Gaussian processes $\hat{S}_{\infty}^{os}(\cdot,\theta_0,\sigma_0^2)$ for the ordinary smooth case and $\hat{S}_{\infty}^{ss}(\cdot,\theta_0,\sigma_0^2)$ for the supersmooth case, respectively. Combined with the fact established in Section \ref{sec.Unknown} that, under local alternatives, the limiting distributions of the statistics are perturbed by a deterministic shift function, this result ensures the validity of the proposed bootstrap procedure.

It is worth highlighting the projection-based approach proposed in \cite{sant2019specification}, \cite{yang2024model}, and \cite{song2025unified}, which intuitively eliminates the \textquotedblleft parameter estimation effect\textquotedblright{} term by the orthogonal projection of the weighting function onto the tangent space of nuisance parameters in probability space. From this perspective, the centering method employed in this paper corresponds to projecting the weighting function onto the constant function $1$. It is also worth noting that, compared with the wild bootstrap methods discussed in other studies, our approach offers the significant advantage of greater computational efficiency. Moreover, as will be discussed in detail in Section \ref{sec.Simulation}, the selection of tuning parameters such as the bandwidth does not pose practical difficulties when using the multiplier bootstrap.

\section{Simulation Study}\label{sec.Simulation}
In the simulations that follow, we aim to examine several key aspects. First, we evaluate whether the proposed tests, under both known and unknown measurement error distributions, maintain size close to the nominal level and exhibit reasonable power. Second, we investigate whether test performance differs substantially between the ordinary smooth and supersmooth cases. Third, we explore how different model specifications influence the validity of heteroskedasticity diagnostics. Finally, we assess whether the choice of bandwidth has a noticeable impact on the size and power characteristics of the tests. 

We start with a linear model specification, hereafter referred to as model $1$. In this setting, data are generated according to $Y=\alpha_0+\alpha_1X+\sigma U$, where we set $\alpha_0=\alpha_1=1$, and assume that both $X$ and $U$ follow independent normal distributions with mean zero and unit variance. Heteroskedasticity is introduced through $\sigma$. To investigate the performance of the proposed tests, we consider three distinct data-generating processes (DGPs) that represent both the null and alternative hypotheses. 
\begin{align*}
    &\text{DGP$(0)$: } \sigma^2(X_i) = 1,\\
    &\text{DGP$(1)$: } \sigma^2(X_i) = 1+\vert \cos \pi X_i\vert^2,\\
    &\text{DGP$(2)$: } \sigma^2(X_i) = 1 + \exp{\left\vert X_i\right\vert}.
\end{align*}
Specifically, DGP(0) corresponds to data generated under the null hypothesis, while DGP(1) and DGP(2) represent alternative scenarios with relatively high-frequency and low-frequency deviations from the null, respectively. In practical measurement error problems, the covariate $X$ is unobservable. We introduce additive noise $\epsilon\sim N(0,1/3)$, independent of $X$, such that the signal-to-noise ratio, defined as $Var(U)/Var(\epsilon)$, equals $3$. The observed contaminated variable is then given by $W=X + \epsilon$. We employ an infinite-order kernel defined by its Fourier transformation that is widely used in the measurement error literature and has been shown to be effective in a variety of applications (see, for example, \cite{mcmurry2004nonparametric}, \cite{dong2021average}, and \cite{dong2022nonparametric}):
\begin{equation*}
K^{\mathrm{ft}}(t) = \begin{cases} 
1 & \text{if } |t| \leq 0.05, \\
\exp \left\{ \frac{-\exp(-(|t|-0.05)^{-2})}{(|t|-1)^2} \right\} & \text{if } 0.05 < |t| < 1, \\
0 & \text{if } |t| \geq 1.
\end{cases}
\end{equation*}
We estimate the parameters $\alpha_0$ and $\alpha_1$ using the method proposed in \cite{cheng1998polynomial}. It is worth noting that under the null hypothesis, in which the model exhibits homoskedasticity, the estimators can be applied directly. Under the alternative hypothesis, the omission of heteroskedasticity in the estimation procedure does not compromise the performance of the test, since, as discussed in Section \ref{sec.boot}, the estimation of $\alpha_0$ and $\alpha_1$ does not introduce any additional \textquotedblleft parameter estimation effect\textquotedblright{} on the asymptotic behavior of the test statistics. Subsequently, the bandwidth is selected based on the rule-of-thumb proposed in \cite{delaigle2008deconvolution} and evaluated over a grid of candidate values, with the aim of minimizing pointwise mean squared error of the estimated characteristic function of $f^{\mathrm{ft}}_\epsilon(\cdot)$ in the unknown measurement error setting. Specifically, we set $b=c(5\sigma^4/n)^{1/27}$ for the ordinary smooth case and $b=c(4\sigma^2/\log(n))^{1/2}$ for the supersmooth case, where $c$ takes integer values from $1$ to $10$. To evaluate the robustness of the proposed test, we consider sample sizes $n\in\{250,500,1000\}$ and nominal significance levels $\alpha\in\{0.01,0.05,0.1\}$. Critical values are obtained using $199$ bootstrap replications, and $1000$ times Monte Carlo experiments. Due to space limitations, we present only the results for $\alpha=0.05$ and $c\in\{0.1,0.5,1\}$ with the remaining results reported in the Appendix \ref{sec.AppendixA} on the online supplementary appendix.
\begin{table}[htbp]
\centering
\caption{Results for $S_n(\xi,\theta_n,\sigma_n^2)$, model $1$, $\alpha=0.05$}
\begin{tabular}{cccccccc}
\hline
\multicolumn{2}{c}{\textbf{Ordinary Smooth}} & \multicolumn{6}{c}{} \\
\hline
\multirow{2}{*}{$n$} & \multirow{2}{*}{$c$} & \multicolumn{2}{c}{DGP(0)} & \multicolumn{2}{c}{DGP(1)} & \multicolumn{2}{c}{DGP(2)}\\ 
\cline{3-8}
& & KS & CvM & KS & CvM & KS & CvM \\
\hline
$n=500$ & 0.1 & 0.058 & 0.058 & 0.897 & 0.887 & 0.997 & 0.999 \\
& 0.5 & 0.042 & 0.038 & 0.896 & 0.884 & 0.999 & 1.000 \\
& 1 & 0.050 & 0.046 & 0.909 & 0.894 & 0.998 & 0.999 \\
\hline
$n=1000$ & 0.1 & 0.043 & 0.040 & 0.997 & 0.996 & 1.000 & 1.000 \\
& 0.5 & 0.048 & 0.052 & 0.996 & 0.998 & 1.000 & 1.000 \\
& 1 & 0.052 & 0.048 & 0.993 & 0.989 & 1.000 & 1.000 \\
\hline
\multicolumn{2}{c}{\textbf{Supersmooth}} & \multicolumn{6}{c}{} \\
\hline
\multirow{2}{*}{$n$} & \multirow{2}{*}{$c$} & \multicolumn{2}{c}{DGP(0)} & \multicolumn{2}{c}{DGP(1)} & \multicolumn{2}{c}{DGP(2)} \\
\cline{3-8}
& & KS & CvM & KS & CvM & KS & CvM \\
\hline
$n=500$ & 0.1 & 0.042 & 0.040 & 0.976 & 0.973 & 1.000 & 1.000 \\
& 0.5 & 0.060 & 0.062 & 0.977 & 0.976 & 1.000 & 1.000 \\
& 1 & 0.043 & 0.047 & 0.980 & 0.979 & 1.000 & 1.000 \\
\hline
$n=1000$ & 0.1 & 0.067 & 0.065 & 1.000 & 1.000 & 1.000 & 1.000 \\
& 0.5 & 0.054 & 0.054 & 1.000 & 1.000 & 1.000 & 1.000 \\
& 1 & 0.049 & 0.049 & 1.000 & 1.000 & 1.000 & 1.000 \\
\hline
\end{tabular}
\label{tab.known mod1}
\end{table}

\begin{table}[htbp]
\centering
\caption{Results for $\hat{S}_n(\xi,\hat{\theta}_n,\hat{\sigma}_n^2)$, model $1$, $\alpha=0.05$}
\begin{tabular}{cccccccc}
\hline
\multicolumn{2}{c}{\textbf{Ordinary Smooth}} & \multicolumn{6}{c}{} \\
\hline
\multirow{2}{*}{$n$} & \multirow{2}{*}{$c$} & \multicolumn{2}{c}{DGP(0)} & \multicolumn{2}{c}{DGP(1)} & \multicolumn{2}{c}{DGP(2)}\\ 
\cline{3-8}
& & KS & CvM & KS & CvM & KS & CvM \\
\hline
$n=500$ & 0.1 & 0.066 & 0.060 & 0.894 & 0.886 & 1.000 & 1.000 \\
& 0.5 & 0.062 & 0.059 & 0.901 & 0.892 & 0.996 & 0.998 \\
& 1 & 0.055 & 0.057 & 0.917 & 0.906 & 0.997 & 0.999 \\
\hline
$n=1000$ & 0.1 & 0.045 & 0.045 & 0.997 & 0.995 & 1.000 & 1.000 \\
& 0.5 & 0.054 & 0.059 & 0.996 & 0.996 & 1.000 & 1.000 \\
& 1 & 0.046 & 0.050 & 0.995 & 0.992 & 1.000 & 1.000 \\
\hline
\multicolumn{2}{c}{\textbf{Supersmooth}} & \multicolumn{6}{c}{} \\
\hline
\multirow{2}{*}{$n$} & \multirow{2}{*}{$c$} & \multicolumn{2}{c}{DGP(0)} & \multicolumn{2}{c}{DGP(1)} & \multicolumn{2}{c}{DGP(2)} \\
\cline{3-8}
& & KS & CvM & KS & CvM & KS & CvM \\
\hline
$n=500$ & 0.1 & 0.057 & 0.059 & 0.938 & 0.942 & 0.999 & 1.000 \\
& 0.5 & 0.043 & 0.044 & 0.923 & 0.926 & 0.999 & 0.999 \\
& 1 & 0.055 & 0.055 & 0.929 & 0.927 & 0.997 & 1.000 \\
\hline
$n=1000$ & 0.1 & 0.051 & 0.052 & 0.995 & 0.996 & 1.000 & 1.000 \\
& 0.5 & 0.068 & 0.068 & 0.996 & 0.996 & 1.000 & 1.000 \\
& 1 & 0.068 & 0.069 & 0.998 & 0.997 & 1.000 & 1.000 \\
\hline
\end{tabular}
\label{tab.unknown mod1}
\end{table}
Tables \ref{tab.known mod1} and \ref{tab.unknown mod1} report the simulation results under known and unknown measurement error distributions, respectively. The proposed tests exhibit desirable size accuracy and power properties across all considered cases. Specifically, as the sample size increases, the empirical size approaches the nominal level, and the power improves accordingly. A comparison between the results under DGP(1) and DGP(2) reveals that the test tends to be more powerful against low-frequency alternatives, which is consistent with results reported in the literature employing global test procedures. In our test, the convergence rate of the test statistic under the supersmooth case is $\sqrt{n}$, the same as that under the ordinary smooth case, which allows the supersmooth case to avoid the typical drawback of reduced power often reported in the literature. Interestingly, we observe that the test exhibits slightly higher power in the supersmooth case. We conjecture that this improvement stems from the test statistic and the associated plug-in estimators being exactly unbiased under the supersmooth setting, rather than merely asymptotically unbiased in the ordinary smooth case. To examine this conjecture, we further consider a constant model, where the data are generated from $Y=\alpha_0+\sigma U$, and conduct simulations under the same experimental settings as in the linear model.
\begin{table}[htbp]
\centering
\caption{Results for $S_n(\xi,\theta_n,\sigma_n^2)$, model $2$, $\alpha=0.05$}
\begin{tabular}{cccccccc}
\hline
\multicolumn{2}{c}{\textbf{Ordinary Smooth}} & \multicolumn{6}{c}{} \\
\hline
\multirow{2}{*}{$n$} & \multirow{2}{*}{$c$} & \multicolumn{2}{c}{DGP(0)} & \multicolumn{2}{c}{DGP(1)} & \multicolumn{2}{c}{DGP(2)}\\ 
\cline{3-8}
& & KS & CvM & KS & CvM & KS & CvM \\
\hline
$n=500$ & 0.1 & 0.052 & 0.052 & 0.977 & 0.979 & 1.000 & 1.000 \\
& 0.5 & 0.054 & 0.056 & 0.981 & 0.981 & 1.000 & 1.000 \\
& 1 & 0.047 & 0.051 & 0.984 & 0.985 & 1.000 & 1.000 \\
\hline
$n=1000$ & 0.1 & 0.053 & 0.051 & 1.000 & 1.000 & 1.000 & 1.000 \\
& 0.5 & 0.043 & 0.049 & 1.000 & 1.000 & 1.000 & 1.000 \\
& 1 & 0.057 & 0.055 & 1.000 & 1.000 & 1.000 & 1.000 \\
\hline
\multicolumn{2}{c}{\textbf{Supersmooth}} & \multicolumn{6}{c}{} \\
\hline
\multirow{2}{*}{$n$} & \multirow{2}{*}{$c$} & \multicolumn{2}{c}{DGP(0)} & \multicolumn{2}{c}{DGP(1)} & \multicolumn{2}{c}{DGP(2)} \\
\cline{3-8}
& & KS & CvM & KS & CvM & KS & CvM \\
\hline
$n=500$ & 0.1 & 0.042 & 0.040 & 0.976 & 0.973 & 1.000 & 1.000 \\
& 0.5 & 0.060 & 0.062 & 0.977 & 0.976 & 1.000 & 1.000 \\
& 1 & 0.043 & 0.047 & 0.980 & 0.979 & 1.000 & 1.000 \\
\hline
$n=1000$ & 0.1 & 0.067 & 0.065 & 1.000 & 1.000 & 1.000 & 1.000 \\
& 0.5 & 0.054 & 0.054 & 1.000 & 1.000 & 1.000 & 1.000 \\
& 1 & 0.049 & 0.049 & 1.000 & 1.000 & 1.000 & 1.000 \\
\hline
\end{tabular}
\label{tab.known mod2}
\end{table}

\begin{table}[htbp]
\centering
\caption{Results for $\hat{S}_n(\xi,\hat{\theta}_n,\hat{\sigma}_n^2)$, model $2$, $\alpha=0.05$}
\begin{tabular}{cccccccc}
\hline
\multicolumn{2}{c}{\textbf{Ordinary Smooth}} & \multicolumn{6}{c}{} \\
\hline
\multirow{2}{*}{$n$} & \multirow{2}{*}{$c$} & \multicolumn{2}{c}{DGP(0)} & \multicolumn{2}{c}{DGP(1)} & \multicolumn{2}{c}{DGP(2)}\\ 
\cline{3-8}
& & KS & CvM & KS & CvM & KS & CvM \\
\hline
$n=500$ & 0.1 & 0.048 & 0.049 & 0.972 & 0.973 & 1.000 & 1.000 \\
& 0.5 & 0.056 & 0.058 & 0.984 & 0.984 & 1.000 & 1.000 \\
& 1 & 0.049 & 0.045 & 0.982 & 0.982 & 1.000 & 1.000 \\
\hline
$n=1000$ & 0.1 & 0.055 & 0.054 & 1.000 & 1.000 & 1.000 & 1.000 \\
& 0.5 & 0.048 & 0.048 & 1.000 & 1.000 & 1.000 & 1.000 \\
& 1 & 0.056 & 0.059 & 1.000 & 1.000 & 1.000 & 1.000 \\
\hline
\multicolumn{2}{c}{\textbf{Supersmooth}} & \multicolumn{6}{c}{} \\
\hline
\multirow{2}{*}{$n$} & \multirow{2}{*}{$c$} & \multicolumn{2}{c}{DGP(0)} & \multicolumn{2}{c}{DGP(1)} & \multicolumn{2}{c}{DGP(2)} \\
\cline{3-8}
& & KS & CvM & KS & CvM & KS & CvM \\
\hline
$n=500$ & 0.1 & 0.042 & 0.039 & 0.981 & 0.979 & 1.000 & 1.000 \\
& 0.5 & 0.049 & 0.051 & 0.974 & 0.976 & 0.999 & 1.000 \\
& 1 & 0.051 & 0.051 & 0.979 & 0.976 & 1.000 & 1.000 \\
\hline
$n=1000$ & 0.1 & 0.062 & 0.062 & 1.000 & 1.000 & 1.000 & 1.000 \\
& 0.5 & 0.065 & 0.063 & 1.000 & 1.000 & 1.000 & 1.000 \\
& 1 & 0.069 & 0.067 & 1.000 & 1.000 & 1.000 & 1.000 \\
\hline
\end{tabular}
\label{tab.unknown mod2}
\end{table}
As shown in Tables \ref{tab.known mod2} and \ref{tab.unknown mod2}, the test exhibits comparable power across the ordinary smooth and supersmooth cases, which supports our conjecture that, in small samples and under the parametric model setting, the exact unbiasedness of the plug-in estimators in the supersmooth case leads to improved performance. Finally, different bandwidth choices across the grid points do not substantially affect the size and power of the test, confirming the desirable robustness to tuning-parameter selection as discussed in Sections \ref{sec.Asy} and \ref{sec.Unknown}.

Overall, the simulation results demonstrate that our proposed test is particularly sensitive to low-frequency alternatives, exhibits improved power for the supersmooth case, and is robust to bandwidth selection.

\section{Empirical Application}\label{sec.Examples}
In this section, we illustrate the practical utility of our proposed tests by applying them to classic datasets from the measurement error literature. We first focus on detecting heteroskedasticity in the relationship between yields of corn and determinations of available soil nitrogen collected on Marshall soil in Iowa, a context where measurement error typically arising because only a small soil sample is taken from each plot and because of noise in the chemical analysis used to determine the nitrogen level in the soil, is often taken into account (see \cite{fuller2009measurement}), whereas heteroskedasticity in the regression equation caused by differences in sampling times and conditions is often ignored. We use the dataset originally presented in \cite{fuller2009measurement} (Tables $1.2.1$ and $3.1.1$). This dataset comprises observations on corn yield $Y$ and available soil nitrogen $X$ for a series of experimental plots. The primary econometric objective is to estimate the production relationship:
\begin{align*}
    Y_i = \alpha_0+\alpha_1X_i+U_i, \quad i=1,2,\cdots,n,
\end{align*}
where $X_i$ is the true soil nitrogen content but is contaminated by measurement error, so that the researcher only observes the variable $W_i$, which is generated as $W_i = X_i + \epsilon_i$, with $\epsilon_i$ denoting the measurement error.

Standard analyses of this dataset (e.g., Ordinary Least Squares) typically ignore the measurement error, leading to attenuation bias in the slope coefficient. More sophisticated errors-in-variables (EIV) methods account for the bias but often maintain the strong assumption of homoskedasticity, $(Var(\epsilon_i\vert X_i)=\sigma_0^2)$. If the variance of the error term depends on the level of nitrogen (e.g., higher variability in yield at higher nitrogen levels), the standard errors of these EIV estimators may be invalid, leading to incorrect conclusions mentioned in Section \ref{sec.Intro}. Therefore, testing for heteroskedasticity in the presence of measurement error is a crucial diagnostic step. Specifically, we consider three scenarios corresponding to the data structures available in \cite{fuller2009measurement}: known measurement error variance where we make use of the information $\sigma_{\epsilon_i}=57$ provided in Example $1.2.1$ of \cite{fuller2009measurement} and specify the measurement error distribution to be Laplace and normal, representing ordinary smooth and supersmooth errors, respectively, and unknown measurement error distribution where we utilize the replicate measurements provided in Table $3.1.1$. We apply our proposed KS and CVM-type tests to this dataset. To evaluate the sensitivity of our procedure to tuning parameters and distributional assumptions, we report $p-$values across a range of bandwidth parameters $c$, and the results are summarized in Table \ref{tab.case1}.
\begin{table}[htbp]
\centering
\caption{Test results on heteroskedasticity in the regression of corn yields on soil nitrogen content for Marshall soil in Iowa}
\begin{tabular}{ccccccc}
\hline
\multirow{2}{*}{$c$} & \multicolumn{2}{c}{Laplace} & \multicolumn{2}{c}{Normal} & \multicolumn{2}{c}{Unknown}\\ 
\cline{2-7}
 & KS & CvM & KS & CvM & KS & CvM \\
\hline
0.1 & 0.080 & 0.080 & 0.090 & 0.090 & 0.045 & 0.030 \\
0.2 & 0.106 & 0.116 & 0.090 & 0.090 & 0.065 & 0.065 \\
0.3 & 0.095 & 0.095 & 0.065 & 0.065 & 0.121 & 0.116 \\
0.4 & 0.065 & 0.070 & 0.055 & 0.055 & 0.075 & 0.075 \\
0.5 & 0.070 & 0.085 & 0.070 & 0.070 & 0.070 & 0.070 \\
0.6 & 0.085 & 0.080 & 0.075 & 0.075 & 0.065 & 0.065 \\
0.7 & 0.106 & 0.101 & 0.095 & 0.095 & 0.060 & 0.060 \\
0.8 & 0.090 & 0.090 & 0.050 & 0.055 & 0.075 & 0.075 \\
0.9 & 0.080 & 0.085 & 0.080 & 0.080 & 0.060 & 0.060 \\
1 & 0.080 & 0.080 & 0.060 & 0.060 & 0.090 & 0.090 \\
\hline
\end{tabular}
\label{tab.case1}
\end{table}

First, evidence of heteroskedasticity can be drawn from the empirical results presented in Table \ref{tab.case1}. Specifically, if we adopt a standard significance level of $\alpha = 0.10$, the null hypothesis of homoskedasticity is rejected in most experimental settings. This suggests that the conditional variance of corn yield varies with soil nitrogen levels. Ignoring this heteroskedasticity, as is common in standard linear EIV regressions applied to this data, could lead to inefficient estimation and misleading confidence intervals. Our test provides a robust means of detecting latent structures otherwise obscured by measurement error. Second, we find robustness to distributional assumptions, bandwidth choice, and statistics selection. In practical applications, the precise distribution of the measurement error is rarely known. Fortunately, a comparison between the \textquotedblleft Laplace\textquotedblright{} and \textquotedblleft Normal\textquotedblright{} columns reveals that the resulting $p$-values are largely insensitive to this misspecification, although the theoretical derivation of the deconvolution kernel depends on the error distribution. Furthermore, consistent with the Monte Carlo simulations in Section \ref{sec.Simulation}, the proposed tests exhibit robustness with respect to a wide range of bandwidth constants $c$. Both statistics yield comparable inferences, reinforcing the reliability of the testing procedure. Third, the results in the \textquotedblleft Unknown\textquotedblright{} columns demonstrate that our test remains powerful even when the error variance is not known and must be estimated via repeated measurements (as detailed in Section \ref{sec.Unknown}). However, it is worth noting that the admissible range for the bandwidth choice appears slightly narrower than in the known-variance case. This is an expected consequence of the additional estimation noise introduced by estimating the characteristic function of the error term, suggesting that practitioners should exercise caution in bandwidth selection when relying on replicates.

To further demonstrate the applicability of our test for detecting latent heteroskedasticity, we consider a second application that focuses on estimating Engel curves. This analysis utilizes data from the 2023 Consumer Expenditure Survey (CES), a dataset widely used in demand analysis (e.g., \cite{hausman1995nonlinear}). We adopt the standard Leser--Working functional form (\cite{leser1963forms}), which relates the budget share of a commodity to the logarithm of total expenditure. For household $i$ and commodity $j$ in quarter $t$, the structural equation is specified as:
\begin{align*}
    Y_{ijt} = \alpha_{0j}+\alpha_{1j}X_{it}+U_{ijt}
\end{align*}
where $Y_{ijt}$ represents the budget share of commodity $j$ (e.g., Food, Clothing), $X_{it}$ denotes the true log total expenditure and the error term $U_{ijt}$ captures heterogeneity in preferences. However, total expenditure is notoriously difficult to measure accurately in survey data. Following the literature (e.g., \cite{hausman1995nonlinear}), we assume that the observed log total expenditure, $W_{it}$, measures the true latent log total expenditure $X_{it}$ with an additive error:
\begin{align*}
    W_{it} = X_{it} + \epsilon_{it},
\end{align*}
where $\epsilon_{it}$ represents the measurement error. Since the variance of the measurement error is unknown, we adopt the repeated measurement outlined in Section \ref{sec.Unknown}. Specifically, we treat the log total expenditure reported in the subsequent quarter $x_{i, t+1}$ as a replicate measurement for the current quarter $t$. This allows us to estimate the characteristic function of the measurement error and construct the test statistic without making parametric assumptions about the error distribution. We analyze five major expenditure categories: Food, Clothing, Recreation, Health care, and Transportation. The tests are conducted separately for the second, third, and fourth quarters (Q2, Q3, Q4) of 2023 to check for temporal stability. The results of the $p$-values for the KS and CvM test statistics are presented in Table \ref{tab.case2}.
\begin{table}[htbp]
\centering
\caption{Heteroskedasticity test results for regressions of budget shares on log expenditure based on Leser-working Engel curve specification using $2023$ CES data}
\begin{tabular}{lcccccc}
\hline
\multirow{2}{*}{Budget names} & \multicolumn{2}{c}{Q2} & \multicolumn{2}{c}{Q3} & \multicolumn{2}{c}{Q4}\\ 
\cline{2-7}
 & KS & CvM & KS & CvM & KS & CvM \\
\hline
Food & 0.603 & 0.628 & 0.763 & 0.854 & 0.568 & 0.658 \\
Clothing & 0.367 & 0.497 & 0.653 & 0.633 & 0.271 & 0.286 \\
Recreation & 0.166 & 0.201 & 0.050 & 0.025 & 0.136 & 0.151 \\
Health care & 0.201 & 0.307 & 0.518 & 0.372 & 0.196 & 0.251 \\
Transportation & 0.010 & 0.015 & 0.000 & 0.000 & 0.010 & 0.015 \\
\hline
\end{tabular}
\label{tab.case2}
\end{table}

The empirical results in Table \ref{tab.case2} reveal distinct patterns across commodity groups. Most notably, \textquotedblleft Transportation\textquotedblright{} exhibits strong evidence of heteroskedasticity. The $p$-values are consistently close to zero across all quarters for both KS and CvM statistics. This rejection of the null hypothesis aligns with the findings in \cite{hausman1995nonlinear}, who argued that the simple linear-in-log Leser-Working specification is insufficient for certain goods, particularly Transportation. They found statistically significant coefficients for higher-order polynomial terms (quadratic in log expenditure). If the true relationship contains a quadratic term (e.g., $\alpha_2 X_{it}^2$) but is modeled linearly, the omitted nonlinear component becomes part of the error term, causing the error variance to vary with the level of expenditure. Therefore, our robust rejection of the homoskedasticity null hypothesis corroborates the need for higher-order Engel curve specifications as suggested by \cite{hausman1995nonlinear}.

%\newpage

\bibliographystyle{apalike_revised}
\bibliography{Ref}

\newpage

\begin{center}
\LARGE{Testing Heteroskedasticity Under Measurement Error}
\end{center}

\begin{center}
\LARGE{Online Supplementary Appendix}
\end{center}

\begin{center}
\Large{Xiaojun Song and Jichao Yuan}
\end{center}

\begin{center}
\Large{Peking University}
\end{center}

In this online supplementary appendix, we report additional simulation results in Section \ref{sec.AppendixA}, introduce necessary notations and definitions in Section \ref{sec.AppendixB}, and provide proofs of the main theoretical results in Section \ref{sec.AppendixC}. Finally, auxiliary lemmas and their proofs are collected in Section \ref{sec.AppendixD}.

\appendix
\section{Supplementary Simulation Results}\label{sec.AppendixA}
\begin{table}[htbp]
\centering
\caption{Results for $S_n(\xi,\theta_n,\sigma^2_n)$, model $1$, ordinary smooth case, $\alpha=0.01$}
% [inline block 0: 24 envs, 49518 chars in 13 pieces, piece 1 here, a bare % at each other -> data_tex | \begin{tabular}{cccccccc} \hline...]

\end{table}

\begin{table}[htbp]
\centering
\caption{Results for $S_n(\xi,\theta_n,\sigma^2_n)$, model $1$, ordinary smooth case, $\alpha=0.05$}
%
\end{table}

\begin{table}[htbp]
\centering
\caption{Results for $S_n(\xi,\theta_n,\sigma^2_n)$, model $1$, ordinary smooth case, $\alpha=0.1$}
%
\end{table}

\begin{table}[htbp]
\centering
\caption{Results for $S_n(\xi,\theta_n,\sigma^2_n)$, model $1$, supersmooth case, $\alpha=0.01$}
%
\end{table}

\begin{table}[htbp]
\centering
\caption{Results for $S_n(\xi,\theta_n,\sigma^2_n)$, model $1$, supersmooth case, $\alpha=0.05$}
%
\end{table}

\begin{table}[htbp]
\centering
\caption{Results for $S_n(\xi,\theta_n,\sigma^2_n)$, model $1$, supersmooth case, $\alpha=0.1$}
%
\end{table}

\begin{table}[htbp]
\centering
\caption{Results for $\hat{S}_n(\xi,\hat{\theta}_n,\hat{\sigma}^2_n)$, model $1$, ordinary smooth case, $\alpha=0.01$}
%
\end{table}

\begin{table}[htbp]
\centering
\caption{Results for $\hat{S}_n(\xi,\hat{\theta}_n,\hat{\sigma}^2_n)$, model $1$, ordinary smooth case, $\alpha=0.05$}
%
\end{table}

\begin{table}[htbp]
\centering
\caption{Results for $\hat{S}_n(\xi,\hat{\theta}_n,\hat{\sigma}^2_n)$, model $1$, ordinary smooth case, $\alpha=0.1$}
%
\end{table}

\begin{table}[htbp]
\centering
\caption{Results for $\hat{S}_n(\xi,\hat{\theta}_n,\hat{\sigma}^2_n)$, model $1$, supersmooth case, $\alpha=0.01$}
%
\end{table}

\begin{table}[htbp]
\centering
\caption{Results for $\hat{S}_n(\xi,\hat{\theta}_n,\hat{\sigma}^2_n)$, model $1$, supersmooth case, $\alpha=0.05$}
%
\end{table}

\begin{table}[htbp]
\centering
\caption{Results for $\hat{S}_n(\xi,\hat{\theta}_n,\hat{\sigma}^2_n)$, model $1$, supersmooth case, $\alpha=0.1$}
%
\end{table}

\begin{table}[htbp]
\centering
\caption{Results for $S_n(\xi,\theta_n,\sigma^2_n)$, model $2$, ordinary smooth case, $\alpha=0.01$}
%
\end{table}

\section{Some Definitions}\label{sec.AppendixB}
Define $d_i=\{Y_i,W_i\}$ and $D_i=\{Y_i,W_i,W_i^r\}$ where $\{Y_i,W_i\}_{i=1}^n$ is a random sample of observables mentioned in Section \ref{sec.Test} and $\{W_i^r\}_{i=1}^n$ is a sample of repeated measurements mentioned in Section \ref{sec.Unknown}. 

In this subsection, we first give the notations to be used, and then begin our proofs of theorems and lemmas. We define the deconvolution kernel as mentioned in section \ref{sec.Test}, where $d$ represent the dimension of the variable $X$. 
\begin{align*}
    &\mathcal{K}_b(a)=\frac{1}{2\pi b}\int \ee^{-\ii ta}\frac{K^{\text{ft}}(t)}{f_\epsilon^{\text{ft}}(t/b)}\,dt, 
\end{align*}
And for convenience, we give the following notations about the kernel function,
\begin{align*}
    &\mathcal{K}_\epsilon(x) = b\mathcal{K}_b(x),\,\,\,
    \hat{\mathcal{K}}_\epsilon(x) = b\hat{\mathcal{K}}_b(x)\,\,\,
    \mathcal{K}_{\epsilon,1}(x) = b\mathcal{K}_{b,1}(x),\,\,\,
    \mathcal{K}_{\epsilon,2}(x) = b\mathcal{K}_{b,2}(x) .
\end{align*}
As explained in \cite{fan1992deconvolution}, the idea behind the construction of the kernel function is to use Fourier transform and its inverse, possessing the following properties, $\mathcal{F}(f\ast g) = \mathcal{F}(f)\mathcal{F}(g)$ and $\mathcal{F}^{-1}(fg) = \mathcal{F}^{-1}(f)\ast\mathcal{F}^{-1}(g)$, where
\begin{align*}
    &\mathcal{F}(f) = \frac{1}{2\pi}\int\ee^{\ii xt}f(x)\,dx, \,\,\, \mathcal{F}^{-1}(g) = \int\ee^{\ii xt}g(t)\,dt.
\end{align*}
Then for the unknown measurement error case mentioned in section \ref{sec.Unknown}, we denote,
\begin{align*}
    &\hat{f}_\epsilon^{\text{ft}}(t) = \left\vert\frac{1}{n}\sum_{j=1}^n\zeta_j(t)\right\vert^{1/2},\,\,\,\zeta_j(t) = \cos\left[t(W_j-W_j^r)\right].
\end{align*}
Next, we decompose deconvolution kernel with the Fr\'{e}chet derivative as mentioned in \cite{dong2022nonparametric},
\begin{align}
    &\mathcal{K}_{\epsilon,1}(a)=\frac{1}{2\pi}\int \ee^{-\ii t^\top a}\frac{K^{\text{ft}}(t)}{f_\epsilon^{\text{ft}}(t/b)}\psi_1(t/b)\,dt,\notag \\ 
    &\psi_1(t) = \frac{\left[f_\epsilon^{\text{ft}}(t)\right]^2 - \left[\hat{f}_\epsilon^{\text{ft}}(t)\right]^2}{2\left[f_\epsilon^{\text{ft}}(t)\right]^2},\label{Decompose of est of decon part1}\\
    &\mathcal{K}_{\epsilon,2}(a)=\frac{1}{2\pi}\int \ee^{-\ii t^\top a}\frac{K^{\text{ft}}(t)}{f_\epsilon^{\text{ft}}(t/b)}\psi_2(t/b)\,dt,\notag \\ 
    &\psi_2(t) = \frac{\left[\hat{f}_\epsilon^{\text{ft}}(t)+2f_\epsilon^{\text{ft}}(t)\right]\left\{\left[\hat{f}_\epsilon^{\text{ft}}(t)\right]^2 - \left[f_\epsilon^{\text{ft}}(t)\right]^2\right\}^2}{2\hat{f}_\epsilon^{\text{ft}}(t)\left[f_\epsilon^{\text{ft}}(t)\right]^2\left[\hat{f}_\epsilon^{\text{ft}}(t)+f_\epsilon^{\text{ft}}(t)\right]^2}.\label{Decompose of est of decon part2}
\end{align}
For convenience, we define the following transformation of kernels,
\begin{align*} 
    &K_{\epsilon,j}(x) = \frac{1}{2\pi}\int \ee^{-\ii tx}K^{\mathrm{ft}}(t)\psi_j(t/b)\,dt, \,\,\, j=1,2.
\end{align*}
Thus we can decompose the estimation of deconvolution kernel by the following equation,
\begin{align*} 
    &\hat{\mathcal{K}}_\epsilon(x) = \mathcal{K}_\epsilon(x) + \mathcal{K}_{\epsilon,1}(x) + \mathcal{K}_{\epsilon,2}(x).
\end{align*}
We also define
\begin{align*}
    &S_n(\xi,\theta_0,\sigma^2_0)=\frac{1}{n}\sum_{i=1}^n\int\left[\left(Y_i-g(x;\theta_0)\right)^2-\sigma_0^2\right]\mathcal{K}_b\left(\frac{x-W_i}{b}\right)\ee^{\ii x\xi}\,dx,\\
    &\hat{S}_n(\xi,\theta_0,\sigma^2_0)=\frac{1}{n}\sum_{i=1}^n\int\left[\left(Y_i-g(x;\theta_0)\right)^2-\sigma_0^2\right]\hat{\mathcal{K}}_b\left(\frac{x-W_i}{b}\right)\ee^{\ii x\xi}\,dx,\\
    &S^\ast_n(\xi,\theta_0,\sigma^2_0)=\frac{1}{n}\sum_{i=1}^nV_i\int\left[\left(Y_i-g(x;\theta_0)\right)^2-\sigma_0^2\right]\mathcal{K}_b\left(\frac{x-W_i}{b}\right)\ee^{\ii x\xi}\,dx,\\
    &\hat{S}_n^\ast(\xi,\theta_0,\sigma^2_0)=\frac{1}{n}\sum_{i=1}^n\int\left[\left(Y_i-g(x;\theta_0)\right)^2-\sigma_0^2\right]\hat{\mathcal{K}}_b^\ast\left(\frac{x-W_i}{b}\right)\ee^{\ii x\xi}\,dx,
\end{align*}
and
\begin{align*}
    &G_n(\xi) = \frac{1}{n}\sum_{i=1}^n\int\mathcal{K}_b\left(\frac{x-W_i}{b}\right)\ee^{\ii x\xi}\,dx,\\
    &\hat{G}_n(\xi) = \frac{1}{n}\sum_{i=1}^n\int\hat{\mathcal{K}}_b\left(\frac{x-W_i}{b}\right)\ee^{\ii x\xi}\,dx,\\
    &G^\ast_n(\xi) = \frac{1}{n}\sum_{i=1}^nV_i\int\mathcal{K}_b\left(\frac{x-W_i}{b}\right)\ee^{\ii x\xi}\,dx,\\
    &\hat{G}^\ast_n(\xi) = \frac{1}{n}\sum_{i=1}^n\int\hat{\mathcal{K}}_b^\ast\left(\frac{x-W_i}{b}\right)\ee^{\ii x\xi}\,dx
\end{align*}
for notational simplicity.

\section{Proofs of Theorems}\label{sec.AppendixC}
\begin{proof}[Proof of Theorem \ref{theorem.known ordinary smooth under H0}]
    We start be decomposing
    \begin{align}\label{theoremproof.known ordinary smooth under H0 statdecomp}
        S_n(\xi,\theta_n,\sigma^2_n)  = S_n(\xi,\theta_0,\sigma^2_0)+S_{n1}(\xi,\theta_n)-S_{n2}(\xi,\sigma^2_n),
    \end{align}
    where
    \begin{align*}
        &S_{n1}(\xi,\theta_n) = \frac{1}{n}\sum_{i=1}^n\int\left[\left(Y_i-g(x;\theta_n)\right)^2-\left(Y_i-g(x;\theta_0)\right)^2\right]\mathcal{K}_b\left(\frac{x-W_i}{b}\right)\ee^{\ii x\xi}\,dx,\\
        &S_{n2}(\xi,\sigma^2_n) = \left(\sigma_n^2-\sigma_0^2\right)\left[\frac{1}{n}\sum_{i=1}^n\int\mathcal{K}_b\left(\frac{x-W_i}{b}\right)\ee^{\ii x\xi}\,dx\right] = \left(\sigma_n^2-\sigma_0^2\right)G_n(\xi).
    \end{align*}
    For the main term, noting that the null hypothesis\eqref{hyp.H0} implies $\mathbb{E}[(U^2-\sigma_0^2)\ee^{\ii X\xi}]=0$, along with the proof of Lemma \ref{lemma.main term known}, we obtain
    \begin{align}\label{theoremproof.known ordinary smooth under H0 main term}
        \sup_{\xi\in\Pi}\left\vert \sqrt{n}S_n(\xi,\theta_0,\sigma^2_0) -\frac{1}{\sqrt{n}}\sum_{i=1}^n  \left\{r_{1,\infty}(d_i;\xi) - \mathbb{E}\left[r_{1,\infty}(d_i;\xi)\right]\right\}\right\vert = o_p\left(1\right).
    \end{align}
    For the term $S_{n2}(\xi,\sigma^2_n)$, which represents the \textquotedblleft parametric estimation effect\textquotedblright, we note that $\sigma_n^2$ can be decomposed as
    \begin{align}\label{theoremproof.known ordinary smooth under H0 sigmadecomp}
        &\sigma_n^2-\sigma_0^2 = S_n(0,\theta_0,\sigma^2_0)+S_{n1}(0,\theta_n).
    \end{align}
    Consequently, 
    \begin{align*}
        \left\vert \sqrt{n}\left(\sigma_n^2-\sigma_0^2\right) -\frac{1}{\sqrt{n}}\sum_{i=1}^n  \left\{r_{1,\infty}(d_i;0) - \mathbb{E}\left[r_{1,\infty}(d_i;0)\right]\right\}\right\vert = o_p\left(1\right)
    \end{align*}
    is implied by null hypothesis\eqref{hyp.H0}, \eqref{theoremproof.known ordinary smooth under H0 main term} and Lemma \ref{lemma.negligible term known}. Together with conclusions in Lemma \ref{lemma.negligible term known} and \ref{lemma.Gn known}, 
    \begin{align}\label{theoremproof.known ordinary smooth under H0 negligible term}
        \sup_{\xi\in\Pi}\left\vert \sqrt{n}S_{n1}(\xi,\theta_n)\right\vert = o_p\left(1\right)
    \end{align}
    and
    \begin{align}\label{theoremproof.known ordinary smooth under H0 parameffect}
        \sup_{\xi\in\Pi}\left\vert \sqrt{n}S_{n2}(\xi,\sigma^2_n)-\frac{f^{\mathrm{ft}}_X(\xi)}{\sqrt{n}}\sum_{i=1}^n  \left\{r_{1,\infty}(d_i;0) - \mathbb{E}\left[r_{1,\infty}(d_i;0)\right]\right\}\right\vert = o_p\left(1\right)
    \end{align}
    hold. Combining \eqref{theoremproof.known ordinary smooth under H0 statdecomp}, \eqref{theoremproof.known ordinary smooth under H0 main term}, \eqref{theoremproof.known ordinary smooth under H0 negligible term} and \eqref{theoremproof.known ordinary smooth under H0 parameffect}, conclusions in Theorem \ref{theorem.known ordinary smooth under H0} hold.
\end{proof}

\begin{proof}[Proof of Theorem \ref{theorem.known supersmooth under H0}]
    The proof is identical to that of Theorem \ref{theorem.known ordinary smooth under H0} except that the limiting process of the test statistic is $S^{ss}_{\infty}(\xi,\theta_0,\sigma^2_0)$ which depends on $r_{2,\infty}(d_i;\xi)$, rather than $S^{os}_{\infty}(\xi,\theta_0,\sigma^2_0)$ which depends on $r_{1,\infty}(d_i;\xi)$. We first decompose $S_n(\xi,\theta_n,\sigma^2_n)$ as \eqref{theoremproof.known ordinary smooth under H0 statdecomp}, then invoke the null hypothesis to establish the negligibility of the bias term, and finally results from Lemma \ref{lemma.main term known}, \ref{lemma.negligible term known} and \ref{lemma.Gn known} are applied, as their conclusions are stated to be valid in the supersmooth case as well.
\end{proof}

\begin{proof}[Proof of Theorem \ref{theorem.known under H1n}]
    It is important to note that, in the proof of Theorem \ref{theorem.known ordinary smooth under H0} and \ref{theorem.known supersmooth under H0} concerning the limiting process, the convergence of $S_{n1}(\xi,\theta_n)$ does not depend on the null hypothesis. Therefore, conclusion about $S_{n1}(\xi,\theta_n)$ in \eqref{theoremproof.known ordinary smooth under H0 negligible term} remains valid. For the analysis of main term, we observe that under the local alternative hypothesis\eqref{hyp.H1}, 
    \begin{align*}
        \mathbb{E}\left[\left(U^2-\sigma_0^2\right)\ee^{\ii X\xi}\right] = \frac{1}{\sqrt{n}}\mathbb{E}\left[\Delta(X)\ee^{\ii X\xi}\right], \quad K^{\mathrm{ft}}(b\xi)\to 1,
    \end{align*}
    which permits
    \begin{align}\label{theoremproof.known under H1n main term ordi}
        \sup_{\xi\in\Pi}\left\vert\sqrt{n}S_n(\xi,\theta_0,\sigma^2_0)-\frac{1}{\sqrt{n}}\sum_{i=1}^n  \left\{r_{1,\infty}(d_i;\xi) - \mathbb{E}\left[r_{1,\infty}(d_i;\xi)\right]\right\}-\mathbb{E}\left[\Delta(X)\ee^{\ii X\xi}\right]\right\vert = o_p\left(1\right)
    \end{align}
    for the ordinary smooth case and 
    \begin{align}\label{theoremproof.known under H1n main term super}
        \sup_{\xi\in\Pi}\left\vert\sqrt{n}S_n(\xi,\theta_0,\sigma^2_0)-\frac{1}{\sqrt{n}}\sum_{i=1}^n  \left\{r_{2,\infty}(d_i;\xi) - \mathbb{E}\left[r_{2,\infty}(d_i;\xi)\right]\right\}-\mathbb{E}\left[\Delta(X)\ee^{\ii X\xi}\right]\right\vert = o_p\left(1\right)
    \end{align}
    for the supersmooth case. Furthermore, since \eqref{theoremproof.known under H1n main term ordi} and \eqref{theoremproof.known under H1n main term super} both imply $\vert S_n(0,\theta_0,\sigma^2_0)\vert=O_p(n^{-1/2})$, along with decomposition \eqref{theoremproof.known ordinary smooth under H0 sigmadecomp} and $\vert S_{n1}(0,\theta_n)\vert=o_p(n^{-1/2})$ implied by \eqref{theoremproof.known ordinary smooth under H0 negligible term}, 
    \begin{align}\label{theoremproof.known under H1n sigma}
        \sup_{\xi\in\Pi}\left\vert \sqrt{n}S_{n2}(\xi,\sigma^2_n)-\sqrt{n}S_n(0,\theta_0,\sigma^2_0)f_X^{\mathrm{ft}}(\xi)\right\vert = o_p\left(1\right)
    \end{align}
    follows as a consequence. Thus, Theorem \ref{theorem.known under H1n} follows by combining \eqref{theoremproof.known ordinary smooth under H0 statdecomp}, \eqref{theoremproof.known ordinary smooth under H0 parameffect}, either \eqref{theoremproof.known under H1n main term ordi} for ordinary smooth case or \eqref{theoremproof.known under H1n main term super} for supersmooth case and \eqref{theoremproof.known under H1n sigma}.
\end{proof}

\begin{proof}[Proof of Theorem \ref{theorem.known under H1}]
    Notice that the convergence of $S_{n1}(\xi,\theta_n)$ mentioned in \eqref{theoremproof.known ordinary smooth under H0 negligible term} still holds, along with decomposition \eqref{theoremproof.known ordinary smooth under H0 sigmadecomp} and Lemma \ref{lemma.Gn known}, we obtain
    \begin{align*}
        \sup_{\xi\in\Pi}\left\vert \sqrt{n}S_{n2}(\xi,\sigma^2_n)-\sqrt{n}S_n(0,\theta_0,\sigma^2_0)G_n(\xi)\right\vert = o_p\left(1\right).
    \end{align*}
    Additionally, note that \eqref{theoremproof.known ordinary smooth under H0 statdecomp} provides a decomposition of $S_n(\xi,\theta_n,\sigma^2_n)$,
    \begin{align}\label{theorem.known under H1 decomp}
        \left\vert S_n(\xi,\theta_n,\sigma^2_n) - \left[S_n(\xi,\theta_0,\sigma^2_0)-S_n(0,\theta_0,\sigma^2_0)G_n(\xi)\right]\right\vert = o_p\left(n^{-\frac{1}{2}}\right).
    \end{align}
    For the main term, 
    \begin{align}\label{theorem.known under H1 mainterm}
        \left\vert S_n(\xi,\theta_0,\sigma^2_0)-\mathbb{E}\left[\left(U^2-\sigma_0^2\right)\ee^{\ii X\xi}\right]\right\vert =o_p\left(1\right)
    \end{align}
    is implied by $K^{\mathrm{ft}}(b\xi)\to1$ and Lemma \ref{lemma.main term known}. Consequently, Theorem \ref{theorem.known under H1} follows by combining \eqref{theorem.known under H1 decomp}, \eqref{theorem.known under H1 mainterm} and Lemma \ref{lemma.Gn known}.
\end{proof}

\begin{proof}[Proof of Theorem \ref{theorem.unknown ordinary smooth under H0}]
    The proof follows along the same line as Theorem \ref{theorem.known ordinary smooth under H0}. Specifically, the test statistic $\hat{S}_n(\xi,\hat{\theta}_n,\hat{\sigma}^2_n)$ is rewritten through a decomposition analogous to \eqref{theoremproof.known ordinary smooth under H0 statdecomp}, as shown
    \begin{align}\label{theoremproof.unknown ordinary smooth under H0 statdecomp}
        \hat{S}_n(\xi,\hat{\theta}_n,\hat{\sigma}^2_n)  = \hat{S}_n(\xi,\theta_0,\sigma^2_0)+\hat{S}_{n1}(\xi,\hat{\theta}_n)-\hat{S}_{n2}(\xi,\hat{\sigma}^2_n),
    \end{align}
    where
    \begin{align*}
        &\hat{S}_{n1}(\xi,\hat{\theta}_n) = \frac{1}{n}\sum_{i=1}^n\int\left[\left(Y_i-g(x;\hat{\theta}_n)\right)^2-\left(Y_i-g(x;\theta_0)\right)^2\right]\hat{\mathcal{K}}_b\left(\frac{x-W_i}{b}\right)\ee^{\ii x\xi}\,dx,\\
        &\hat{S}_{n2}(\xi,\hat{\sigma}^2_n) = \left(\hat{\sigma}_n^2-\sigma_0^2\right)\left[\frac{1}{n}\sum_{i=1}^n\int\hat{\mathcal{K}}_b\left(\frac{x-W_i}{b}\right)\ee^{\ii x\xi}\,dx\right] = \left(\hat{\sigma}_n^2-\sigma_0^2\right)\hat{G}_n(\xi).
    \end{align*}
    The bias of main term $\hat{S}_n(\xi,\theta_0,\sigma^2_0)$ vanishes relying on the implication of the null hypothesis, and the convergence result is then derived by following arguments in the proof of Lemma \ref{lemma.main term unknown}, leading to the following conclusion,
    \begin{align}\label{theoremproof.unknown ordinary smooth under H0 main term}
        \sup_{\xi\in\Pi}\left\vert \sqrt{n}\hat{S}_n(\xi,\theta_0,\sigma^2_0) -\frac{1}{\sqrt{n}}\sum_{i=1}^n  \left\{\hat{r}_{1,\infty}(D_i;\xi) - \mathbb{E}\left[\hat{r}_{1,\infty}(D_i;\xi)\right]\right\}\right\vert = o_p\left(1\right),
    \end{align}
    where $\hat{r}_{1,\infty}(D_i;\xi)$ represents $r^{\epsilon,os}_\infty(Y,W,W^r;\xi,\theta_0,\sigma_0^2)$ mentioned in Theorem \ref{theorem.unknown ordinary smooth under H0}. For the term representing the \textquotedblleft parametric estimation effect\textquotedblright, $\hat{\sigma}_n^2-\sigma_0^2 = \hat{S}_n(0,\theta_0,\sigma^2_0)+\hat{S}_{n1}(0,\hat{\theta}_n)$ still holds. Given that 
    \begin{align}\label{theoremproof.unknown ordinary smooth under H0 negligible term}
        \sup_{\xi\in\Pi}\left\vert \sqrt{n}\hat{S}_{n1}(\xi,\hat{\theta}_n)\right\vert = o_p\left(1\right),
    \end{align}
    follows from Lemma \ref{lemma.negligible term unknown}, 
    \begin{align}\label{theoremproof.unknown ordinary smooth under H0 sigma}
        \sup_{\xi\in\Pi}\left\vert \sqrt{n}\left(\hat{\sigma}_n^2-\sigma_0^2\right) -\frac{1}{\sqrt{n}}\sum_{i=1}^n  \left\{\hat{r}_{1,\infty}(D_i;0) - \mathbb{E}\left[\hat{r}_{1,\infty}(D_i;0)\right]\right\}\right\vert = o_p\left(1\right)
    \end{align}
    can be derived by \eqref{theoremproof.unknown ordinary smooth under H0 main term} and \eqref{theoremproof.unknown ordinary smooth under H0 negligible term}, thereby establishing
    \begin{align}\label{theoremproof.unknown ordinary smooth under H0 parameffect}
        \sup_{\xi\in\Pi}\left\vert \sqrt{n}\hat{S}_{n2}(\xi,\sigma^2_n)-\frac{f^{\mathrm{ft}}_X(\xi)}{\sqrt{n}}\sum_{i=1}^n  \left\{\hat{r}_{1,\infty}(D_i;0) - \mathbb{E}\left[\hat{r}_{1,\infty}(D_i;0)\right]\right\}\right\vert = o_p\left(1\right)
    \end{align}
    by \eqref{theoremproof.unknown ordinary smooth under H0 sigma} and Lemma \ref{lemma.Gn unknown}. Conclusions in Theorem \ref{theorem.unknown ordinary smooth under H0} follow by combining \eqref{theoremproof.unknown ordinary smooth under H0 statdecomp}, \eqref{theoremproof.unknown ordinary smooth under H0 main term}, \eqref{theoremproof.unknown ordinary smooth under H0 negligible term} and \eqref{theoremproof.unknown ordinary smooth under H0 parameffect}.
\end{proof}

\begin{proof}[Proof of Theorem \ref{theorem.unknown supersmooth under H0}]
    As stated in Theorem \ref{theorem.known supersmooth under H0}, the proof of this theorem is identical to that of Theorem \ref{theorem.unknown supersmooth under H0}, except that $\hat{S}^{os}_{\infty}(\xi,\theta_0,\sigma^2_0)$ which depends on $\hat{r}_{1,\infty}(D_i;\xi)$ is replaced by $\hat{S}^{ss}_{\infty}(\xi,\theta_0,\sigma^2_0)$ which depends on $\hat{r}_{2,\infty}(D_i;\xi)$. Therefore, we omit the detailed proof.
\end{proof}

\begin{proof}[Proof of Theorem \ref{theorem.unknown under H1n}]
    Lemma \ref{lemma.main term unknown}, \ref{lemma.negligible term unknown} and \ref{lemma.Gn unknown} show that the estimation of deconvolution kernel does not influence the bias of the proposed statistics. In addition, Theorem \ref{theorem.unknown ordinary smooth under H0} and \ref{theorem.unknown supersmooth under H0} demonstrate that the convergence rate of the main term is unaffected by the presence of an estimated measurement error distribution. Therefore, the proof of this theorem is analogous to that of Theorem \ref{theorem.known under H1n}, except for the specific form of the main term, where $r_{1,\infty}(d_i;\xi)$ is replaced by $\hat{r}_{1,\infty}(D_i;\xi)$ for the ordinary smooth case and $r_{2,\infty}(d_i;\xi)$ by $\hat{r}_{2,\infty}(D_i;\xi)$ for supersmooth case, respectively.
\end{proof}

\begin{proof}[Proof of Theorem \ref{theorem.unknown under H1}]
    Since the results follow analogously to the proof of Theorem \ref{theorem.known under H1} as stated in Theorem \ref{theorem.unknown under H1n}, the detailed arguments are omitted.
2\end{proof}

\begin{proof}[Proof of Theorem \ref{theorem.known boot}]
    We begin by deriving the following decomposition for the bootstrap version of the test statistic,
    \begin{align}\label{theoremproof.boot statdecomp1}
        S_n^{pro,\ast}(\xi,\theta_n,\sigma^2_n) = S_n^\ast(\xi,\theta_n,\sigma^2_n)-S_n^\ast(0,\theta_n,\sigma^2_n)G_n(\xi)
    \end{align}
    Notice that the decomposition employed in the proof of the previous theorem can still be applied to the bootstrap version of the test statistics,
    \begin{align}\label{theoremproof.boot statdecomp2}
        S_n^\ast(\xi,\theta_n,\sigma^2_n)  =& S_n^\ast(\xi,\theta_0,\sigma^2_0)+S^\ast_{n1}(\xi,\theta_n)-S^\ast_{n2}(\xi,\sigma^2_n)\notag\\
        =&S_n^\ast(\xi,\theta_0,\sigma^2_0)+S^\ast_{n1}(\xi,\theta_n)-\left(S_n(0,\theta_0,\sigma^2_0)+S_{n1}(0,\theta_n)\right)G^\ast_n(\xi).
    \end{align}
    where the last equation follows by \eqref{theoremproof.known ordinary smooth under H0 sigmadecomp} and 
    \begin{align*}
        &S^\ast_{n1}(\xi,\theta_n) = \frac{1}{n}\sum_{i=1}^nV_i\int\left[\left(Y_i-g(x;\theta_n)\right)^2-\left(Y_i-g(x;\theta_0)\right)^2\right]\mathcal{K}_b\left(\frac{x-W_i}{b}\right)\ee^{\ii x\xi}\,dx.
    \end{align*}
    \eqref{theoremproof.boot statdecomp1} and \eqref{theoremproof.boot statdecomp2} provide the decomposition of the bootstrap version of the test statistic after introducing the projection structure,
    \begin{align}\label{theoremproof.known boot statdecomp}
        S_n^{pro,\ast}(\xi,\theta_n,\sigma^2_n) =& S_n^\ast(\xi,\theta_0,\sigma^2_0)-S_n^\ast(0,\theta_0,\sigma^2_0)G_n(\xi)+S^\ast_{n1}(\xi,\theta_n)-S^\ast_{n1}(0,\theta_n)G_n(\xi)\notag\\
        &-\left(S_n(0,\theta_0,\sigma^2_0)+S_{n1}(0,\theta_n)\right)G^\ast_n(\xi)\notag\\
        &+\left(S_n(0,\theta_0,\sigma^2_0)+S_{n1}(0,\theta_n)\right)G_n(\xi)G_n^\ast(0).
    \end{align}
    For the main term, Lemma \ref{lemma.main term known} shows that the convergence order of $S_n^\ast(\xi,\theta_0,\sigma^2_0)$ is the same as that of $S_n(\xi,\theta_0,\sigma^2_0)$. Moreover, as stated in Lemma \ref{lemma.main term known}, the unit variance of the multipliers ensures that the form of the main term is preserved, that is,
    \begin{align}\label{theoremproof.known boot main term ordi}
        \sup_{\xi\in\Pi}\left\vert \sqrt{n}S^\ast_n(\xi,\theta_0,\sigma^2_0) -\frac{1}{\sqrt{n}}\sum_{i=1}^n V_i\left\{r_{1,\infty}(d_i;\xi) - \mathbb{E}\left[r_{1,\infty}(d_i;\xi)\right]\right\}\right\vert = o_p\left(1\right)
    \end{align}
    for the ordinary smooth case and 
    \begin{align}\label{theoremproof.known boot main term super}
        \sup_{\xi\in\Pi}\left\vert \sqrt{n}S^\ast_n(\xi,\theta_0,\sigma^2_0) -\frac{1}{\sqrt{n}}\sum_{i=1}^n V_i\left\{r_{2,\infty}(d_i;\xi) - \mathbb{E}\left[r_{2,\infty}(d_i;\xi)\right]\right\}\right\vert = o_p\left(1\right)
    \end{align}
    for supersmooth case. Note that $\sup_{\xi\in\Pi}|S_{n1}(\xi,\theta_n)|=o_p(n^{-1/2})$ and $\sup_{\xi\in\Pi}|S^\ast_{n1}(\xi,\theta_n)|=o_p(n^{-1/2})$ implied by Lemma \ref{lemma.negligible term known}, along with $\sup_{\xi\in\Pi}|G_n(\xi)-f_X^{\mathrm{ft}}(\xi)|=O_p(n^{-1/2})$ and $\sup_{\xi\in\Pi}|G^\ast_n(\xi)|=O_p(n^{-1/2})$ implied by Lemma \ref{lemma.Gn known}, it further follows that,
    \begin{align}\label{theoremproof.known boot convergence}
        \sup_{\xi\in\Pi}\left\vert S_n^{pro,\ast}(\xi,\theta_n,\sigma^2_n) - S_n^\ast(\xi,\theta_0,\sigma^2_0)+S_n^\ast(0,\theta_0,\sigma^2_0)G_n(\xi) \right\vert=o_p\left(n^{-\frac{1}{2}}\right).
    \end{align}
    Combining \eqref{theoremproof.known boot main term ordi}, \eqref{theoremproof.known boot main term super}, \eqref{theoremproof.known boot convergence} and the unit variance property of multipliers, \eqref{theorem.known boot ordi} and \eqref{theorem.known boot super} hold under the null hypothesis. 

    Subsequently, conclusions about $S_{n1}(\xi,\theta_n)$, $S^\ast_{n1}(\xi,\theta_n)$, $G_n(\xi)$, $G^\ast_n(\xi)$ and \eqref{theoremproof.known boot statdecomp} still hold under the local alternative hypothesis. The only difference from the proof of Theorem \ref{theorem.known under H1n} is that the bias term arising from the local alternative hypothesis is canceled out by the zero-mean multipliers. Consequently, \eqref{theoremproof.known boot main term ordi} and \eqref{theoremproof.known boot main term super} remain valid. Thus, we claim that \eqref{theorem.known boot ordi} and \eqref{theorem.known boot super} still hold under the local alternative hypothesis.
\end{proof}

\begin{proof}[Proof of Theorem \ref{theorem.unknown boot}]
    We first consider the null hypothesis and focus on the ordinary smooth case. By decomposiong
    \begin{align*}
        \hat{S}^\ast_n(\xi,\hat{\theta}_n,\hat{\sigma}_n^2) = \hat{S}^\ast_n(\xi,\theta_0,\sigma_0^2)+\hat{S}^\ast_{n1}(\xi,\hat{\theta}_n)-\hat{S}^\ast_{n2}(\xi,\hat{\sigma}_n^2)
    \end{align*}
    and subsequently
    \begin{align*}
        \hat{S}^\ast_n(\xi,\theta_0,\sigma_0^2) = S_n(\xi,\theta_0,\sigma_0^2)+S_{n,1}^\ast(\xi,\theta_0,\sigma_0^2)+S_{n,2}^\ast(\xi,\theta_0,\sigma_0^2),
    \end{align*}
    where $\hat{S}^\ast_{n1}(\xi,\hat{\theta}_n)$, $\hat{S}^\ast_{n2}(\xi,\hat{\sigma}_n^2)$, $S_{n,1}^\ast(\xi,\theta_0,\sigma_0^2)$ and $S_{n,2}^\ast(\xi,\theta_0,\sigma_0^2)$ denote the counterparts of $\hat{S}_{n1}(\xi,\hat{\theta}_n)$ and $\hat{S}_{n2}(\xi,\hat{\sigma}_n^2)$ as mentioned in proof of Theorem \ref{theorem.unknown ordinary smooth under H0}, $S_{n,1}(\xi,\theta_0,\sigma_0^2)$ and $S_{n,2}(\xi,\theta_0,\sigma_0^2)$ in Lemma \ref{lemma.main term unknown}, respectively, with $\hat{f}^{\mathrm{ft}}_\epsilon(\cdot)$ replaced by the multiplier-perturbed estimator $\hat{f}^{\mathrm{ft},\ast}_\epsilon(\cdot)$. The proof follows the same line as that of Theorem \ref{theorem.unknown ordinary smooth under H0}, with the only difference being that $\hat{r}_{1,\infty}(D_i;\xi)$ is replaced by 
    \begin{align*}
        \hat{r}^\ast_{1,\infty}(D;\xi) =&\ee^{\ii W\xi}\sum_{l=0}^\alpha c_l^{os}(\xi)\left[\left(Y-g(W;\theta_0)\right)^2-\sigma_0^2\right]^{(l)} \\
        &+2V^\ast\left[g^2 f_X\right]^{\mathrm{ft}}(\xi)\Pi_{\epsilon}(\xi)+\frac{V^\ast}{\pi}\int f_X^{\mathrm{ft}}(t)(g^2)^{\mathrm{ft}}(\xi-t)\Pi_{\epsilon}(t)\,dt\\
        &-\frac{V^\ast}{\pi}\int (g f_X)^{\mathrm{ft}}(t)g^{\mathrm{ft}}(\xi-t)\Pi_{\epsilon}(t)\,dt.
    \end{align*}
    Owing to the unit mean, unit variance, and sample independence of $V^\ast$, the resulting limiting process remains $\hat{S}^{os}_\infty(\cdot,\theta_0,\sigma_0)$. An analogous argument applies in the supersmooth case, yielding the same limiting process $\hat{S}^{ss}_\infty(\cdot,\theta_0,\sigma_0)$ as that of the empirical process constructed under the null hypothesis. Under local alternatives, arguments analogous to those in Lemma \ref{lemma.main term unknown} establish the negligibility of $S_{n,2}^\ast(\xi,\theta_0,\sigma_0^2)$, while the convergence of $S_{n,1}^\ast(\xi,\theta_0,\sigma_0^2)$ is obtained by an argument similar to that leading to \eqref{lemmaproof.main term unknown Sn1},
    \begin{align*}
        &\sup_{\xi\in\Pi}\left\vert 
        \begin{aligned}
            &S_{n,1}^\ast(\xi,\theta_0,\sigma^2_0)-\mathbb{E}\left[\left(U^2-\sigma_0^2\right)\ee^{\ii X\xi}\right]K^{\mathrm{ft}}(b\xi)\left[\frac{1}{n}\sum_{i=1}^nV_i^\ast\Pi_{\epsilon,i}(\xi)\right]\\
            &-\left[g^2 f_X\right]^{\mathrm{ft}}(\xi)\left[\frac{1}{n}\sum_{i=1}^nV_i^\ast\Pi_{\epsilon,i}(\xi)\right]\\
            &-\frac{1}{2\pi n}\sum_{i=1}^n\int f_X^{\mathrm{ft}}(t)(g^2)^{\mathrm{ft}}(\xi-t)V_i^\ast\Pi_{\epsilon,i}(t)\,dt\\
            &+\frac{1}{\pi n}\sum_{i=1}^n\int (g f_X)^{\mathrm{ft}}(t)g^{\mathrm{ft}}(\xi-t)V_i^\ast\Pi_{\epsilon,i}(t)\,dt
        \end{aligned}
         \right\vert=o_p(n^{-\frac{1}{2}}).
    \end{align*}
    Note that the local alternative implies $\mathbb{E}[(U^2-\sigma_0^2)\ee^{\ii X\xi}] = n^{-1/2}\mathbb{E}[\Delta(X)\ee^{\ii X\xi}]$. Combining this with $\mathbb{E}[\Pi_{\epsilon}(\xi)]=0$ and the desired properties of the multipliers, we obtain
    \begin{align*}
        \sup_{\xi\in\Pi}\left\vert \mathbb{E}\left[\left(U^2-\sigma_0^2\right)\ee^{\ii X\xi}\right]K^{\mathrm{ft}}(b\xi)\left[\frac{1}{n}\sum_{i=1}^nV_i^\ast\Pi_{\epsilon,i}(\xi)\right]\right\vert = o_p\left(n^{-\frac{1}{2}}\right).
    \end{align*}
    Consequently, $S_{n,1}^\ast(\xi,\theta_0,\sigma_0^2)$ does not affect the deterministic shift function described in Section \ref{sec.Asy}. Combining this with the analysis of $S_n(\xi,\theta_0,\sigma_0^2)$ under local alternatives in Theorem \ref{theorem.known under H1}, we conclude that $\mu(\cdot)$ and $\mathbb{E}[\sqrt{n}\hat{S}^\ast_n(\xi,\hat{\theta}_n,\hat{\sigma}_n^2)]$ are asymptotically equivalent, thereby establishing the stated result of the theorem.
\end{proof}

\section{Lemmas and Proofs}\label{sec.AppendixD}
\begin{lemma}\label{lemma.power of tsf kernel}
Under Assumptions \ref{ass.D} and \ref{ass.D'}, with addition of Assumption \ref{ass.O} or Assumption \ref{ass.S}, we have,
\begin{align} \label{Power of new kernel with ME}
    &\int b^lx^lK_{\epsilon,2}(x)\,dx=0,\,\,\, l < p.
\end{align}
Meanwhile, for the ordinary smooth case, with the addition of Assumption \ref{ass.O'},
\begin{align} \label{Power of abs new kernel with ME}
    &\int \vert b^lx^lK_{\epsilon,2}(x)\vert\,dx=o_p\left(n^{-\frac{1}{2}}\right),\,\,\, l = p, p+1.
\end{align}
\end{lemma}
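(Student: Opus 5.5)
The plan is to work entirely in the Fourier domain. By definition, $K_{\epsilon,2}$ is the inverse Fourier transform of $K^{\mathrm{ft}}(t)\psi_2(t/b)$. So the moment $\int x^l K_{\epsilon,2}(x)\,dx$ equals, up to the constant $(-\ii)^l$, the $l$-th derivative at $t=0$ of $\phi(t):=K^{\mathrm{ft}}(t)\psi_2(t/b)$. The first claim therefore reduces to showing that $\phi^{(l)}(0)=0$ for $l<p$.

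\textbf{Step 1 (vanishing moments).} By the Leibniz rule, each $\phi^{(l)}(0)$ is a combination of products of derivatives of $K^{\mathrm{ft}}$ and of $\psi_2(\cdot/b)$ at $0$.
\begin{itemize}
\item At $t=0$ we have $f^{\mathrm{ft}}_\epsilon(0)=\hat f^{\mathrm{ft}}_\epsilon(0)=1$, because $\frac1n\sum_j\cos 0=1$.
\item Both $f^{\mathrm{ft}}_\epsilon$ (symmetric $\epsilon$, Assumption \ref{ass.D'}) and $\hat f^{\mathrm{ft}}_\epsilon$ are even.
\item $\psi_2$ contains the factor $\{[\hat f^{\mathrm{ft}}_\epsilon]^2-[f^{\mathrm{ft}}_\epsilon]^2\}^2$, and the base of this square vanishes at $0$. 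Its order of vanishing is at least $2$ in $t$: it is even, smooth (using the moment condition $\mathbb{E}|\epsilon|^{(p+1)(2+\zeta)}<\infty$ to differentiate $\frac1n\sum\cos(t(W_j-W_j^r))$ up to order $p+1$), and zero at $0$. Its square therefore vanishes to order at least $4$.
\end{itemize}
Vanishing of order $4$ alone does not reach $p$. So I would use the relation between the derivatives of $\psi_2(\cdot/b)$ and of $\psi_2$, $\partial_t^k[\psi_2(t/b)]=b^{-k}\psi_2^{(k)}(t/b)$, together with the flatness of $K^{\mathrm{ft}}$ at the origin implied by the moment conditions in Assumption \ref{ass.O}(iii) (resp. \ref{ass.S}(iii)), namely $(K^{\mathrm{ft}})^{(l)}(0)=0$ for $1\le l<p$. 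Any Leibniz term with a derivative of $\psi_2$ of order below its vanishing order is zero. I will have to check carefully whether the remaining terms cancel for all $l<p$. If the order-of-vanishing argument falls short, I would instead use the exact (not merely first-order) vanishing structure of the squared difference. That structure is precisely what the Fréchet decomposition $\hat{\mathcal K}_\epsilon=\mathcal K_\epsilon+\mathcal K_{\epsilon,1}+\mathcal K_{\epsilon,2}$ was built around, since $\mathcal K_\epsilon$ and $\hat{\mathcal K}_\epsilon$ both integrate to one. This step is the one I am least sure of.

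\textbf{Step 2 (absolute moments, ordinary smooth case).} For $l=p,p+1$, I would bound $\int|x|^l|K_{\epsilon,2}(x)|\,dx$ by a Sobolev/Carlson-type inequality of the form $\int|x|^l|h(x)|\,dx\lesssim\|\partial^{l}\hat h\|_{L^2}+\|\partial^{l+1}\hat h\|_{L^2}$. This uses the compact support $[-c_0,c_0]$ of $K^{\mathrm{ft}}$, and it reduces the task to sup-norm bounds on $\partial_t^k[\psi_2(t/b)]$ for $k\le p+2$ on $|t|\le c_0$. Here I would invoke the uniform convergence results of \cite{kurisu2022uniform} for the empirical characteristic function of $W-W^r$ and its derivatives. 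These give
\[
\sup_{|s|\le c_0/b}\Bigl|\partial^k\bigl\{[\hat f^{\mathrm{ft}}_\epsilon]^2-[f^{\mathrm{ft}}_\epsilon]^2\bigr\}(s)\Bigr|=O_p\bigl(\sqrt{\log(1/b)/n}\bigr).
\]
Since $\psi_2$ is quadratic in this difference and divided by powers of $f^{\mathrm{ft}}_\epsilon\gtrsim b^{\alpha}$ (Assumption \ref{ass.O}(ii)), together with $\hat f^{\mathrm{ft}}_\epsilon$, which is bounded below with probability tending to one, the sup of the derivatives is of order $b^{-c\alpha-k}\log(1/b)/n$. Multiplying by $b^l$ and collecting powers should give $O_p\bigl(b^{-5\alpha-3}\log(1/b)^2/n\bigr)$. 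This is $o_p(n^{-1/2})$ exactly when $nb^{10\alpha+6}\log(1/b)^{-4}\to\infty$, which is Assumption \ref{ass.O'}(i). The fact that the exponents match this bandwidth condition is my consistency check on the power counting.

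\textbf{Main obstacle.} The power bookkeeping in Step 2 is routine but tedious. The real difficulty is Step 1, where I must show that the cancellation holds up to order $p-1$ rather than only to order $3$.
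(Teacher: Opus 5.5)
\textbf{Step 1 cannot be completed, because the identity it targets is false for even $l\ge 4$.} Your own Leibniz reduction already decides the matter. Since $(K^{\mathrm{ft}})^{(m)}(0)=\ii^m\int v^mK(v)\,dv=0$ for $1\le m<p$ and $K^{\mathrm{ft}}(0)=1$, every term vanishes except the one that puts all $l$ derivatives on $\psi_2(\cdot/b)$. So for each $l<p$
\[
\int b^lx^lK_{\epsilon,2}(x)\,dx=(-\ii)^l\,\psi_2^{(l)}(0),
\]
and nothing is left for this term to cancel against. To see that it is nonzero, write $u=\{[\hat f^{\mathrm{ft}}_\epsilon]^2-[f^{\mathrm{ft}}_\epsilon]^2\}/[f^{\mathrm{ft}}_\epsilon]^2$ and $r=\sqrt{1+u}=\hat f^{\mathrm{ft}}_\epsilon/f^{\mathrm{ft}}_\epsilon$. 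Then $\psi_2=(r+2)(r-1)^2/(2r)=G(u)$ with $G(u)=\tfrac38u^2+O(u^3)$. Near the origin, $u(t)=-\tfrac{t^2}{2}(m_2-\mu_2)+O(t^4)$, where $m_2=n^{-1}\sum_j(W_j-W_j^r)^2$ and $\mu_2=\mathbb{E}(W-W^r)^2$. Hence $\psi_2(t)=\tfrac{3}{32}(m_2-\mu_2)^2t^4+O(t^6)$ and
\[
\int b^4x^4K_{\epsilon,2}(x)\,dx=\tfrac94\,(m_2-\mu_2)^2 .
\]
This is almost surely nonzero for Laplace or Gaussian errors. So exact vanishing holds only for $l\le 3$ and for odd $l$ (by evenness of $\psi_2$), which is exactly what your order-of-vanishing count gives. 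The claim fails at $l=4$ whenever $p\ge 5$, and also in the supersmooth case, where it is used for every $l$. Your fallback cannot rescue it: the fact that $\mathcal K_\epsilon$ and $\hat{\mathcal K}_\epsilon$ both integrate to one only constrains the $l=0$ moment.

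What is true, and what suffices for the ordinary-smooth Taylor expansions in which the lemma is used, is
\[
\int b^lx^lK_{\epsilon,2}(x)\,dx=(-\ii)^l\psi_2^{(l)}(0)=O_p(n^{-1})=o_p(n^{-1/2}),\qquad l<p .
\]
By Fa\`a di Bruno and $G(0)=G'(0)=0$, $\psi_2^{(l)}(0)$ is a sum of products of at least two derivatives of $u$ at $0$. Each such derivative is a deterministic linear combination of centred moments $n^{-1}\sum_j(W_j-W_j^r)^{2k}-\mathbb{E}(W-W^r)^{2k}$ with $2k\le l$. Each of these is $O_p(n^{-1/2})$ under Assumption~\ref{ass.D'}(ii).

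Your Step 2 is a reasonable route to the second claim, but the exponent agreement you cite is not a check. The same representation $\psi_2=G(u)$, together with $\sup_{|s|\le c_0/b}|u^{(j)}(s)|=O_p(b^{-2\alpha}\sqrt{\log(1/b)/n})$, gives $\sup_{|s|\le c_0/b}|\psi_2^{(k)}(s)|=O_p(b^{-4\alpha}\log(1/b)/n)$. Your Carlson bound then yields $O_p(b^{-4\alpha-1}\log(1/b)/n)$, which is already $o_p(n^{-1/2})$ under Assumption~\ref{ass.O'}(i). So $b^{-5\alpha-3}\log(1/b)^2$ was fitted to the assumption rather than derived.
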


\begin{proof}[Proof of Lemma \ref{lemma.power of tsf kernel}]
    The proof of this lemma can be found in \cite{song2025specification}.
\end{proof}

\begin{lemma}\label{lemma.power of decon}
Under Assumption \ref{ass.O},
\begin{align} \label{Power of decon without ME Assm O}
    \int \mathcal{K}_\epsilon(x)x^l\ee^{\ii bx\xi}\,dx=I_{\{l\leq\alpha\}}c_l^{os}(\xi)l!b^{-l}+\ii b\sum_{h=0}^\alpha\frac{c_h^{os}(-1)^h\xi^{h+1}}{(h+1)!}\int K^{(h)}(x)x^l\tilde{x}_h^{h+1}\,dx,
\end{align}
holds for $l=0,1,\cdots,p$, where $\bar{x}_h\in(0,x)$ for $h=1,\cdots,\alpha$.

Under Assumption \ref{ass.S},
\begin{align} \label{Power of decon without ME Assm S}
    \int \mathcal{K}_\epsilon(x)x^l\ee^{\ii bx\xi}\,dx=c_l^{ss}(\xi)l!b^{-l}, \qquad \text{for $l\geq0$}.
\end{align}

Suppose that Assumptions \ref{ass.D} and \ref{ass.D'} hold, together with either Assumption \ref{ass.O},
\begin{align} \label{Power of decon with ME}
    \sup\limits_{\xi\in\Pi}\left\vert\int (bx)^l\mathcal{K}_{\epsilon,2}(x)\ee^{\ii bx\xi}\,dx\right\vert=o_p\left(n^{-\frac{1}{2}}\right)\quad \text{for }l<p.
\end{align}
For the ordinary smooth case, with the addition of Assumption \ref{ass.O'},
\begin{align} \label{Power of abs decon with ME}
    \int \left\vert (bx)^l\mathcal{K}_{\epsilon,2}(x)\right\vert \,dx=o_p\left(n^{-\frac{1}{2}}\right), \quad \text{for }l=p,p+1.
\end{align}
For the supersmooth case, under Assumption \ref{ass.S} and \ref{ass.S'},
\begin{align}\label{Power of super kernel}
    \sum_{l=0}^\infty\sup\limits_{\xi\in\Pi}\left\vert\int (bx)^l\mathcal{K}_{\epsilon,2}(x)\ee^{\ii bx\xi}\,dx\right\vert=o_p\left(1\right)\quad.
\end{align}
\end{lemma}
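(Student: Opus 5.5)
The identities will come from rewriting the moments as Fourier-side computations. By definition
\[
\mathcal{K}_\epsilon(x)=\frac{1}{2\pi}\int \ee^{-\ii tx}\frac{K^{\mathrm{ft}}(t)}{f^{\mathrm{ft}}_\epsilon(t/b)}\,dt,
\]
so $\mathcal{K}_\epsilon$ is the inverse Fourier transform of $\phi_b(t)=K^{\mathrm{ft}}(t)/f^{\mathrm{ft}}_\epsilon(t/b)$. Hence $\int \mathcal{K}_\epsilon(x)\ee^{\ii sx}\,dx=\phi_b(s)$, and
\[
\int \mathcal{K}_\epsilon(x)x^l\ee^{\ii bx\xi}\,dx=(-\ii)^l\,\frac{d^l}{ds^l}\phi_b(s)\Big|_{s=b\xi}.
\]

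\textbf{Ordinary smooth case.} Here $1/f^{\mathrm{ft}}_\epsilon(t/b)=\sum_{h=0}^\alpha c_h^{os}b^{-h}t^h$. This gives the operator identity $\mathcal{K}_\epsilon=\sum_h c_h^{os}b^{-h}\ii^{h}K^{(h)}$ (up to sign conventions), i.e. $\mathcal{K}_\epsilon$ is a finite combination of derivatives of $K$. Then
\[
\int K^{(h)}(x)x^l\ee^{\ii bx\xi}\,dx
\]
is computed by integrating by parts $h$ times and expanding $\ee^{\ii bx\xi}$. The moment conditions of Assumption \ref{ass.O}(iii) (vanishing moments up to order $p-1$, with $p>\alpha$) kill all terms except those with $h\ge l$. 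Those terms produce exactly $l!\,b^{-l}\sum_{h\ge l}c_h^{os}\binom{h}{l}\xi^{h-l}(-\ii)^l=l!\,b^{-l}c_l^{os}(\xi)$, and they are zero when $l>\alpha$.

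The leftover piece comes from a first-order Taylor remainder of $\ee^{\ii bx\xi}$ with an intermediate point. It yields the stated $\ii b\sum_h(\cdots)\int K^{(h)}(x)x^l\tilde x_h^{h+1}\,dx$ term. The cleanest route is to expand $\ee^{\ii bx\xi}=\sum_{m\le h}(\ii bx\xi)^m/m!+R_h$ for each $h$: the polynomial part combines with the $b^{-h}$ factors to give $c_l^{os}(\xi)$, and the remainder of order $(bx\xi)^{h+1}$ times $b^{-h}$ gives the $O(b)$ term.

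\textbf{Supersmooth case.} Here $1/f^{\mathrm{ft}}_\epsilon(t/b)=\ee^{\mu t^2/b^2}=\sum_h \mu^h t^{2h}/(h!\,b^{2h})$. Since $K^{\mathrm{ft}}$ is compactly supported, the series converges uniformly on the support and can be exchanged with the derivative at $s=b\xi$. Because all moments of $K$ vanish, every derivative of $K^{\mathrm{ft}}$ at $0$ is zero and $K^{\mathrm{ft}}\equiv1$ near $0$ (as for the flat-top kernel used in Section \ref{sec.Simulation}). For $b$ small, $b\xi$ lies in that flat region uniformly over compact $\Pi$. Differentiating the series term by term then gives exactly $l!\,b^{-l}c_l^{ss}(\xi)$ with no remainder, which explains why \eqref{Power of decon without ME Assm S} is exact.

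\textbf{Estimated-kernel piece $\mathcal{K}_{\epsilon,2}$.} The same Fourier representation gives
\[
\int (bx)^l\mathcal{K}_{\epsilon,2}(x)\ee^{\ii bx\xi}\,dx=b^l(-\ii)^l\,\partial_s^l\bigl[K^{\mathrm{ft}}(s)\psi_2(s/b)/f^{\mathrm{ft}}_\epsilon(s/b)\bigr]\Big|_{s=b\xi}.
\]
The bound then rests on two inputs:
\begin{itemize}
\item $\psi_2$ is quadratic in $\hat f^{\mathrm{ft}2}_\epsilon-f^{\mathrm{ft}2}_\epsilon$.
\item The uniform rates of \cite{kurisu2022uniform} for $\hat f^{\mathrm{ft}}_\epsilon$ and its derivatives on $|t|\le c_0/b$ are of order $\sqrt{\log(1/b)/n}$, using the moment bound in Assumption \ref{ass.D'}(ii). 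Dividing by $|f^{\mathrm{ft}}_\epsilon|$ costs at most $b^{-\alpha}$ (ordinary smooth) or $\ee^{\mu(1+b^{-1})^2}$ (supersmooth) per power.
\end{itemize}
Together these give $\sup|\psi_2|=O_p(n^{-1}\log(1/b)\,b^{-c\alpha})$, and the bandwidth conditions \ref{ass.O'}(i) and \ref{ass.S'}(i) are calibrated to make this $o_p(n^{-1/2})$. The absolute-moment bounds \eqref{Power of abs decon with ME} follow from Lemma \ref{lemma.power of tsf kernel}, since $\mathcal{K}_{\epsilon,2}$ differs from $K_{\epsilon,2}$ only through $1/f^{\mathrm{ft}}_\epsilon(t/b)$, which is polynomial in $t/b$.

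\textbf{Main obstacle.} The hardest step is the summability \eqref{Power of super kernel} over all $l$. Each $l$-th derivative carries growing combinatorial factors and powers of $b^{-1}$. I would control them with a Cauchy-estimate argument: bound the $l$-th derivative of an entire-type function on the compact support by $l!\,r^{-l}$ times a sup over a complex neighbourhood, so that the factor $b^l$ makes the series geometric once $r>b$. The exponential bandwidth condition in Assumption \ref{ass.S'}(i) then absorbs the extra $\ee^{\mu(\cdot)}$ growth.
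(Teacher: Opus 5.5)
Your Fourier-side representation is the right tool. For $l\le p-1$, your ordinary smooth computation (write $\mathcal{K}_\epsilon=\sum_h c_h^{os}\ii^h b^{-h}K^{(h)}$, integrate by parts, expand $\ee^{\ii bx\xi}$ to order $h$ in the $h$th term) is exactly the route encoded in the remainder of \eqref{Power of decon without ME Assm O}; the paper itself only refers to \cite{song2025specification}. Two deterministic steps nevertheless fail.

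First, at $l=p$ the terms with $m=h$ involve $\int u^{p}K(u)\,du$, which Assumption \ref{ass.O}(iii) requires to be nonzero. So an $O(1)$ term $\int u^pK(u)\,du\sum_h c_h^{os}\binom{p+h}{h}\xi^h$ survives, and at $\xi=0$ the left side is $\int u^pK(u)\,du\neq0$ while the right side is $0$. Your claim that everything vanishes for $l>\alpha$ is valid only for $l\le p-1$, which is all that Lemma \ref{lemma.main term known} uses.

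Second, in the supersmooth case, vanishing of all moments gives $(K^{\mathrm{ft}})^{(m)}(0)=0$ for $m\ge1$, not $K^{\mathrm{ft}}\equiv1$ near $0$. A smooth function is not determined by its Taylor series at a point: $K^{\mathrm{ft}}(t)=1-\ee^{-1/t^2}$ near $0$ (smoothly cut off) satisfies Assumption \ref{ass.S}(iii). Flatness is in fact necessary, since already for $l=0$ the left side of \eqref{Power of decon without ME Assm S} equals $K^{\mathrm{ft}}(b\xi)\ee^{\mu\xi^2}$. You must add $K^{\mathrm{ft}}=1$ on a neighbourhood of $0$ as an assumption (it holds for the kernel of Section \ref{sec.Simulation}). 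Once it is assumed, your Fourier argument is complete and cleaner than a term-by-term series expansion.

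The stochastic part has a more serious problem, located exactly at what you call the main obstacle. From your own Fourier formula and the chain rule, with $\rho=\psi_2/f^{\mathrm{ft}}_\epsilon$,
\[
\int (bx)^l\mathcal{K}_{\epsilon,2}(x)\ee^{\ii bx\xi}\,dx=(-\ii)^l\sum_{j=0}^{l}\binom{l}{j}b^{l-j}(K^{\mathrm{ft}})^{(l-j)}(b\xi)\,\rho^{(j)}(\xi).
\]
So $b^l$ is absorbed exactly by the chain rule, and only frequencies in the fixed compact $\Pi$ enter. This helps with \eqref{Power of decon with ME}: rates for $\hat f^{\mathrm{ft}}_\epsilon$ and finitely many derivatives on $\Pi$ give $O_p(n^{-1})$ with no bandwidth lower bound, so your supremum over $|t|\le c_0/b$ is not needed there.

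But it defeats the Cauchy-estimate plan for \eqref{Power of super kernel}. A bound $l!\,r^{-l}$ gives $\sum_l l!(b/r)^l=\infty$ for every $r$. Moreover, since $\rho$ is evaluated at $s/b$, its radius of analyticity in $s$ is itself proportional to $b$, so $b^l$ buys nothing. With a flat-top kernel the $l$th summand is exactly $\sup_{\xi\in\Pi}|\rho^{(l)}(\xi)|$. Since $\rho$ has singularities at the complex zeros of $n^{-1}\sum_j\cos[z(W_j-W_j^r)]$, $|\rho^{(l)}(\xi)|/l!$ decays no faster than $R^{-l}$, with $R$ the distance to the nearest singularity. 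Hence the unweighted series is infinite.

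What can be proved, and what the term $R^{ss}_{n,1}$ in the proof of Lemma \ref{lemma.main term unknown} actually contains, is the series weighted by $1/l!$, and for it your Cauchy estimate works in the $t$-variable:
\begin{itemize}
\item for Gaussian errors, $n^{-1}\sum_j\cos[z(W_j-W_j^r)]\to\ee^{-2\mu z^2}$ uniformly on complex compacts, and the limit has no zeros;
\item so $\rho$ is analytic with $\sup|\rho|=O_p(n^{-1})$ on the $r$-neighbourhood of $\Pi$ for any fixed $r>1$;
\item hence $\sum_l\sup_{\xi\in\Pi}|\rho^{(l)}(\xi)|/l!\le\frac{r}{r-1}\sup|\rho|$.
\end{itemize}

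Finally, \eqref{Power of abs decon with ME} does not follow from Lemma \ref{lemma.power of tsf kernel}. The identity $\mathcal{K}_{\epsilon,2}=\sum_h c_h^{os}\ii^hb^{-h}K_{\epsilon,2}^{(h)}$ involves derivatives of $K_{\epsilon,2}$ and factors up to $b^{-\alpha}$, which that lemma does not control. An $L^1$ bound in $x$ must come from smoothness of the Fourier transform over the whole support $|s|\le c_0$ (e.g.\ weighted Cauchy--Schwarz plus Plancherel). Only there do the rates of \cite{kurisu2022uniform} on $|t|\le c_0/b$ and Assumption \ref{ass.O'}(i) enter.
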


\begin{proof}[Proof of Lemma \ref{lemma.power of decon}]
    The proof of this lemma can be found in \cite{song2025specification}.
\end{proof}

\begin{lemma}\label{lemma.main term known}
    Suppose Assumption \ref{ass.D} holds, together with either Assumption \ref{ass.O} for the ordinary smooth case or Assumption \ref{ass.S} for the supersmooth case,
    \begin{align}\label{lemma.main term known eq1}
        &\sup_{\xi\in\Pi}\left\vert
        \begin{aligned}
            &\frac{1}{n}\sum_{i=1}^n\int\left[\left(Y_i-g(x;\theta_0)\right)^2-\sigma_0^2\right]\mathcal{K}_b\left(\frac{x-W_i}{b}\right)\ee^{\ii x\xi}\,dx\\
            &-\mathbb{E}\left[\left(U^2-\sigma_0^2\right)\ee^{\ii X\xi}\right]K^{\mathrm{ft}}(b\xi)
        \end{aligned}
        \right\vert=O_p\left(n^{-\frac{1}{2}}\right).
    \end{align}
    For the bootstrap version,
    \begin{align}\label{lemma.main term known eq2}
        &\sup_{\xi\in\Pi}\left\vert\frac{1}{n}\sum_{i=1}^nV_i\int\left[\left(Y_i-g(x;\theta_0)\right)^2-\sigma_0^2\right]\mathcal{K}_b\left(\frac{x-W_i}{b}\right)\ee^{\ii x\xi}\,dx\right\vert=O_p\left(n^{-\frac{1}{2}}\right).
    \end{align}
\end{lemma}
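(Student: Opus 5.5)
Write $\phi(x,y)=(y-g(x;\theta_0))^2-\sigma_0^2$ and
\[
T_i(\xi)=\int \phi(x,Y_i)\,\mathcal{K}_b\Bigl(\frac{x-W_i}{b}\Bigr)\ee^{\ii x\xi}\,dx .
\]
The plan is to split the left side of \eqref{lemma.main term known eq1} into a centred empirical process $n^{-1}\sum_i\{T_i(\xi)-\mathbb{E}T_i(\xi)\}$ and a bias term $\mathbb{E}T_1(\xi)-\mathbb{E}[(U^2-\sigma_0^2)\ee^{\ii X\xi}]K^{\mathrm{ft}}(b\xi)$. We show the bias is identically zero, and that the centred process, multiplied by $\sqrt n$, is uniformly tight over $\Pi$.

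\textbf{Bias.} Condition on $(Y,X)$ and use independence of $\epsilon$ from $(X,Y)$ (Assumption \ref{ass.D}(iii), applied with $U$ as a function of $(X,Y)$). The defining property of the deconvolution kernel is
\[
\mathbb{E}\bigl[b^{-1}\mathcal{K}_\epsilon((x-X-\epsilon)/b)\mid X\bigr]=b^{-1}K((x-X)/b),
\]
which holds because the Fourier transforms give $K^{\mathrm{ft}}(bt)f_\epsilon^{\mathrm{ft}}(t)/f_\epsilon^{\mathrm{ft}}(t)$. Fubini then yields
\[
\mathbb{E}T_1(\xi)=\mathbb{E}\int \phi(x,Y)\,b^{-1}K\bigl((x-X)/b\bigr)\ee^{\ii x\xi}\,dx .
\]
Substitute $x=X+bu$ and expand $\phi(X+bu,Y)$ in $u$. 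Since $\phi$ is a polynomial in $g$ and $Y$, the expansion involves only the $x$-derivatives of $g$ and $g^2$. For the Gaussian supersmooth case, Assumption \ref{ass.S}(i),(iii) make every moment of $K$ of order $l\geq1$ vanish. Hence only the $u^0$ term survives, and because $\int K(u)\ee^{\ii bu\xi}du=K^{\mathrm{ft}}(b\xi)$ we get exactly $\mathbb{E}[(U^2-\sigma_0^2)\ee^{\ii X\xi}]K^{\mathrm{ft}}(b\xi)$. One caveat is that the expansion is really of $\phi\,\ee^{\ii bu\xi}$ jointly. The cleaner route is to carry out the whole computation in the Fourier domain: $\mathbb{E}T_1(\xi)$ equals $\frac{1}{2\pi}\int K^{\mathrm{ft}}(bt)$ times the transform of $\mathbb{E}[\phi(\cdot,Y)\mid X]f_X$, convolved appropriately. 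For the ordinary smooth case this reduces to the Taylor expansion of order $p$ used in Lemma \ref{lemma.power of decon}, with remainder bounded via the Lipschitz conditions of Assumption \ref{ass.O}(i). That remainder is $O(b^p)=o(n^{-1/2})$ by Assumption \ref{ass.O}(iv), which suffices for the $O_p(n^{-1/2})$ claim.

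\textbf{Stochastic term.} Apply Lemma \ref{lemma.power of decon} after the substitution $x=W_i+bu$ and a Taylor expansion of $\phi(W_i+bu,Y_i)$ in $u$ around $W_i$. In the ordinary smooth case, \eqref{Power of decon without ME Assm O} gives
\[
T_i(\xi)=\ee^{\ii W_i\xi}\sum_{l=0}^{\alpha}c_l^{os}(\xi)\,\partial_x^l\phi(W_i,Y_i)+R_i(\xi),
\]
where $\mathbb{E}\sup_\xi|R_i|^2$ vanishes as $b\to0$ by the Lipschitz and integrability conditions. The supersmooth analogue uses \eqref{Power of decon without ME Assm S} with the $c_l^{ss}$. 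The leading part is $r_{1,\infty}(d_i;\xi)$ (resp.\ $r_{2,\infty}$), and Assumption \ref{ass.O}(v) (resp.\ \ref{ass.S}(v)) supplies a square-integrable envelope $\sup_\xi|r_{\cdot,\infty}|$. The class $\{\ee^{\ii w\xi}c_l(\xi):\xi\in\Pi\}$ is Lipschitz in $\xi$ on compact $\Pi$, with Lipschitz constant bounded by polynomials in $|W|$ that are controlled by the same moment conditions. It is therefore Donsker by a bracketing argument (van der Vaart and Wellner, Thm.~2.7.11), which gives $\sup_\xi|n^{-1/2}\sum(r-\mathbb{E}r)|=O_p(1)$. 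The remainder contributes $o_p(1)$ after $\sqrt n$ scaling by Chebyshev applied to the envelope.

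\textbf{Bootstrap version and main obstacle.} For \eqref{lemma.main term known eq2}, note that $V_i$ is independent of the data with mean zero and unit variance. So $n^{-1/2}\sum_iV_iT_i(\xi)$ has conditional mean zero, and the multiplier Donsker theorem (van der Vaart and Wellner, Thm.~2.9.6) applied to the same class gives uniform $O_p(1)$. No centring is needed because the multipliers kill the mean. The main obstacle is the exact bias computation in the ordinary smooth case: the Taylor remainder must be controlled uniformly in $\xi$ and shown to be $O(b^p)$, combining the oscillating factor $\ee^{\ii bu\xi}$ with the $p$-th order kernel moments. I would handle it entirely in the Fourier domain via $K^{\mathrm{ft}}(bt)-1$ supported away from zero.
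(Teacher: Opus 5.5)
Your treatment of the stochastic term and of the bootstrap statement \eqref{lemma.main term known eq2} is the paper's. You expand around $W_i$, use Lemma \ref{lemma.power of decon} to extract $r_{1,\infty}$ (resp.\ $r_{2,\infty}$), obtain the Donsker property from Lipschitz dependence on $\xi$ over compact $\Pi$ via Theorem 2.7.11 of \cite{van1996weak}, and let the mean-zero multipliers remove the mean. The quantitative fact you leave implicit for the ordinary smooth remainder is $\int|u|^l|\mathcal{K}_\epsilon(u)|\,du=O(b^{-\alpha})$ for $l\le p+1$. This makes the remainder $O(b^{p-\alpha})$, which is why $p>\alpha$ is imposed. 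Your conditioning step also requires $\epsilon$ to be independent of $(X,Y)$. Assumption \ref{ass.D}(iii) only gives independence from $X$, and ``applying it with $U$ as a function of $(X,Y)$'' does not upgrade this; the paper tacitly uses the same nondifferential-error strengthening.

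The bias is where you depart from the paper, and as written this step is not closed. The paper never substitutes $x=X+bu$. Instead it splits $(Y-g(x;\theta_0))^2-\sigma_0^2=(U^2-\sigma_0^2)+2U\{g(X;\theta_0)-g(x;\theta_0)\}+\{g(X;\theta_0)-g(x;\theta_0)\}^2$. After the deconvolution identity, the first piece gives $\mathbb{E}[(U^2-\sigma_0^2)\ee^{\ii X\xi}]K^{\mathrm{ft}}(b\xi)$ exactly, and the cross piece vanishes by $\mathbb{E}[U\mid X]=0$. For the quadratic piece, the paper computes at each fixed $x$ that $\mathbb{E}[h(X)b^{-1}K((x-X)/b)]=h(x)f_X(x)+O(b^p)$ for $h\in\{g^2,g,1\}$, and the leading terms cancel. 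Because $\ee^{\ii x\xi}$ stays outside the kernel integral, uniformity in $\xi$ is free, so the obstacle you flag never arises.

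Your proposed remedy fails under Assumption \ref{ass.O}(iii). The difference $K^{\mathrm{ft}}(bt)-1$ vanishes on a neighbourhood of $t=0$ only for flat-top kernels. A general $p$-th order kernel gives only $|K^{\mathrm{ft}}(s)-1|=O(|s|^p)$, and exploiting that in the Fourier domain needs moment conditions on $g^{\mathrm{ft}}$ and $(gf_X)^{\mathrm{ft}}$ that are not assumed. Moreover, after conditioning the relevant kernel is $K$, not $\mathcal{K}_\epsilon$, so Lemma \ref{lemma.power of decon} is not the tool for the bias.

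The repair is simpler: your original joint expansion already works. For $1\le l\le p-1$, expand $\ee^{\ii bu\xi}$ to order $p-l-1$ and use $\int u^kK(u)\,du=0$ for $1\le k\le p-1$. This gives $\bigl|b^l\int u^lK(u)\ee^{\ii bu\xi}\,du\bigr|\le b^p|\xi|^{p-l}\int|u|^p|K(u)|\,du/(p-l)!$. The order-$p$ Taylor remainder is at most $Cb^p\,\mathbb{E}(1+|Y|)$, because $g$ and $g^2$ have bounded $p$-th derivatives. Hence the bias is $O(b^p)=o(n^{-1/2})$ uniformly on $\Pi$ by Assumption \ref{ass.O}(iv). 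It is not ``identically zero'' as your opening asserts. This version needs only moments under the law of $(X,Y)$, whereas the paper's fixed-$x$ route draws on the Lebesgue-integrability conditions of Assumption \ref{ass.O}(i).
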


\begin{proof}[Proof of Lemma \ref{lemma.main term known}]
    Denote $H(Y,x) = (Y-g(x;\theta_0))^2-\sigma_0^2$. For the ordinary smooth, Assumption \ref{ass.O} implies that function $g$ is $p$-times continuously differentiable, which allows us to expand $H(Y_i,b\tilde{x}+W_i)$ around $W_i$ to the $p$-th order, where $x=b\tilde{x}+W_i$,
    \begin{align}\label{lemmaproof.main term known maintermdecomp}
        &\int\left[\left(Y_i-g(x;\theta_0)\right)^2-\sigma_0^2\right]\mathcal{K}_b\left(\frac{x-W_i}{b}\right)\ee^{\ii  x\xi}\,dx = r_{1,n}(d_i;\xi)+t_{1,n}(d_i;\xi),
    \end{align}
    where
    \begin{align*}
        &r_{1,n}(d_i;\xi) = \ee^{\ii  W_i\xi}\sum_{l=0}^{p-1}H^{(l)}(Y_i,W_i)\frac{b^l}{l!}\int \tilde{x}^l\mathcal{K}_\epsilon(\tilde{x})\ee^{\ii  b\tilde{x}\xi}\,d\tilde{x},\\
        &t_{1,n}(d_i;\xi) = \ee^{\ii  W_i\xi}\frac{b^p}{p!}\int H^{(p)}(Y_i,\tilde{W}_i)\tilde{x}^p\mathcal{K}_b\left(\tilde{x}\right)\ee^{\ii  b\tilde{x}\xi}\,d\tilde{x},
    \end{align*}
    and $\tilde{W}_i$ values between $W_i$ and $bx+W_i$. We claim that the variance of residual term $t_{1,n}$ is negligible, i.e., 
    \begin{align}\label{lemmaproof.main term known negligible residual}
        &\sup_{\xi \in \Pi}\left\vert \frac{1}{n}\sum_{i=1}^n \left\{ t_{1,n}(d_i;\xi) - \mathbb{E}\left[t_{1,n}(d_i;\xi)\right]\right\}\right\vert = o_p\left(n^{-\frac{1}{2}}\right).
    \end{align}
    To show \eqref{lemmaproof.main term known negligible residual}, 
    using Lipschitz continuity of $g$ and $g^2$ mentioned in Assumption \ref{ass.O}, 
    \begin{align*}
        &Var\left(\sup_{\xi \in \Pi}\left\vert \frac{1}{\sqrt{n}}\sum_{i=1}^n  t_{1,n}(d_i;\xi)\right\vert\right) \leq \mathbb{E}\left[\sup_{\xi \in \Pi}t_{1,n}(d_i;\xi)^2\right] \\
        = & O\left(b^{2p} \mathbb{E}\left\{\int \left\vert H^{(p)}(Y,\tilde{W})\right\vert\times\left\vert \tilde{x}^p\mathcal{K}_\epsilon(\tilde{x}) \right\vert\,d\tilde{x}\right\}^2\right)\\
        \leq & O\left(b^{2p} \mathbb{E}\left\{\int \left\vert H^{(p)}(Y,W)\right\vert\times\left\vert \tilde{x}^p\mathcal{K}_\epsilon(\tilde{x}) \right\vert\,d\tilde{x}\right\}^2\right)\\
        & + O\left(b^{2p+2} \mathbb{E}\left\{\int m(Y,W) \times\left\vert \tilde{x}^{p+1}\mathcal{K}_\epsilon(\tilde{x}) \right\vert\,d\tilde{x}\right\}^2\right)\\
        & + O\left(b^{2p+1} \mathbb{E}\left\{\int \left\vert H^{(p)}(Y,W)\right\vert \times m(Y,W) \times\left\vert \tilde{x}^{p+1}\mathcal{K}_\epsilon(\tilde{x}) \right\vert\,d\tilde{x}\right\}\right)\\
        = &O\left(b^{2(p-\alpha)}\right) = o\left(1\right).
    \end{align*}
    where $m(Y,W) = L_{[g^2]^{(p)}}(W)+L_{g^{(p)}}(W)\vert Y\vert$. The existence of the second moment of $m(Y,W)$ is guaranteed by Assumption \ref{ass.D} and \ref{ass.O}. Equation in the last line follows by $\sup_{0\leq l \leq p+1}\int \vert \tilde{x}^l\mathcal{K}(\tilde{x})\vert \,dx = O(b^{-\alpha})$ mentioned in Lemma 4 of \cite{dong2022nonparametric}. For the main term $r_{1,n}$, given \eqref{Power of decon without ME Assm O} in Lemma \ref{lemma.power of decon}, we obtain
    \begin{align*}
        &Var\left(\sup_{\xi \in \Pi}\vert \frac{1}{\sqrt{n}}\sum_{i=1}^n  \left[r_{1,n}(d_i;\xi) - r_{1,\infty}(d_i;\xi)\right]\vert\right) \\
        \leq & \mathbb{E}\left[\sup_{\xi \in \Pi}\left[r_{1,n}(d_i;\xi)-r_{1,\infty}(d_i;\xi)\right]^2\right] = O\left(b^2\right) = o\left(1\right),
    \end{align*}
    where $r_{1,\infty}(d_i;\xi)= \sum_{l=0}^{\alpha}c_l^{os}(\xi)\left[(Y_i-g(W_i;\theta_0))^2-\sigma_0^2\right]^{(l)}\ee^{\ii  W_i\xi}$ and $H^{(l)}(Y_i,x)$ represents for the derivative of $H(Y_i,x)$ with respect to $x$. Therefore,
    \begin{align}\label{lemmaproof.main term known negligible main and residual}
        &\sup_{\xi \in \Pi}\left\vert \frac{1}{n}\sum_{i=1}^n \left\{ r_{1,n}(d_i;\xi) - r_{1,\infty}(d_i;\xi) - \mathbb{E}\left[r_{1,n}(d_i;\xi)- r_{1,\infty}(d_i;\xi)\right]\right\}\right\vert = o_p\left(n^{-\frac{1}{2}}\right),
    \end{align}
    Noting that Assumption \ref{ass.O}(v) implies $\mathbb{E}\left[\int_\Pi r_{1,\infty}(d_i;\xi)^2d\xi\right]<\infty$, we thereby claim that
    \begin{align}\label{lemmaproof.main term known main}
        &\sup_{\xi\in\Pi}\left\vert\frac{1}{n}\sum_{i=1}^n  \left\{r_{1,\infty}(d_i;\xi) - \mathbb{E}\left[r_{1,\infty}(d_i;\xi)\right]\right\}\right\vert = O_p\left(n^{-\frac{1}{2}}\right).
    \end{align}
    This is because for each $\xi$, $c_l^{os}(\xi)$ is a polynomial function of $\xi$. Since the parameter space $\Pi$ is compact, both $\{c_l^{os}(\xi):\xi\in\Pi\}$ and their first derivatives are uniformly bounded over $\Pi$. Moreover, noting that $\vert \ee^{\ii W_i\xi}\vert=1$, differentiation with respect to $\xi$ yields
    \begin{align*}
        \sup_{\xi\in\Pi}\left\vert \partial_{\xi} r_{1,\infty}(d_i;\xi)\right\vert \leq C \left(1 + \vert W_i\vert\right)\sum_{l=0}^{\alpha}\left\vert\left[(Y_i - g(W_i;\theta_0))^2- \sigma_0^2\right]^{(l)}\right\vert
    \end{align*}
    for some finite constant $C>0$. Under the condition given in Assumption \ref{ass.O}(v), the right-hand side is square integrable, which implies that the class $\mathcal{F}=\{r_{1,\infty}(\cdot;\xi):\xi\in\Pi\}$ satisfies a uniform $L_2(\mathbb{P})$-Lipschitz condition in the parameter $\xi$. Since $\Pi$ is a compact set, this implies that $\mathcal{F}$ is a Euclidean class with an integrable envelope. Consequently, by Theorem 2.7.11 of \cite{van1996weak}, the class $\mathcal{F}$ is $\mathbb{P}$-Donsker. Combining \eqref{lemmaproof.main term known maintermdecomp}, \eqref{lemmaproof.main term known negligible main and residual} and \eqref{lemmaproof.main term known main}, 
    \begin{align}\label{lemmaproof.main term known eq1}
        &\sup_{\xi\in\Pi}\left\vert
        \begin{aligned}
            &\frac{1}{n}\sum_{i=1}^n\int\left[\left(Y_i-g(x;\theta_0)\right)^2-\sigma_0^2\right]\mathcal{K}_b\left(\frac{x-W_i}{b}\right)\ee^{\ii x\xi}\,dx\\
            &-\mathbb{E}\left\{\int\left[\left(Y-g(x;\theta_0)\right)^2-\sigma_0^2\right]\mathcal{K}_b\left(\frac{x-W}{b}\right)\ee^{\ii x\xi}\,dx\right\}
        \end{aligned}
        \right\vert=O_p\left(n^{-\frac{1}{2}}\right)
    \end{align}
    holds.

    For the bias term,
    \begin{align}\label{lemmaproof.main term known biastermdecomp}
        &\mathbb{E}\left\{\int\left[\left(Y-g(x;\theta_0)\right)^2-\sigma_0^2\right]\mathcal{K}_b\left(\frac{x-W}{b}\right)\ee^{\ii x\xi}\,dx\right\}\notag\\
        =&\mathbb{E}\left\{\int\left[\left(Y-g(X;\theta_0)\right)^2-\sigma_0^2\right]\mathcal{K}_b\left(\frac{x-W}{b}\right)\ee^{\ii x\xi}\,dx\right\}\notag\\
        &+2\mathbb{E}\left\{\int\left[\left(Y-g(X;\theta_0)\right)\left(g(X;\theta_0)-g(x;\theta_0)\right)\right]\mathcal{K}_b\left(\frac{x-W}{b}\right)\ee^{\ii x\xi}\,dx\right\}\notag\\
        &+\mathbb{E}\left\{\int\left(g(X;\theta_0)-g(x;\theta_0)\right)^2\mathcal{K}_b\left(\frac{x-W}{b}\right)\ee^{\ii x\xi}\,dx\right\}.
    \end{align}
    The first component of the decomposition \eqref{lemmaproof.main term known biastermdecomp} is characterized by equation, 
    \begin{align}\label{lemmaproof.main term known biastermdecomp main}
        &\mathbb{E}\left\{\int\left[\left(Y-g(X;\theta_0)\right)^2-\sigma_0^2\right]\mathcal{K}_b\left(\frac{x-W}{b}\right)\ee^{\ii x\xi}\,dx\right\}\notag\\
        =&\int\mathbb{E}\left[(U^2-\sigma_0^2)K_b\left(\frac{x-X}{b}\right)\right]\ee^{\ii x\xi}\,dx
    \end{align}
    which provides the form of the bias term. For the second component of \eqref{lemmaproof.main term known biastermdecomp}, noting $\mathbb{E}[Y-g(X;\theta_0)\mid X]=\mathbb{E}[U\mid X]=0$ and independence between $\epsilon$ and $X$, it follows that
    \begin{align}\label{lemmaproof.main term known biastermdecomp cross}
        \mathbb{E}\left\{\int\left[\left(Y-g(X;\theta_0)\right)\left(g(X;\theta_0)-g(x;\theta_0)\right)\right]\mathcal{K}_b\left(\frac{x-W}{b}\right)\ee^{\ii x\xi}\,dx\right\}=0.
    \end{align}
    As for the third component of \eqref{lemmaproof.main term known biastermdecomp},
    \begin{align*}
        &\mathbb{E}\left[g^2(X;\theta_0)\mathcal{K}_b\left(\frac{x-W}{b}\right)\right] = \mathbb{E}\left[g^2(X;\theta_0)K_b\left(\frac{x-X}{b}\right)\right] \\
        = &\int g^2(y;\theta_0)K_b\left(\frac{x-y}{b}\right)f_X(y)\,dy=\int g^2(x-by;\theta_0)f_X(x-by)K_b\left(y\right)\,dy\\
        = &g^2(x;\theta_0)f_X(x)+b^p\Delta_1(x).
    \end{align*}
    Denote $\mu_K^p = \int k(u)\left\vert u\right\vert^p du$, along with the Lipschitz continuity mentioned in Assumption \ref{ass.O}, 
    \begin{align*}
        \left\vert\Delta_1(x)\right\vert = \left\vert\frac{1}{p!}\int \left[g^2 f_X\right]^{(p)}(\tilde{y})K(y)y^p\,dy\right\vert\leq \frac{1}{p!}\left\{ \left[g^2 f_X\right]^{(p)}(x)\mu_K^p+ bL_{[g^2f_X]^{(p)}}(x)\mu_K^{p+1}\right\},
    \end{align*}
    where $\tilde{y}$ lies between $x$ and $x-by$. Thus,
    \begin{align*}
        \mathbb{E}\left[g^2(X;\theta_0)\mathcal{K}_b\left(\frac{x-W}{b}\right)\right] = g^2(x;\theta_0)f_X(x)+o\left(n^{-\frac{1}{2}}\right)
    \end{align*}
    is implied by integrability of $L_{[g^2]^{(p)}}(W)$ and the undersmoothing property mentioned in Assumption \ref{ass.O}. By similar arguments,
    \begin{align*}
        \mathbb{E}\left[g(X;\theta_0)\mathcal{K}_b\left(\frac{x-W}{b}\right)\right] = g(x;\theta_0)f_X(x)+o\left(n^{-\frac{1}{2}}\right),\quad \mathbb{E}\left[\mathcal{K}_b\left(\frac{x-W}{b}\right)\right] = f_X(x)+o\left(n^{-\frac{1}{2}}\right).
    \end{align*}
    Consequently, for the third component of \eqref{lemmaproof.main term known biastermdecomp},
    \begin{align}\label{lemmaproof.main term known biastermdecomp qua}
        \left\vert\mathbb{E}\left\{\int\left(g(X;\theta_0)-g(x;\theta_0)\right)^2\mathcal{K}_b\left(\frac{x-W}{b}\right)\ee^{\ii x\xi}\,dx\right\}\right\vert = o\left(n^{-\frac{1}{2}}\right).
    \end{align}
    Combining \eqref{lemmaproof.main term known eq1}, \eqref{lemmaproof.main term known biastermdecomp}, \eqref{lemmaproof.main term known biastermdecomp main},  \eqref{lemmaproof.main term known biastermdecomp cross} and \eqref{lemmaproof.main term known biastermdecomp qua}, \eqref{lemma.main term known eq1} follows for the ordinary smooth case. 

    For the supersmooth case, main terms, $r_{2,\infty}(d_i;\xi) = \sum_{l=0}^{\infty}c_l^{ss}(\xi)\left[(Y_i-g(W_i;\theta_0))^2-\sigma_0^2\right]^{(l)}\ee^{\ii W_i\xi}$, Lemma \ref{lemma.power of decon},
    \begin{align}\label{lemmaproof.main term known biastermdecomp super}
        &\int \left[(Y_i-g(x;\theta_0)^2-\sigma_0^2\right]\mathcal{K}_b\left(\frac{x-W_i}{b}\right)\ee^{\ii x\xi}\,dx =r_{2,n}(d_i; \xi)
    \end{align}
    Note that Assumption \ref{ass.S}(v) implies $\mathbb{E}\left[\int_\Pi r_{2,\infty}(d_i;\xi)^2d\xi\right]<\infty$, by similar arguments to ordinary smooth case,
    \begin{align}\label{lemmaproof.main term known main super}
        &\sup_{\xi\in\Pi}\left\vert\frac{1}{n}\sum_{i=1}^n  \left\{r_{2,\infty}(d_i;\xi) - \mathbb{E}\left[r_{2,\infty}(d_i;\xi)\right]\right\}\right\vert = O_p\left(n^{-\frac{1}{2}}\right).
    \end{align}
    Thus, \eqref{lemmaproof.main term known eq1} is established by combining \eqref{lemmaproof.main term known biastermdecomp super} and \eqref{lemmaproof.main term known main super}.

    We note that
    \begin{align*}
        &\mathbb{E}\left[g^2(X;\theta_0)\mathcal{K}_b\left(\frac{x-W}{b}\right)\right] = \int g^2(y;\theta_0)K_b\left(\frac{x-y}{b}\right)f_X(y)\,dy\\
        =&\int g^2(x-by;\theta_0)f_X(x-by)K_b\left(y\right)\,dy= g^2(x;\theta_0)f_X(x).
    \end{align*}
    By similar arguments, 
    \begin{align*}
        \mathbb{E}\left[g(X;\theta_0)\mathcal{K}_b\left(\frac{x-W}{b}\right)\right] = g(x;\theta_0)f_X(x),\quad \mathbb{E}\left[\mathcal{K}_b\left(\frac{x-W}{b}\right)\right] = f_X(x).
    \end{align*}
    Thus, 
    \begin{align}\label{lemmaproof.main term known biastermdecomp qua super}
        \mathbb{E}\left\{\int\left(g(X;\theta_0)-g(x;\theta_0)\right)^2\mathcal{K}_b\left(\frac{x-W}{b}\right)\ee^{\ii x\xi}\,dx\right\}= 0.
    \end{align}
    Combining \eqref{lemmaproof.main term known biastermdecomp cross}, \eqref{lemmaproof.main term known biastermdecomp main}, \eqref{lemmaproof.main term known biastermdecomp super}, \eqref{lemmaproof.main term known main super} and \eqref{lemmaproof.main term known biastermdecomp qua super}, \eqref{lemma.main term known eq1} follows for the supersmooth case. 
    
    For bootstrap version, notice the independence between $(Y,X,W)$ and $V$, together with unit variance of $V$, we claim our proofs above still hold except
    \begin{align*}
        \mathbb{E}\left\{V\int\left[\left(Y-g(X;\theta_0)\right)^2-\sigma_0^2\right]\mathcal{K}_b\left(\frac{x-W}{b}\right)\ee^{\ii x\xi}\,dx\right\}=0,
    \end{align*}
    which implies
    \begin{align}\label{Cov of Mean under Ass O boot}
        &\sup_{\xi\in\Pi}\left\vert\mathbb{E}\left\{V\int\left[\left(Y-g(x;\theta_0)\right)^2-\sigma_0^2\right]\mathcal{K}_b\left(\frac{x-W}{b}\right)\ee^{\ii x\xi}\,dx\right\}\right\vert=o_p\left(n^{-\frac{1}{2}}\right).
    \end{align}
    Thus, \eqref{lemma.main term known eq2} holds. Given the similarity between the bootstrap version \eqref{lemma.main term known eq2} and equation \eqref{lemma.main term known eq1}, the bootstrap proofs are omitted hereafter unless stated otherwise.
\end{proof}

\begin{lemma}\label{lemma.main term unknown}
    Suppose that Assumption \ref{ass.D} and \ref{ass.D'} hold, along with either Assumption \ref{ass.O} and \ref{ass.O'} for the ordinary smooth case or Assumption \ref{ass.S} and \ref{ass.S'} for the supersmooth case,
    \begin{align}\label{lemma.main term unknown eq1}
        &\sup_{\xi\in\Pi}\left\vert
        \begin{aligned}
            &\frac{1}{n}\sum_{i=1}^n\int\left[\left(Y_i-g(x;\theta_0)\right)^2-\sigma_0^2\right]\hat{\mathcal{K}}_b\left(\frac{x-W_i}{b}\right)\ee^{\ii x\xi}\,dx\\
            &-\mathbb{E}\left[\left(U^2-\sigma_0^2\right)\ee^{\ii X\xi}\right]K^{\mathrm{ft}}(b\xi)
        \end{aligned}
        \right\vert=O_p\left(n^{-\frac{1}{2}}\right).
    \end{align}
    For the bootstrap version with an unknown distribution of measurement errors,
    \begin{align}\label{lemma.main term unknown eq2}
        &\sup_{\xi\in\Pi}\left\vert
        \begin{aligned}
            &\frac{1}{n}\sum_{i=1}^n\int\left[\left(Y_i-g(x;\theta_0)\right)^2-\sigma_0^2\right]\hat{\mathcal{K}}_b^\ast\left(\frac{x-W_i}{b}\right)\ee^{\ii x\xi}\,dx\\
            &-\mathbb{E}\left[\left(U^2-\sigma_0^2\right)\ee^{\ii X\xi}\right]K^{\mathrm{ft}}(b\xi)
        \end{aligned}
        \right\vert=O_p\left(n^{-\frac{1}{2}}\right).
    \end{align}
\end{lemma}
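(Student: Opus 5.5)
The plan is to use the decomposition $\hat{\mathcal{K}}_\epsilon=\mathcal{K}_\epsilon+\mathcal{K}_{\epsilon,1}+\mathcal{K}_{\epsilon,2}$ from Appendix \ref{sec.AppendixB}. It splits the left-hand side of \eqref{lemma.main term unknown eq1} into three pieces:
\begin{align*}
\hat{S}_n(\xi,\theta_0,\sigma_0^2)=S_n(\xi,\theta_0,\sigma_0^2)+S_{n,1}(\xi,\theta_0,\sigma_0^2)+S_{n,2}(\xi,\theta_0,\sigma_0^2).
\end{align*}
Here $S_{n,j}$ uses the kernel $\mathcal{K}_{\epsilon,j}$ in place of $\mathcal{K}_\epsilon$. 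The first piece is handled directly by Lemma \ref{lemma.main term known}: it already equals $\mathbb{E}[(U^2-\sigma_0^2)\ee^{\ii X\xi}]K^{\mathrm{ft}}(b\xi)+O_p(n^{-1/2})$ uniformly in $\xi$. It therefore remains to show that $S_{n,1}$ is $O_p(n^{-1/2})$ and that $S_{n,2}$ is $o_p(n^{-1/2})$, both uniformly over $\Pi$.

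\textbf{The second-order piece $S_{n,2}$.} I would Taylor-expand $H(Y_i,bx+W_i)$ around $W_i$ exactly as in the proof of Lemma \ref{lemma.main term known}. The terms of order $l<p$ then contain $\int(bx)^l\mathcal{K}_{\epsilon,2}(x)\ee^{\ii bx\xi}dx$, which is $o_p(n^{-1/2})$ uniformly by \eqref{Power of decon with ME}. These moments do not depend on $i$, so they factor out of the sample average. The averages $n^{-1}\sum_i|H^{(l)}(Y_i,W_i)|$ are $O_p(1)$ by Assumptions \ref{ass.D} and \ref{ass.O}(v). The order-$p$ remainder is controlled by the Lipschitz bounds of Assumption \ref{ass.O}(i) together with \eqref{Power of abs decon with ME}. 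In the supersmooth case, the full series is instead bounded by \eqref{Power of super kernel} and Assumption \ref{ass.S}(v).

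\textbf{The linear piece $S_{n,1}$.} This piece is linear in $\psi_1(t)=\frac{1}{2}-\frac{1}{n}\sum_j\zeta_j(t)/(2[f^{\mathrm{ft}}_\epsilon(t)]^2)=n^{-1}\sum_j\Pi_{\epsilon,j}(t)$. Substituting this expression turns $S_{n,1}$ into a V-statistic $n^{-2}\sum_{i,j}\int H(Y_i,x)\,b^{-1}\mathcal{K}_{\epsilon}^{(j)}((x-W_i)/b)\ee^{\ii x\xi}dx$, where $\mathcal{K}_{\epsilon}^{(j)}$ denotes the deconvolution kernel with $\Pi_{\epsilon,j}(t/b)$ inserted into the Fourier integrand. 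I would apply the Hoeffding decomposition.
\begin{enumerate}[label=(\alph*)]
\item The projection onto $D_j$ is obtained by conditioning on $D_j$ and integrating over $(Y_i,W_i)$. By Parseval, using $\mathbb{E}[H(Y,x)b^{-1}\mathcal{K}_b((x-W)/b)]$, which deconvolves to the $X$-side expectation, this gives $K^{\mathrm{ft}}(b\xi)\mathbb{E}[(U^2-\sigma_0^2)\ee^{\ii X\xi}]\Pi_{\epsilon,j}(\xi)$ together with the three Fourier-convolution terms $[g^2f_X]^{\mathrm{ft}}(\xi)\Pi_{\epsilon,j}(\xi)$, $\frac{1}{2\pi}\int f_X^{\mathrm{ft}}(t)(g^2)^{\mathrm{ft}}(\xi-t)\Pi_{\epsilon,j}(t)dt$ and $-\frac{1}{\pi}\int(gf_X)^{\mathrm{ft}}(t)g^{\mathrm{ft}}(\xi-t)\Pi_{\epsilon,j}(t)dt$. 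These are the terms appearing in $r^{\epsilon,os}_\infty$. Since $\mathbb{E}\Pi_\epsilon(t)=0$ by symmetry and the replicate structure, this projection is centered. Assumption \ref{ass.O'}(ii) makes the class indexed by $\xi$ Donsker via the same Lipschitz-in-$\xi$ argument used for $r_{1,\infty}$, so the projection is $O_p(n^{-1/2})$ uniformly.
\item The projection onto $D_i$ has mean zero, because $\mathbb{E}\Pi_{\epsilon,j}=0$ for $j\neq i$.
\item The degenerate part is $O_p(n^{-1}b^{-c\alpha})$. The lower bandwidth bound $nb^{10\alpha+6}\log(1/b)^{-4}\to\infty$ of Assumption \ref{ass.O'}(i), or Assumption \ref{ass.S'}(i) in the supersmooth case, makes this $o_p(n^{-1/2})$. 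The diagonal terms $i=j$ are $O_p(n^{-1}b^{-\alpha})$ and are negligible in the same way.
\end{enumerate}

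\textbf{The main obstacle.} I expect the hardest part to be the uniform-in-$\xi$ control of the degenerate U-process and of $\psi_2$ over the growing frequency range $|t|\le c_0/b$. This is where the uniform convergence of $\hat f^{\mathrm{ft}}_\epsilon$ from \cite{kurisu2022uniform} and the moment condition of Assumption \ref{ass.D'}(ii) enter. I would first try a chaining/maximal inequality on a grid of $\xi$ exploiting the Lipschitz dependence on $\xi$, rather than a direct supremum bound.

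\textbf{The bootstrap statement.} For \eqref{lemma.main term unknown eq2}, the same decomposition applies with $\Pi_{\epsilon,j}$ replaced by $V_j^\ast\Pi_{\epsilon,j}$ plus $(V_j^\ast-1)$-type terms. The unit mean and unit variance of $V^\ast$, together with its independence from the data, leave all the orders above unchanged.
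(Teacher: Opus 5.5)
Your architecture matches the paper's. You use the same split $\hat{\mathcal K}_\epsilon=\mathcal K_\epsilon+\mathcal K_{\epsilon,1}+\mathcal K_{\epsilon,2}$, Lemma \ref{lemma.main term known} for the leading piece, and a diagonal-plus-U-statistic treatment of $S_{n,1}$ whose H\'ajek projection yields the $\Pi_\epsilon$-terms of $r^{\epsilon,os}_\infty$. A bandwidth lower bound kills the degenerate part, and a Taylor expansion of $H(Y_i,bx+W_i)$ handles $S_{n,2}$. Your one departure is in $S_{n,2}$: you leave the averages $n^{-1}\sum_i\ee^{\ii W_i\xi}H^{(l)}(Y_i,W_i)$ uncentered, bound them by $O_p(1)$, and let the kernel moments carry the rate. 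In the ordinary smooth case this is valid and shorter than the paper's route. There \eqref{Power of decon with ME} and \eqref{Power of abs decon with ME} give $o_p(n^{-1/2})$ for every $l\le p+1$. So you never need the paper's separate evaluation of the mean term $R^{os}_{n,3}$ via the vanishing moments of $K_{\epsilon,2}$ (Lemma \ref{lemma.power of tsf kernel}).

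In the supersmooth case, however, the step fails as written. The bound you cite, \eqref{Power of super kernel}, only gives $\sum_{l\ge0}\sup_{\xi\in\Pi}\vert\int(bx)^l\mathcal K_{\epsilon,2}(x)\ee^{\ii bx\xi}dx\vert=o_p(1)$. Multiplying $O_p(1)$ averages by this sum gives $S_{n,2}=o_p(1)$, not the $o_p(n^{-1/2})$ you claim, and not even the $O_p(n^{-1/2})$ the lemma needs. The missing idea is the paper's centering, which splits $S_{n,2}=R^{ss}_{n,1}+R^{ss}_{n,2}$:
\begin{enumerate}[label=(\roman*)]
\item $R^{ss}_{n,1}$ carries the centered averages, which are $O_p(n^{-1/2})$ uniformly in $\xi$ by a Donsker argument. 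Multiplying them by the $o_p(1)$ moment sum gives $o_p(n^{-1/2})$.
\item $R^{ss}_{n,2}=\int\mathbb{E}[H(Y,x)\mathcal K_{b,2}((x-W)/b)]\ee^{\ii x\xi}dx$, with the expectation over $(Y,W)$ and the kernel held fixed, is an exact bias term. Deconvolution turns $\mathcal K_{\epsilon,2}$ into $K_{\epsilon,2}$, and the cross term vanishes by $\mathbb{E}[U\mid X]=0$. The $g$-dependent terms vanish identically because $K_{\epsilon,2}$ inherits all vanishing moments of the infinite-order kernel. What remains is $\mathbb{E}[(U^2-\sigma_0^2)\ee^{\ii X\xi}]K^{\mathrm{ft}}(b\xi)\psi_2(\xi)$. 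This involves $\psi_2$ only on the fixed compact set $\Pi$, where it is quadratic in the estimation error of $\hat f^{\mathrm{ft}}_\epsilon$, so it is negligible.
\end{enumerate}
Without this decomposition, or a sharper $o_p(n^{-1/2})$ version of \eqref{Power of super kernel} that the paper does not supply, your supersmooth argument for $S_{n,2}$ is incomplete.
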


\begin{proof}[Proof of Lemma \ref{lemma.main term unknown}]
    In the absence of information about distribution of measurement error, we start be decomposing $\hat{S}_n(\xi,\theta_0,\sigma^2_0) - S_n(\xi,\theta_0,\sigma^2_0)= S_{n,1}(\xi,\theta_0,\sigma^2_0)+S_{n,2}(\xi,\theta_0,\sigma^2_0)$, where
    \begin{align*}
        &S_{n,1}(\xi,\theta_0,\sigma^2_0) = \frac{1}{n}\sum_{i=1}^n \int \left[\left(Y_i-g(x;\theta_0)\right)^2-\sigma_0^2\right]\mathcal{K}_{b,1}\left(\frac{x-W_i}{b}\right)\ee^{\ii x\xi}\,dx,\notag\\
        &S_{n,2}(\xi,\theta_0,\sigma^2_0) = \frac{1}{n}\sum_{i=1}^n \int \left[\left(Y_i-g(x;\theta_0)\right)^2-\sigma_0^2\right]\mathcal{K}_{b,2}\left(\frac{x-W_i}{b}\right)\ee^{\ii x\xi}\,dx.
    \end{align*}
    For the first term $S_{n,1}$, we define $p(D_i,D_j;\xi)$ as
    \begin{align*}
        p(D_i,D_j;\xi) = &\frac{1}{2\pi b}\iint \left[\left(Y_i-g(x;\theta_0)\right)^2-\sigma_0^2\right]\ee^{-\ii \frac{x-W_i}{b}t}\frac{K^{\mathrm{ft}}(t)}{f^{\mathrm{ft}}_\epsilon(\frac{t}{b})}\Pi_{\epsilon,j}(\frac{t}{b})\ee^{\ii x\xi}\,dx\,dt\\
        =&\frac{1}{2\pi}\iint \left[\left(Y_i-g(x;\theta_0)\right)^2-\sigma_0^2\right]\ee^{-\ii (x-W_i)t}\frac{K^{\mathrm{ft}}(bt)}{f^{\mathrm{ft}}_\epsilon(t)}\Pi_{\epsilon,j}(t)\ee^{\ii x\xi}\,dx\,dt,
    \end{align*}
    where $\Pi_{\epsilon,j}(\cdot)$ is denoted by
    \begin{align}\label{part of psi}
        &\Pi_{\epsilon,j}\left(t\right) = \frac{\left[f_\epsilon^{\mathrm{ft}}(t)\right]^2-\zeta_j(t)}{2\left[f_\epsilon^{\mathrm{ft}}(t)\right]^2}.
    \end{align}
    Then, we can write $S_{n,1}(\xi,\theta_0,\sigma^2_0) = S_{n,11}(\xi,\theta_0,\sigma^2_0)+nS_{n,12}(\xi,\theta_0,\sigma^2_0)/(n-1)$,
    \begin{align*}
        &S_{n,11}(\xi,\theta_0,\sigma^2_0) = \frac{1}{n^2}\sum_{i=1}^np(D_i,D_i;\xi), \quad S_{n,12}(\xi,\theta_0,\sigma^2_0) = \frac{1}{n(n-1)}\sum_{i\neq j}^n p(D_i,D_j;\xi).
    \end{align*}
    By similar arguments to the proof of Lemma \ref{lemma.power of tsf kernel}, we obtain
    \begin{align*}
        \mathbb{E}\left[\frac{1}{2\pi}\int\left\vert \int x^{l}\ee^{-\ii xt}\frac{K^{\mathrm{ft}}(bt)}{f^{\mathrm{ft}}_\epsilon(t)}\Pi_{\epsilon,j}(t)\,dt\right\vert dx\right]=o_p\left(n^{\frac{1}{2}}\right)\quad \text{for }l\leq p+1
    \end{align*}
    when $b^{-3\alpha-2}=o(n^{1/2})$ which holds under Assumption \ref{ass.O'}. Still note $H(Y,x)=(Y-g(x;\theta_0))^2-\sigma_0^2$, along with Assumption \ref{ass.O} and \ref{ass.O'}, we obtain
    \begin{align*}
        &\mathbb{E}\left[\sup_{\xi\in\Pi}\left\vert\frac{1}{2\pi}\iint H(Y_i,x+W_i)\ee^{-\ii xt}\frac{K^{\mathrm{ft}}(bt)}{f^{\mathrm{ft}}_\epsilon(t)}\Pi_{\epsilon,j}(t)e^{\ii x\xi}\,dx\,dt\right\vert\right]\\
        \leq&\mathbb{E}\sup_{\xi\in\Pi}\left\{
        \begin{aligned}
            &\frac{1}{2\pi}\sum_{l=0}^{p-1}\frac{\left\vert H^{(l)}(Y_i,W_i)\right\vert}{l!}\left\vert\iint x^l\ee^{-\ii xt}\frac{K^{\mathrm{ft}}(bt)}{f^{\mathrm{ft}}_\epsilon(t)}\Pi_{\epsilon,j}(t)e^{\ii x\xi}\,dx\,dt\right\vert\\
            &+\frac{1}{2\pi p!}\left\vert\iint H^{(p)}(Y_i,\bar{W})x^l\ee^{-\ii xt}\frac{K^{\mathrm{ft}}(bt)}{f^{\mathrm{ft}}_\epsilon(t)}\Pi_{\epsilon,j}(t)e^{\ii x\xi}\,dx\,dt\right\vert\\
        \end{aligned}
        \right\}\\
        \leq&\mathbb{E}\left\{
        \begin{aligned}
            &\frac{1}{2\pi}\sum_{l=0}^p\frac{\left\vert H^{(l)}(Y_i,W_i)\right\vert}{l!}\int\left\vert \int x^l\ee^{-\ii xt}\frac{K^{\mathrm{ft}}(bt)}{f^{\mathrm{ft}}_\epsilon(t)}\Pi_{\epsilon,j}(t)\,dt\right\vert\,dx \\
            &+ \frac{1}{2\pi p!} \left\vert m(Y_i,W_i)\right\vert\int\left\vert \int x^{p+1}\ee^{-\ii xt}\frac{K^{\mathrm{ft}}(bt)}{f^{\mathrm{ft}}_\epsilon(t)}\Pi_{\epsilon,j}(t)\,dt\right\vert\,dx
        \end{aligned}
        \right\}=o_p\left(n^{\frac{1}{2}}\right),
    \end{align*}
    where $\bar{W}$ lies between $W_i$ and $W_i+x$ and $m(Y,W)$ as mentioned in proof of Lemma \ref{lemma.main term known}. 
    Thus, $S_{n,11}(\xi,\theta_0)=o_p(n^{-1/2})$ holds. By similar arguments, $\mathbb{E}[\sup_{\xi\in\Pi} p^2(D_i,D_i;\xi)]=o_p(n)$ follows by $b^{-6\alpha-4}=o(n)$ which holds under Assumption \ref{ass.O'}.  Notice that $S_{n,12}$ is a second-order U-statistic with symmetric kernel $q(D_i,D_j,\xi) = [p(D_i,D_j,\xi)+ p(D_j,D_i,\xi)]/2$ and H\'{a}jek projection
    \begin{align}\label{lemmaproof.main term unknown hajek proj}
        \hat{S}_{n,12}(\xi,\theta_0,\sigma^2_0) = \mathbb{E}\left[p_1(D_i,\xi)\right]+\frac{2}{n}\sum_{i=1}^n\left\{p_1(D_i,\xi)-\mathbb{E}\left[p_1(D_i,\xi)\right]\right\}.
    \end{align}
    Provided $\mathbb{E}(U\mid X)=0$, we note that
    \begin{align}\label{lemmaproof.main term unknown first term}
        &2p_1(D_i,\xi) = \mathbb{E}\left[2q(D_i,D_j,\xi)\vert D_i\right]\notag\\
        =&\frac{1}{2\pi}\iint \mathbb{E}\left\{\left[\left(Y-g(X;\theta_0)\right)^2-\sigma_0^2\right]\ee^{\ii Wt}\right\}\frac{K^{\mathrm{ft}}(bt)}{f^{\mathrm{ft}}_\epsilon(t)}\Pi_{\epsilon,i}(t)\ee^{\ii x(\xi-t)}\,dx\,dt\notag\\
        &+\frac{1}{2\pi}\iint \mathbb{E}\left[\left(g(X;\theta_0)-g(x;\theta_0)\right)^2\ee^{\ii Xt}\right]K^{\mathrm{ft}}(bt)\Pi_{\epsilon,i}(t)\ee^{\ii x(\xi-t)}\,dx\,dt,\notag\\
        =&\mathbb{E}\left\{\left[\left(Y-g(X;\theta_0)\right)^2-\sigma_0^2\right]\ee^{\ii W\xi}\right\}K^{\mathrm{ft}}(b\xi)\Pi_{\epsilon,i}(\xi)+\left[g^2 f_X\right]^{\mathrm{ft}}(\xi)K^{\mathrm{ft}}(b\xi)\Pi_{\epsilon,i}(\xi)\notag\\
        &+\frac{1}{2\pi}\int \left[f_X^{\mathrm{ft}}(t)(g^2)^{\mathrm{ft}}(\xi-t)-
        2(g f_X)^{\mathrm{ft}}(t)g^{\mathrm{ft}}(\xi-t) 
        \right]K^{\mathrm{ft}}(bt)\Pi_{\epsilon,i}(t)dt,
    \end{align}
    where $(g f_X)^{\mathrm{ft}}(t)$ represents $\int g(x;\theta_0)f_X(x)\ee^{\ii xt}\,dx$. Consequently,
    \begin{align}\label{lemmaproof.main term unknown hajek proj and U}
        \sup_{\xi\in\Pi}\left\vert S_{n,12}(\xi,\theta_0,\sigma^2_0) - \hat{S}_{n,12}(\xi,\theta_0,\sigma^2_0)\right\vert = o_p\left(n^{-\frac{1}{2}}\right).
    \end{align}
    Furthermore, Assumption \ref{ass.O} provides that $K^{\mathrm{ft}}(bt)$ goes to $1$, together with Assumption \ref{ass.O'}, we obtain
    \begin{align}\label{lemmaproof.main term unknown hajek and limit}
        \sup_{\xi\in\Pi}\left\vert 
        \begin{aligned}
            &2p_1(D_i,\xi)\\
            &-\frac{1}{2\pi}\int \left[f_X^{\mathrm{ft}}(t)(g^2)^{\mathrm{ft}}(\xi-t)-
        2(g f_X)^{\mathrm{ft}}(t)g^{\mathrm{ft}}(\xi-t)\right]\Pi_{\epsilon,i}(t)\,dt\\
            &-\left[g^2f_X\right]^{\mathrm{ft}}(\xi)\Pi_{\epsilon,i}(\xi)-\mathbb{E}\left[\left(U^2-\sigma_0^2\right)\ee^{\ii X\xi}\right]\Pi_{\epsilon,i}(\xi)
        \end{aligned}
        \right\vert=o_p\left(1\right).
    \end{align}
    by using the dominant convergence theorem. Combining \eqref{lemmaproof.main term unknown hajek and limit} and Assumption \ref{ass.O'}, we obtain
    \begin{align}\label{lemmaproof.main term unknown hajek convergence}
        \sup_{\xi\in\Pi}\left\vert\frac{1}{n}\sum_{i=1}^n\left\{p_1(D_i,\xi)-\mathbb{E}\left[p_1(D_i,\xi)\right]\right\}\right\vert=O_p(n^{-\frac{1}{2}}).
    \end{align}
    Hence, we claim that $S_{n,12}(\xi,\theta_0,\sigma^2_0)=O_p(n^{-1/2})$ by combining \eqref{lemmaproof.main term unknown hajek proj}, \eqref{lemmaproof.main term unknown hajek proj and U}, \eqref{lemmaproof.main term unknown hajek convergence} and $\mathbb{E}[p_1(D,\xi)]=0$. Along with $S_{n,11}(\xi,\theta_0,\sigma^2_0)=o_p(n^{-1/2})$, we obtain
    \begin{align}\label{lemmaproof.main term unknown Sn1}
        &\sup_{\xi\in\Pi}\left\vert 
        \begin{aligned}
            &S_{n,1}(\xi,\theta_0,\sigma^2_0)-\mathbb{E}\left[\left(U^2-\sigma_0^2\right)\ee^{\ii X\xi}\right]K^{\mathrm{ft}}(b\xi)\left[\frac{1}{n}\sum_{i=1}^n\Pi_{\epsilon,i}(\xi)\right]\\
            &-\left[g^2 f_X\right]^{\mathrm{ft}}(\xi)\left[\frac{1}{n}\sum_{i=1}^n\Pi_{\epsilon,i}(\xi)\right]\\
            &-\frac{1}{2\pi n}\sum_{i=1}^n\int f_X^{\mathrm{ft}}(t)(g^2)^{\mathrm{ft}}(\xi-t)\Pi_{\epsilon,i}(t)\,dt\\
            &+\frac{1}{\pi n}\sum_{i=1}^n\int (g f_X)^{\mathrm{ft}}(t)g^{\mathrm{ft}}(\xi-t)\Pi_{\epsilon,i}(t)\,dt
        \end{aligned}
         \right\vert=o_p(n^{-\frac{1}{2}}).
    \end{align}

    For the second term, 
    \begin{align}\label{lemmaproof.main term unknown Rn123}
        S_{n,2}(\xi,\theta_0,\sigma^2_0) =& \frac{1}{n}\sum_{i=1}^n \ee^{\ii W_i\xi} \int H(Y_i, bx+W_i)\mathcal{K}_{\epsilon,2}(x)\ee^{\ii bx\xi}\,dx\notag\\
        =&R^{os}_{n,1}(\xi,\theta_0,\sigma^2_0) + R^{os}_{n,2}(\xi,\theta_0,\sigma^2_0) + R^{os}_{n,3}(\xi,\theta_0,\sigma^2_0),
    \end{align}
    by Taylor expansion, where
    \begin{align*}
        &R^{os}_{n,1}(\xi,\theta_0,\sigma^2_0) = \sum_{l=0}^{p-1}\frac{b^l}{l!}\left\{
        \begin{aligned}
             &\frac{1}{n}\sum_{i=1}^n\ee^{\ii W_i\xi} H^{(l)}(Y_i,W_i)\\
             &-\mathbb{E}\left[\ee^{\ii W\xi} H^{(l)}(Y,W)\right]
        \end{aligned}
        \right\}\int x^l\mathcal{K}_{\epsilon,2}(x)\ee^{\ii bx\xi}\,dx,\\
        &R^{os}_{n,2}(\xi,\theta_0,\sigma^2_0) = \frac{b^p}{p!}\frac{1}{n}\sum_{i=1}^n\int\left\{
        \begin{aligned}
            &\ee^{\ii W_i\xi}H^{(p)}(Y_i,\bar{W}_i)\\
            -&\mathbb{E}\left[H^{(p)}(Y,\bar{W})\ee^{\ii W\xi}\right]
        \end{aligned}
        \right\}x^p\mathcal{K}_{\epsilon,2}(x)\ee^{\ii bx\xi}\,dx,\\
        &R^{os}_{n,3}(\xi,\theta_0,\sigma^2_0) = \int \mathbb{E}\left[H(Y,x)\mathcal{K}_{b,2}\left(\frac{x-W}{b}\right)\right] \ee^{\ii x\xi}\,dx.
    \end{align*}
    Notice that Assumption \ref{ass.O} implies $\mathbb{E}\vert H^{(l)}(Y,W)\vert^2<\infty$,
    \begin{align}\label{lemmaproof.convergence of main term unknown Rn1 main term}
        &\sup_{\xi\in\Pi}\left\vert\frac{1}{n}\sum_{i=1}^n\ee^{\ii W_i\xi} H^{(l)}(Y_i,W_i) - \mathbb{E}\left[\ee^{\ii W\xi} H^{(l)}(Y,W)\right]\right\vert = O_p\left(n^{-\frac{1}{2}}\right)\quad \text{for }l<p
    \end{align}
    follows by standard entropy-based criteria for Donsker classes (see, e.g., Example 2.10.10 and Theorem 2.5.2 in \cite{van1996weak} which implies that the exponential weight class is $\mathbb{P}$-Donsker. Together with $\int \mathcal{K}_{\epsilon,2}(x)(bx)^l\ee^{\ii bx\xi}\,dx=o_p(1)$ mentioned in Lemma \ref{lemma.power of decon}, we claim that
    \begin{align}\label{lemmaproof.convergence of main term unknown Rn1}
        \sup_{\xi\in\Pi}\left\vert R^{os}_{n,1}(\xi,\theta_0,\sigma^2_0)\right\vert = o_p\left(n^{-\frac{1}{2}}\right).
    \end{align}
    For $R^{os}_{n,2}(\xi,\theta_0,\sigma^2_0)$ in decomposition \eqref{lemmaproof.main term unknown Rn123}, using the Lipschitz continuity mentioned in Assumption \ref{ass.O},
    \begin{align*}
        \sup_{\xi\in\Pi}\vert R^{os}_{n,1}(\xi,\theta_0,\sigma^2_0) \vert \leq&  \frac{1}{p!}\left[\frac{1}{n}\sum_{i=1}^n \left\vert H^{(p)}(Y_i,W_i)\right\vert+\mathbb{E}\left\vert H^{(p)}(Y,W)\right\vert\right]  \int \left\vert (bx)^p\mathcal{K}_{\epsilon,2}(x)\right\vert\,dx \\
        &+ \frac{1}{p!}\left[\frac{1}{n}\sum_{i=1}^n \left\vert m(Y_i,W_i) \right\vert+\mathbb{E}\left\vert m(Y,W) \right\vert\right] \int \left\vert (bx)^{p+1}\mathcal{K}_{\epsilon,2}(x)\right\vert\,dx,
    \end{align*}
    where $m(Y,W) = \vert YL_{g^{(p)}}(W)\vert+\vert L_{[g^2]^{(p)}}(W)\vert$. Notice that Assumption \ref{ass.O} implies $\mathbb{E}\vert H^{(p)}(Y,W)\vert<\infty$ and $\mathbb{E}\vert m(Y,W) \vert<\infty$, along with $\int \vert (bx)^{p+1}\mathcal{K}_{\epsilon,2}(x)\vert\,dx = o_p(n^{-1/2})$ mentioned in Lemma \ref{lemma.power of decon}, we obtain 
    \begin{align}\label{lemmaproof.convergence of main term unknown Rn2}
        \sup_{\xi\in\Pi}\left\vert R^{os}_{n,2}(\xi,\theta_0,\sigma^2_0)\right\vert = o_p\left(n^{-\frac{1}{2}}\right).
    \end{align}
    Subsequently, $\mathbb{E}[U\mid X]=0$ implies 
    \begin{align*}
        \int\mathbb{E}\left\{\left[\left(Y-g(X;\theta_0)\right)\left(g(X;\theta_0)-g(X+bx;\theta_0)\right)\right]\ee^{\ii (X+bx)\xi}\right\} \mathbb{E}\left[K_{\epsilon,2}(x)\right]\,dx=0,
    \end{align*}
    rewriting $R^{os}_{n,3}(\xi,\theta_0,\sigma^2_0)$ as
    \begin{align*}
        &R^{os}_{n,3}(\xi,\theta_0,\sigma^2_0) =  \int \mathbb{E}\left[H(Y,X+bx)\ee^{\ii (X+bx)\xi}\right] \mathbb{E}\left[K_{\epsilon,2}(x)\right]\,dx\\
        = &\int\mathbb{E}\left\{\left[\left(Y-g(X;\theta_0)\right)^2-\sigma_0^2\right]\ee^{\ii (X+bx)\xi}\right\} \mathbb{E}\left[K_{\epsilon,2}(x)\right]\,dx\\
        &+\int \mathbb{E}\left[\left(g(X;\theta_0)-g(X+bx;\theta_0)\right)^2\ee^{\ii (X+bx)\xi}\right] \mathbb{E}\left[K_{\epsilon,2}(x)\right]\,dx\\
        = &\int \mathbb{E}\left\{\left[\left(Y-g(X;\theta_0)\right)^2-\sigma_0^2\right]K_{b,2}\left(\frac{x-X}{b}\right)\right\}\ee^{\ii x\xi}\,dx\\
        &+\iint \left[\left(g(y;\theta_0)-g(x;\theta_0)\right)^2\right]\ee^{\ii x\xi} \mathbb{E}\left[K_{b,2}\left(\frac{x-y}{b}\right)\right]f_X(y)\,dx\,dy.
    \end{align*}
    Let
    \begin{align*}
        &R^{os}_{n,31}(\xi,\theta_0,\sigma^2_0) = \frac{1}{2\pi b}\int \ee^{\ii x\xi} \left[\int g^2(y;\theta_0)K_{\epsilon,2}\left(\frac{x-y}{b}\right)f_X(y)\,dy\right] \,dx,\\
        &R^{os}_{n,32}(\xi,\theta_0,\sigma^2_0) = \frac{1}{2\pi b}\int g^2(x;\theta_0)\ee^{\ii x\xi} \left[\int K_{\epsilon,2}\left(\frac{x-y}{b}\right)f_X(y)\,dy\right] \,dx,\\
        &R^{os}_{n,33}(\xi,\theta_0,\sigma^2_0) = \frac{1}{2\pi b}\int g(x;\theta_0)\ee^{\ii x\xi} \left[\int g(y;\theta_0)K_{\epsilon,2}\left(\frac{x-y}{b}\right)f_X(y)\,dy\right] \,dx.
    \end{align*}
    By similar arguments to the proof of Lemma \ref{lemma.main term known}, 
    \begin{align*}
        &\int g^2(y;\theta_0)K_{\epsilon,2}\left(\frac{x-y}{b}\right)f_X(y)\,dy\\
        = & \sum_{l=0}^{p-1}\frac{b^l}{l!}(g^2 f_X)^{(l)}(x;\theta_0) \int y^lK_{\epsilon,2}(y)\,dy + \frac{b^p}{p!}\int (g^2 f_X)^{(p)}(\tilde{x};\theta_0) y^pK_{\epsilon,2}(y)\,dy\\
        = &\frac{b^p}{p!}\int (g^2 f_X)^{(p)}(\tilde{x};\theta_0) y^pK_{\epsilon,2}(y)\,dy
    \end{align*}
    holds using Taylor expansion, where $\tilde{x}$ is between $x-by$ and $x$. We note that the second equation follows by $\int y^lK_{\epsilon,2}(y)\,dy = 0$ for $l<p$, which is mentioned in Lemma \ref{lemma.power of tsf kernel}. Under Lipschitz continuity mentioned in Assumption \ref{ass.O}, 
    \begin{align*}
        &\left\vert \int g^2(y;\theta_0)K_{\epsilon,2}\left(\frac{x-y}{b}\right)f_X(y)\,dy \right\vert \\
        \leq & \frac{b^p}{p!}\left\vert (g^2 f_X)^{(p)}(x;\theta_0) \right\vert \int \left\vert y^pK_{\epsilon,2}(y) \right\vert \,dy + \frac{b^{p+1}}{p!}\left\vert L_{[g^2 f_X]^{(p)}}(x) \right\vert \int \left\vert y^{p+1}K_{\epsilon,2}(y) \right\vert \,dy.
    \end{align*}
    Subsequently, we claim $R^{os}_{n,31}(\xi,\theta_0,\sigma^2_0) = o_p(n^{-1/2})$ which follows by conclusions in Lemma \ref{lemma.power of tsf kernel}. By a similar arguments, $R^{os}_{n,32}(\xi,\theta_0,\sigma^2_0) = o_p(n^{-1/2})$ and $R^{os}_{n,33}(\xi,\theta_0,\sigma^2_0) = o_p(n^{-1/2})$. Consequently, 
    \begin{align}\label{lemmaproof.convergence of main term unknown Rn3}
        \sup_{\xi\in\Pi}\left\vert R^{os}_{n,3}(\xi,\theta_0,\sigma^2_0)-\int \mathbb{E}\left[\left(U^2-\sigma_0^2\right)K_{b,2}\left(\frac{x-X}{b}\right)\right]\ee^{\ii x\xi}\,dx\right\vert = o_p\left(n^{-\frac{1}{2}}\right).
    \end{align}
    Thus, \eqref{lemma.main term unknown eq1} holds for the ordinary smooth case by combining the definition of $\psi_2$, \eqref{lemma.main term known eq1}, \eqref{lemmaproof.main term unknown Sn1}, \eqref{lemmaproof.main term unknown Rn123}, \eqref{lemmaproof.convergence of main term unknown Rn1}, \eqref{lemmaproof.convergence of main term unknown Rn2} and \eqref{lemmaproof.convergence of main term unknown Rn3}.

    For the supersmooth case, \eqref{lemma.main term known eq1} and decomposition of $\hat{S}_n(\xi,\theta_0,\sigma^2_0) - S_n(\xi,\theta_0,\sigma^2_0)$ still hold. For the first term of decomposition, under Assumption \ref{ass.S'}, following the proof of Lemma \ref{lemma.power of tsf kernel},
    \begin{align*}
        &\mathbb{E}\left[\sup_{\xi\in\Pi}\left\vert\frac{1}{2\pi}\iint H(Y_i,x+W_i)\ee^{-\ii xt}\frac{K^{\mathrm{ft}}(bt)}{f^{\mathrm{ft}}_\epsilon(t)}\Pi_{\epsilon,j}(t)e^{\ii x\xi}\,dx\,dt\right\vert\right]\\
        \leq&
        \frac{1}{2\pi}\sum_{l=0}^{\infty}\frac{1}{l!}\mathbb{E}\left\vert H^{(l)}(Y_i,W_i)\right\vert\mathbb{E}\sup_{\xi\in\Pi}\left\vert\iint x^l\ee^{-\ii xt}\frac{K^{\mathrm{ft}}(bt)}{f^{\mathrm{ft}}_\epsilon(t)}\Pi_{\epsilon,j}(t)e^{\ii x\xi}\,dx\,dt\right\vert\\
        =&O_p\left(\ee^{-3\mu(1+b^{-1})^2}\right)=o_p\left(n^{\frac{1}{2}}\right),
    \end{align*}
     where the last equation follows by Assumption \ref{ass.S'}. Thus, by similar arguments to ordinary smooth case, $S_{n,11}(\xi,\theta_0,\sigma^2_0)=o_p(n^{-1/2})$ and $\mathbb{E}[\sup_{\xi\in\Pi} p^2(D_i,D_i;\xi)]=o_p(n)$ hold. Subsequently, \eqref{lemmaproof.main term unknown Sn1} follows by \eqref{lemmaproof.main term unknown hajek proj}--\eqref{lemmaproof.main term unknown hajek convergence}. For the second term of decomposition, we decompose
     \begin{align}\label{lemmaproof.main term unknown Rn12 super}
        S_{n,2}(\xi,\theta_0,\sigma^2_0) =&R^{ss}_{n,1}(\xi,\theta_0,\sigma^2_0) + R^{ss}_{n,2}(\xi,\theta_0,\sigma^2_0),
    \end{align}
    where
    \begin{align*}
        &R^{ss}_{n,1}(\xi,\theta_0,\sigma^2_0) = \sum_{l=0}^{\infty}\frac{b^l}{l!}\left\{
        \begin{aligned}
             &\frac{1}{n}\sum_{i=1}^n\ee^{\ii W_i\xi} H^{(l)}(Y_i,W_i)\\
             &-\mathbb{E}\left[\ee^{\ii W\xi} H^{(l)}(Y,W)\right]
        \end{aligned}
        \right\}\int x^l\mathcal{K}_{\epsilon,2}(x)\ee^{\ii bx\xi}\,dx,\\
        &R^{ss}_{n,2}(\xi,\theta_0,\sigma^2_0) = \int \mathbb{E}\left[H(Y,x)\mathcal{K}_{b,2}\left(\frac{x-W}{b}\right)\right] \ee^{\ii x\xi}\,dx.
    \end{align*}
    Under Assumption \ref{ass.S'}, \eqref{lemmaproof.convergence of main term unknown Rn1 main term} still holds and therefore,
    \begin{align}\label{lemmaproof.convergence of main term unknown Rn1 super}
        \sup_{\xi\in\Pi}\left\vert R^{ss}_{n,1}(\xi,\theta_0,\sigma^2_0)\right\vert = o_p\left(n^{-\frac{1}{2}}\right)
    \end{align}
    follows by conclusion \eqref{Power of super kernel} mentioned in Lemma \ref{lemma.power of decon}. We notice that $R^{ss}_{n,2}(\xi,\theta_0,\sigma^2_0) = R^{os}_{n,31}(\xi,\theta_0,\sigma^2_0)+R^{os}_{n,32}(\xi,\theta_0,\sigma^2_0)$ still holds, together with
    \begin{align*}
        &\int g^2(y;\theta_0)K_{\epsilon,2}\left(\frac{x-y}{b}\right)f_X(y)\,dy= \sum_{l=0}^{\infty}\frac{b^l}{l!}(g^2 f_X)^{(l)}(x;\theta_0) \int y^lK_{\epsilon,2}(y)\,dy=0
    \end{align*}
    and by similar arguments,
    \begin{align*}
        &\int K_{\epsilon,2}\left(\frac{x-y}{b}\right)f_X(y)\,dy=0,\quad \int g(y;\theta_0)K_{\epsilon,2}\left(\frac{x-y}{b}\right)f_X(y)\,dy=0,
    \end{align*}
    we claim 
    \begin{align}\label{lemmaproof.main term unknown Sn2}
         \sup_{\xi\in\Pi}\left\vert S_{n,2}(\xi,\theta_0,\sigma^2_0)-\int \mathbb{E}\left[\left(U^2-\sigma_0^2\right)K_{b,2}\left(\frac{x-X}{b}\right)\right]\ee^{\ii x\xi}\,dx\right\vert = o_p\left(n^{-\frac{1}{2}}\right).
    \end{align}
    Consequently, \eqref{lemma.main term unknown eq1} follows by \eqref{lemmaproof.main term unknown Sn1} and \eqref{lemmaproof.main term unknown Sn2}. Owing to the fact that the multipliers are defined to have unit mean and unit variance and to be independent of all random variables involved, replacing $f^{\mathrm{ft}}_\epsilon(\cdot)$ by $\hat{f}^{\mathrm{ft}}_\epsilon(\cdot)$ in all quantities defined in the above proof (including $\mathcal{K}_{\epsilon,1}(\cdot)$ and $\mathcal{K}_{\epsilon,2}(\cdot)$) does not affect their means or variances. Consequently, \eqref{lemma.main term unknown eq2} can be proved in a completely analogous manner, with $\Pi_{\epsilon,j}(\cdot)$ in \eqref{part of psi} replaced by its multiplier-based counterpart
    \begin{align*}
        &\Pi^\ast_{\epsilon,j}\left(t\right) = \frac{\left[f_\epsilon^{\mathrm{ft}}(t)\right]^2-\zeta^\ast_j(t)}{2\left[f_\epsilon^{\mathrm{ft}}(t)\right]^2}, \quad \zeta^\ast_j(t) = V_i^\ast\cos\left[t(W_j-W_j^r)\right].
    \end{align*}
\end{proof}

\begin{lemma}\label{lemma.negligible term known}
    Suppose Assumption \ref{ass.D} holds, together with either Assumption \ref{ass.O} for the ordinary smooth case or Assumption \ref{ass.S} for the supersmooth case,
    \begin{align}\label{lemma.negligible term known eq1}
        &\sup_{\xi\in\Pi}\left\vert
        \frac{1}{n}\sum_{i=1}^n\int\left[\left(Y_i-g(x;\theta_n)\right)^2-\left(Y_i-g(x;\theta_0)\right)^2\right]\mathcal{K}_b\left(\frac{x-W_i}{b}\right)\ee^{\ii x\xi}\,dx
        \right\vert=o_p\left(n^{-\frac{1}{2}}\right).
    \end{align}
    For the bootstrap version,
    \begin{align}\label{lemma.negligible term known eq2}
        &\sup_{\xi\in\Pi}\left\vert
        \frac{1}{n}\sum_{i=1}^nV_i\int\left[\left(Y_i-g(x;\theta_n)\right)^2-\left(Y_i-g(x;\theta_0)\right)^2\right]\mathcal{K}_b\left(\frac{x-W_i}{b}\right)\ee^{\ii x\xi}\,dx
        \right\vert=o_p\left(n^{-\frac{1}{2}}\right).
    \end{align}
\end{lemma}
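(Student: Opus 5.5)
The plan is to expand the squared difference algebraically and treat each piece with the machinery already built for Lemma \ref{lemma.main term known}. Write
\[
\left(Y_i-g(x;\theta_n)\right)^2-\left(Y_i-g(x;\theta_0)\right)^2 = -2\left(Y_i-g(x;\theta_0)\right)\left(g(x;\theta_n)-g(x;\theta_0)\right)+\left(g(x;\theta_n)-g(x;\theta_0)\right)^2 .
\]
A first-order Taylor expansion in $\theta$ around $\theta_0$ gives $g(x;\theta_n)-g(x;\theta_0)=\dot g(x;\bar\theta)^\top(\theta_n-\theta_0)$, where $\dot g=\partial g/\partial\theta$ and $\bar\theta$ lies between $\theta_n$ and $\theta_0$. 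The left-hand side of \eqref{lemma.negligible term known eq1} then splits into a linear term $-2(\theta_n-\theta_0)^\top A_n(\xi)$ and a quadratic term $(\theta_n-\theta_0)^\top B_n(\xi)(\theta_n-\theta_0)$. Here
\[
A_n(\xi)=\frac1n\sum_i\int (Y_i-g(x;\theta_0))\dot g(x;\bar\theta)\mathcal{K}_b\Big(\frac{x-W_i}{b}\Big)\ee^{\ii x\xi}dx,
\qquad
B_n(\xi)=\frac1n\sum_i\int \dot g\dot g^\top(x;\bar\theta)\mathcal{K}_b\Big(\frac{x-W_i}{b}\Big)\ee^{\ii x\xi}dx .
\]
To decouple $\bar\theta$ from the sample, I would replace $\bar\theta$ by $\theta_0$. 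The remainder is controlled by the uniform-in-a-neighbourhood smoothness of Assumption \ref{ass.O}(i), or Assumption \ref{ass.S}(i), which is why those assumptions list $\partial g/\partial\theta$, $g\,\partial g/\partial\theta$ and $\partial g/\partial\theta\,\partial g/\partial\theta^\top$ among the admissible $h$.

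For $A_n$ and $B_n$ at $\theta_0$, I would repeat the argument of Lemma \ref{lemma.main term known} with $H(Y,x)$ replaced by $(Y-g(x;\theta_0))\dot g(x;\theta_0)$ and by $\dot g\dot g^\top(x;\theta_0)$, respectively. The steps are:
\begin{enumerate}[label=(\roman*)]
\item Change variables to $x=W_i+b\tilde x$ and Taylor expand to order $p$.
\item Use \eqref{Power of decon without ME Assm O} or \eqref{Power of decon without ME Assm S} to identify the leading term $\ee^{\ii W\xi}\sum_l c_l(\xi)h^{(l)}(W)$.
\item Use Assumption \ref{ass.O}(v)/\ref{ass.S}(v) together with the Donsker argument to get uniform $O_p(n^{-1/2})$ fluctuations around the mean.
\item Compute the mean by the deconvolution identity $\mathbb{E}[\,\cdot\,\mathcal{K}_b((x-W)/b)]=\mathbb{E}[\,\cdot\,K_b((x-X)/b)]$.
\end{enumerate}
This yields $\sup_\xi|A_n(\xi)-\mathbb{E}A_n(\xi)|=O_p(n^{-1/2})$, with $\mathbb{E}A_n(\xi)=o(1)$. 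The latter holds because $\mathbb{E}[U\mid X]=0$ kills the $Y-g(X;\theta_0)$ part, exactly as in \eqref{lemmaproof.main term known biastermdecomp cross}. The remaining $(g(X)-g(x))\dot g(x)$ part is $O(b^p)$ in the ordinary smooth case by the higher-order kernel and Lipschitz bounds, and zero in the supersmooth case. Similarly, $\sup_\xi|B_n(\xi)|=O_p(1)$, because $B_n(\xi)\to\int \dot g\dot g^\top f_X\ee^{\ii x\xi}dx$, which is bounded by the integrability assumptions.

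Combining with Assumption \ref{ass.D}(ii), the linear term is
\[
O_p(n^{\delta-1/2})\left[O_p(n^{-1/2})+O(b^p)\right]=o_p(n^{-1/2}),
\]
using $\delta<1/4$ and $nb^{2p}\to0$. The quadratic term is $O_p(n^{2\delta-1})=o_p(n^{-1/2})$ precisely because $\delta<1/4$. For the bootstrap version \eqref{lemma.negligible term known eq2}, the same decomposition applies with a multiplier $V_i$ inside each sum. Since $V_i$ has mean zero and is independent of the data, the mean of $A_n^\ast$ and $B_n^\ast$ vanishes, and both are uniformly $O_p(n^{-1/2})$ by the multiplier analogue of \eqref{lemma.main term known eq2}. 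Hence both terms are $o_p(n^{-1/2})$ a fortiori.

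The main obstacle is the uniform (in $\xi$ and over $\bar\theta$ near $\theta_0$) control of the Taylor remainder $\dot g(x;\bar\theta)-\dot g(x;\theta_0)$ integrated against the deconvolution kernel. The difficulty is that $\int|\tilde x^l\mathcal{K}_\epsilon(\tilde x)|d\tilde x=O(b^{-\alpha})$ blows up. I would handle it by expanding this difference to the next order in $\theta$. That produces an extra factor $\|\theta_n-\theta_0\|=O_p(n^{\delta-1/2})$ multiplying a process that, by the same expansion in $x$ as above, has only $O_p(1)$ size rather than $O_p(b^{-\alpha})$. Making this work requires the $\theta$-uniform versions of Assumption \ref{ass.O}(v); failing that, I would restrict $\theta$ to a shrinking ball and use a stochastic equicontinuity argument in $\theta$.
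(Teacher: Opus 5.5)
Your proposal is correct and follows essentially the same route as the paper. Like the paper, you split the difference into the quadratic term and the cross term and Taylor-expand $g(x;\theta_n)-g(x;\theta_0)$ in $\theta$; you bound the quadratic part by $O_p(\|\theta_n-\theta_0\|^2)=o_p(n^{-1/2})$ using $\delta<1/4$; and for the cross part you show the centered average is $O_p(n^{-1/2})$, split its mean into a $U$-piece (zero by $\mathbb{E}[U\mid X]=0$ and $\epsilon\perp X$) and a $g(X)-g(x)$ piece (negligible bias), and get the bootstrap case from the mean-zero multipliers.

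The only difference is that you flag the sample dependence of the intermediate point $\bar\theta$. The paper just carries it through as $\tilde\theta$ and invokes the argument of Lemma~\ref{lemma.main term known}, so your uniform-in-$\theta$ treatment makes that step more careful without changing the approach.
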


\begin{proof}[Proof of Lemma \ref{lemma.negligible term known}]
    This lemma shows that the estimation of the parameter $\theta_0$ does not introduce \textquotedblleft parameter estimation uncertainty\textquotedblright. To show \eqref{lemma.negligible term known eq1}, we begin by denoting
    \begin{align*}
        &R_{n,1}(\xi,\theta_0) = \frac{1}{n}\sum_{i=1}^n\int\left(g(x;\theta_n)-g(x;\theta_0)\right)^2\mathcal{K}_b\left(\frac{x-W_i}{b}\right)\ee^{\ii x\xi}\,dx,\\
        &R_{n,2}(\xi,\theta_0) = \frac{1}{n}\sum_{i=1}^n\int\left[\left(Y_i-g(x;\theta_0)\right)\left(g(x;\theta_n)-g(x;\theta_0)\right)\right]\mathcal{K}_b\left(\frac{x-W_i}{b}\right)\ee^{\ii x\xi}\,dx.
    \end{align*}
    It then suffices to show
    \begin{align}\label{lemma.negligible term known eq1eq}
        \sup_{\xi\in\Pi}\left\vert R_{n,1}(\xi,\theta_0)-2R_{n,2}(\xi,\theta_0)\right\vert=o_p\left(n^{-\frac{1}{2}}\right). 
    \end{align}
    To investigate the effect of estimating $\theta_0$, we expand the function $g(x;\theta_n)$ at $\theta_0$ as follows,
    \begin{align}\label{lemmaproof.negligible term known Taylor}
        &g(x;\theta_n)-g(x;\theta_0) = \left(\theta_n-\theta_0\right)' \frac{\partial g(x;\tilde{\theta})}{\partial\theta},
    \end{align}
    where $\tilde{\theta}$ values between $\theta_0$ and $\theta_n$. Thus, for the first term $R_{n,1}$,
    \begin{align}\label{lemmaproof.negligible term known decom}
        &R_{n,1}(\xi,\theta_0) = \left(\theta_n-\theta_0\right)'\left[\frac{1}{n}\sum_{i=1}^n \int\frac{\partial g(x;\tilde{\theta})}{\partial\theta}\frac{\partial g(x;\tilde{\theta})}{\partial\theta'}\mathcal{K}_b\left(\frac{x-W_i}{b}\right)\ee^{\ii x\xi}\,dx\right]\left(\theta_n-\theta_0\right).
    \end{align}
    Under Assumption \ref{ass.O}, by similar arguments to the proof in Lemma \ref{lemma.main term known},
    \begin{align}\label{lemmaproof.negligible term known Rn1 main}
        \sup_{\xi\in\Pi}\left\vert
        \begin{aligned}
            &\frac{1}{n}\sum_{i=1}^n \int\frac{\partial g(x;\tilde{\theta})}{\partial\theta}\frac{\partial g(x;\tilde{\theta})}{\partial\theta'}\mathcal{K}_b\left(\frac{x-W_i}{b}\right)\ee^{\ii x\xi}\,dx\\
            &-\mathbb{E}\left[\int\frac{\partial g(x;\tilde{\theta})}{\partial\theta}\frac{\partial g(x;\tilde{\theta})}{\partial\theta'}\mathcal{K}_b\left(\frac{x-W}{b}\right)\ee^{\ii x\xi}\,dx\right]
        \end{aligned}
        \right\vert=O_p\left(n^{-\frac{1}{2}}\right).
    \end{align}
    Notice that $\theta_n\xrightarrow{P}\theta_0$, along with properties of kernel $K$ and integrability of $\partial g(x;\theta)/\partial\theta$ mentioned in Assumption \ref{ass.O} implies
    \begin{align}\label{lemmaproof.negligible term known Rn1 exp}
        &\sup_{\xi\in\Pi}\left\vert\mathbb{E}\left[\int\frac{\partial g(x;\tilde{\theta})}{\partial\theta}\frac{\partial g(x;\tilde{\theta})}{\partial\theta'}\mathcal{K}_b\left(\frac{x-W}{b}\right)\ee^{\ii x\xi}\,dx\right]\right\vert=O_p\left(1\right).
    \end{align}
    Combining \eqref{lemmaproof.negligible term known decom}, \eqref{lemmaproof.negligible term known Rn1 main} and \eqref{lemmaproof.negligible term known Rn1 exp}, it follows that
    \begin{align}\label{lemmaproof.negligible term known decom1}
        \sup_{\xi\in\Pi}\left\vert R_{n,1}(\xi,\theta_0)\right\vert=O_p\left(\left\vert\theta_n-\theta_0\right\vert^2\right).
    \end{align}
    Subsequently, for the second term $R_{n,2}$, the bias term can also be decomposed as follows,
    \begin{align}\label{lemmaproof.negligible term known decom2}
        R_{n,2}(\xi,\theta_0) = R_{n,21}(\xi,\theta_0)+R_{n,22}(\xi,\theta_0)
    \end{align}
    where
    \begin{align*}
        &R_{n,21}(\xi,\theta_0) = \frac{1}{n}\sum_{i=1}^n\left(Y_i-g(X_i;\theta_0)\right)\int\left(g(x_i;\theta_n)-g(x;\theta_0)\right)\mathcal{K}_b\left(\frac{x-W_i}{b}\right)\ee^{\ii x\xi}\,dx,\\
        &R_{n,22}(\xi,\theta_0) = \frac{1}{n}\sum_{i=1}^n\int\left[\left(g(X_i;\theta_0)-g(x;\theta_0)\right)\left(g(x;\theta_n)-g(x;\theta_0)\right)\right]\mathcal{K}_b\left(\frac{x-W_i}{b}\right)\ee^{\ii x\xi}\,dx
    \end{align*}
    By reapplying equation \eqref{lemmaproof.negligible term known Taylor} and simplifying $R_{n,21}$ and $R_{n,22}$,
    \begin{align*}
        &R_{n,21}(\xi,\theta_0) = \left(\theta_n-\theta_0\right)'\left[\frac{1}{n}\sum_{i=1}^n U_i\int\frac{\partial g(x;\tilde{\theta})}{\partial\theta}\mathcal{K}_b\left(\frac{x-W_i}{b}\right)\ee^{\ii x\xi}\,dx\right],\\
        &R_{n,22}(\xi,\theta_0) = \left(\theta_n-\theta_0\right)'\left[\frac{1}{n}\sum_{i=1}^n\int\left(g(X_i;\theta_0)-g(x;\theta_0)\right)\frac{\partial g(x;\tilde{\theta})}{\partial\theta}\mathcal{K}_b\left(\frac{x-W_i}{b}\right)\ee^{\ii x\xi}\,dx\right].
    \end{align*}
    Still under Assumption \ref{ass.O}, by similar arguments to the proof of Lemma \ref{lemma.main term known},
    \begin{align}
        \sup_{\xi\in\Pi}\left\vert
        \begin{aligned}
            &\frac{1}{n}\sum_{i=1}^n U_i\int\frac{\partial g(x;\tilde{\theta})}{\partial\theta}\mathcal{K}_b\left(\frac{x-W_i}{b}\right)\ee^{\ii x\xi}\,dx\\
            &-\mathbb{E}\left[U\int\frac{\partial g(x;\tilde{\theta})}{\partial\theta}\mathcal{K}_b\left(\frac{x-W}{b}\right)\ee^{\ii x\xi}\,dx\right]
        \end{aligned}
        \right\vert=O_p\left(n^{-\frac{1}{2}}\right).
    \end{align}
    Notice that Assumption \ref{ass.D} implies the independence between $\epsilon$ and $X$, along with $\mathbb{E}(U\mid X)=0$, we obtain,
    \begin{align*}
        \mathbb{E}\left[U\int\frac{\partial g(x;\tilde{\theta})}{\partial\theta}\mathcal{K}_b\left(\frac{x-W}{b}\right)\ee^{\ii x\xi}\,dx\right]=0
    \end{align*}
    Thus, $\sup_{\xi\in\Pi}\vert R_{n,21}(\xi,\theta_0)\vert=o_p(n^{-1/2})$ holds. Meanwhile, $\sup_{\xi\in\Pi}\vert R_{n,22}(\xi,\theta_0)\vert=o_p(n^{-1/2})$ holds by similar arguments to the proof of Lemma \ref{lemma.main term known},
    \begin{align}
        \sup_{\xi\in\Pi}\left\vert
        \begin{aligned}
            &\frac{1}{n}\sum_{i=1}^n\int\left(g(X_i;\theta_0)-g(x;\theta_0)\right)\frac{\partial g(x;\tilde{\theta})}{\partial\theta}\mathcal{K}_b\left(\frac{x-W_i}{b}\right)\ee^{\ii x\xi}\,dx\\
            &-\mathbb{E}\left[\int\left(g(X;\theta_0)-g(x;\theta_0)\right)\frac{\partial g(x;\tilde{\theta})}{\partial\theta}\mathcal{K}_b\left(\frac{x-W}{b}\right)\ee^{\ii x\xi}\,dx\right]
        \end{aligned}
        \right\vert=O_p\left(n^{-\frac{1}{2}}\right)
    \end{align}
    and
    \begin{align*}
        \sup_{\xi\in\Pi}\left\vert\mathbb{E}\left[\int\left(g(X;\theta_0)-g(x;\theta_0)\right)\frac{\partial g(x;\tilde{\theta})}{\partial\theta}\mathcal{K}_b\left(\frac{x-W}{b}\right)\ee^{\ii x\xi}\,dx\right]\right\vert=o_p\left(n^{-\frac{1}{2}}\right).
    \end{align*}
    Subsequently, \eqref{lemmaproof.negligible term known decom2} follows.
\end{proof}

\begin{lemma}\label{lemma.negligible term unknown}
    Suppose that Assumption \ref{ass.D} and \ref{ass.D'} hold, along with either Assumption \ref{ass.O} and \ref{ass.O'} for the ordinary smooth case or Assumption \ref{ass.S} and \ref{ass.S'} for the supersmooth case,
    \begin{align}\label{lemma.negligible term unknown eq1}
        &\sup_{\xi\in\Pi}\left\vert
        \frac{1}{n}\sum_{i=1}^n\int\left[\left(Y_i-g(x;\hat{\theta}_n)\right)^2-\left(Y_i-g(x;\theta_0)\right)^2\right]\hat{\mathcal{K}}_b\left(\frac{x-W_i}{b}\right)\ee^{\ii x\xi}\,dx
        \right\vert=o_p\left(n^{-\frac{1}{2}}\right).
    \end{align}
    and 
    \begin{align}\label{lemma.negligible term unknown eq2}
        &\sup_{\xi\in\Pi}\left\vert
        \frac{1}{n}\sum_{i=1}^n\int\left[\left(Y_i-g(x;\hat{\theta}_n)\right)^2-\left(Y_i-g(x;\theta_0)\right)^2\right]\hat{\mathcal{K}}^\ast_b\left(\frac{x-W_i}{b}\right)\ee^{\ii x\xi}\,dx
        \right\vert=o_p\left(n^{-\frac{1}{2}}\right).
    \end{align}
\end{lemma}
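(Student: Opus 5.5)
The plan is to reduce Lemma \ref{lemma.negligible term unknown} to Lemma \ref{lemma.negligible term known} plus a remainder created by the estimated kernel. Write $\hat{\mathcal{K}}_\epsilon=\mathcal{K}_\epsilon+\mathcal{K}_{\epsilon,1}+\mathcal{K}_{\epsilon,2}$ as in Section \ref{sec.AppendixB}. This splits the left-hand side of \eqref{lemma.negligible term unknown eq1} into three pieces:
\begin{itemize}
\item a $\mathcal{K}_b$ piece, which is exactly the expression in Lemma \ref{lemma.negligible term known} with $\theta_n$ replaced by $\hat\theta_n$;
\item a $\mathcal{K}_{b,1}$ piece;
\item a $\mathcal{K}_{b,2}$ piece.
\end{itemize}
The proof of Lemma \ref{lemma.negligible term known} uses only the rate in Assumption \ref{ass.D}(ii), which $\hat\theta_n$ is taken to satisfy. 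So the first piece is $o_p(n^{-1/2})$ by that lemma.

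For the other two pieces, I would use the same algebra as in the proof of Lemma \ref{lemma.negligible term known}. Write
$(Y_i-g(x;\hat\theta_n))^2-(Y_i-g(x;\theta_0))^2=(g(x;\hat\theta_n)-g(x;\theta_0))^2-2(Y_i-g(x;\theta_0))(g(x;\hat\theta_n)-g(x;\theta_0))$.
Then apply the mean value expansion \eqref{lemmaproof.negligible term known Taylor}. This factors out $(\hat\theta_n-\theta_0)$ (or its quadratic form), and what remains are averages of the form
\[
\frac1n\sum_{i=1}^n\int h(Y_i,x;\tilde\theta)\,\mathcal{K}_{b,j}\Big(\frac{x-W_i}{b}\Big)\ee^{\ii x\xi}\,dx,\qquad j=1,2,
\]
where $h$ ranges over $\partial g/\partial\theta$, $(Y-g)\,\partial g/\partial\theta$ and $\partial g/\partial\theta\,\partial g/\partial\theta'$. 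These are exactly the functions listed in Assumption \ref{ass.O}(i) and \ref{ass.S}(i). Since $\hat\theta_n-\theta_0=O_p(n^{\delta-1/2})$ with $\delta<1/4$, it suffices to show that these averages are $O_p(n^{-1/2+\gamma})$ uniformly in $\xi$ and in $\theta$ near $\theta_0$, for some $\gamma$ with $\gamma+\delta<1/2$. In fact $O_p(1)$ suffices for the quadratic term, and $o_p(n^{-\delta})$ suffices for the linear term.

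\textbf{The $\mathcal{K}_{b,2}$ averages.} Taylor-expand $h(Y_i,W_i+bx)$ to order $p$, exactly as in the treatment of $S_{n,2}$ in Lemma \ref{lemma.main term unknown}. Lemma \ref{lemma.power of decon}, via \eqref{Power of decon with ME} and \eqref{Power of abs decon with ME} (or \eqref{Power of super kernel} in the supersmooth case), makes every moment integral $o_p(n^{-1/2})$ or $o_p(1)$. The finite moments of $h^{(l)}$ and of the Lipschitz coefficients then give an $o_p(1)$ bound. That is enough after multiplication by $\hat\theta_n-\theta_0$.

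\textbf{The $\mathcal{K}_{b,1}$ averages.} Follow the $S_{n,1}$ argument in Lemma \ref{lemma.main term unknown}. Write the average as a V-statistic in $(D_i,D_j)$ through $\Pi_{\epsilon,j}$. The diagonal part is $o_p(n^{-1/2})$ under Assumption \ref{ass.O'}(i) or \ref{ass.S'}(i). The Hájek projection of the off-diagonal part is an average of mean-zero terms of the form $\int(\cdot)^{\mathrm{ft}}(t)\,\Pi_{\epsilon,i}(t)\,dt$, so it is $O_p(n^{-1/2})$. Multiplying by $\hat\theta_n-\theta_0=o_p(1)$ gives $o_p(n^{-1/2})$.

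\textbf{Expected main obstacle: uniformity in $\tilde\theta$.} The expansion point $\tilde\theta$ is random and depends on the whole sample, so the empirical-process and U-statistic bounds cannot be applied at a fixed $\theta$. I would handle this in two steps. First, use the smoothness of $h$ in $\theta$ on the neighbourhood in Assumption \ref{ass.O}(i) to replace $\tilde\theta$ by $\theta_0$, at the cost of a further factor $O_p(|\hat\theta_n-\theta_0|)$. Second, treat the class indexed by $(\xi,\theta)$ as Lipschitz in its parameters and hence Donsker, as in the proof of Lemma \ref{lemma.main term known}.

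\textbf{Bootstrap version.} The statement \eqref{lemma.negligible term unknown eq2} follows from the same argument with $\Pi_{\epsilon,j}$ replaced by $\Pi^\ast_{\epsilon,j}$. The multipliers have unit mean and variance and are independent of the data, so the moment bounds above are unchanged, as already noted at the end of the proof of Lemma \ref{lemma.main term unknown}.
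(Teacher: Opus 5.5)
You follow the paper's proof with the two decompositions taken in the opposite order. The paper reduces to $\hat R_{n,1}-2\hat R_{n,2}$, Taylor-expands in $\theta$, and bounds the resulting averages ``as in Lemma \ref{lemma.main term unknown}'', which is where the split $\hat{\mathcal K}_\epsilon=\mathcal K_\epsilon+\mathcal K_{\epsilon,1}+\mathcal K_{\epsilon,2}$ enters. You split the kernel first and send the $\mathcal K_b$ piece to Lemma \ref{lemma.negligible term known}. Your handling of the quadratic term and of the $\mathcal K_{b,1}$ piece is fine. The step that fails as written is your conclusion for the $\mathcal K_{b,2}$ piece of the linear (cross) term. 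You obtain an $o_p(1)$ bound and call it ``enough after multiplication by $\hat\theta_n-\theta_0$''. By your own bookkeeping the linear term needs $o_p(n^{-\delta})$. But $o_p(1)\cdot O_p(n^{\delta-1/2})$ is only $o_p(n^{\delta-1/2})$, which is not $o_p(n^{-1/2})$ once $\delta>0$, and Assumption \ref{ass.D}(ii) allows $\delta$ up to $1/4$.

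In the ordinary smooth case this is only an understatement, and you should say so. There \eqref{Power of decon with ME} and \eqref{Power of abs decon with ME} make every moment integral $o_p(n^{-1/2})$ uniformly in $\xi$. The sample averages of $|h^{(l)}|$ and of the Lipschitz coefficients are $O_p(1)$, so your uncentred bound is already $o_p(n^{-1/2})$.

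In the supersmooth case the problem is real. \eqref{Power of super kernel} gives only $o_p(1)$ for the summed moment integrals, so the uncentred bound genuinely stops at $o_p(1)$. The missing idea is the centring used for $R^{ss}_{n,1}$ and $R^{ss}_{n,2}$ in the proof of Lemma \ref{lemma.main term unknown}:
\begin{itemize}
\item Subtract the population means of $\ee^{\ii W\xi}h^{(l)}(Y,W)$. The centred part is then $O_p(n^{-1/2})\cdot o_p(1)=o_p(n^{-1/2})$.
\item Show the population part is zero. After integrating out $\epsilon$, the kernel becomes $K_{\epsilon,2}$.
\item The $U\,\partial g/\partial\theta$ contribution then vanishes by $\mathbb{E}[U\mid X]=0$, exactly as for $R_{n,21}$ in Lemma \ref{lemma.negligible term known}.
\item The $(g(X;\theta_0)-g(x;\theta_0))\,\partial g/\partial\theta$ contribution vanishes by the vanishing moments of $K_{\epsilon,2}$ in Lemma \ref{lemma.power of tsf kernel}.
\end{itemize}
With that repair your argument goes through, modulo regularity the paper also leaves implicit. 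You need a rate for $\hat\theta_n$. Your replacement of $\tilde\theta$ by $\theta_0$ also needs smoothness of $\partial g/\partial\theta$ in $\theta$, which Assumption \ref{ass.O}(i) (smoothness in $x$) does not literally supply. For the $\mathcal K_{b,2}$ piece you can avoid this with envelopes over the neighbourhood, since the moment integrals do not depend on $\theta$.
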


\begin{proof}[Proof of Lemma \ref{lemma.negligible term unknown}]
    The proof of this lemma follows along the same line as Lemma \ref{lemma.negligible term known}, except that it mimics the proof of Lemma \ref{lemma.main term unknown} but not Lemma \ref{lemma.main term known}. We start be rewriting \eqref{lemma.negligible term unknown eq1} into a form that shows it is sufficient to establish $\sup_{\xi\in\Pi}\vert \hat{R}_{n,1}(\xi,\theta_0)-2\hat{R}_{n,2}(\xi,\theta_0)\vert=o_p(n^{-1/2})$
    where
    \begin{align*}
        &\hat{R}_{n,1}(\xi,\theta_0) = \frac{1}{n}\sum_{i=1}^n\int\left(g(x;\hat{\theta}_n)-g(x;\theta_0)\right)^2\hat{\mathcal{K}}_b\left(\frac{x-W_i}{b}\right)\ee^{\ii x\xi}\,dx,\\
        &\hat{R}_{n,2}(\xi,\theta_0) = \frac{1}{n}\sum_{i=1}^n\int\left[\left(Y_i-g(x;\theta_0)\right)\left(g(x;\hat{\theta}_n)-g(x;\theta_0)\right)\right]\hat{\mathcal{K}}_b\left(\frac{x-W_i}{b}\right)\ee^{\ii x\xi}\,dx.
    \end{align*}
    Then we use the Taylor expansion similar to \eqref{lemmaproof.negligible term known Taylor}, and gives an upper bound restriction on the two terms similar to Lemma \ref{lemma.main term unknown}. \eqref{lemma.negligible term unknown eq1} holds. \eqref{lemma.negligible term unknown eq2} can be proved by analogous arguments.
\end{proof}

\begin{lemma}\label{lemma.Gn known}
    Suppose Assumption \ref{ass.D} holds, together with either Assumption \ref{ass.O} for the ordinary smooth case or Assumption \ref{ass.S} for the supersmooth case,
    \begin{align}\label{lemma.Gn known eq1}
        &\sup_{\xi\in\Pi}\left\vert
        \frac{1}{n}\sum_{i=1}^n\int\mathcal{K}_b\left(\frac{x-W_i}{b}\right)\ee^{\ii x\xi}\,dx-f^{\mathrm{ft}}_X(\xi)
        \right\vert=O_p\left(n^{-\frac{1}{2}}\right).
    \end{align}
\end{lemma}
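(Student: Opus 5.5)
The plan is to compute the integral in closed form for each observation, so that the process becomes a rescaled empirical characteristic function of $W$ plus a deterministic bias term; each part is then controlled separately. No Taylor expansion in $x$ is needed here, unlike in Lemma \ref{lemma.main term known}, because the integrand carries no structural function.

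\emph{Step 1 (closed form).} For fixed $i$, substitute $x=W_i+bu$ and use $\mathcal{K}_\epsilon=b\mathcal{K}_b$:
\begin{align*}
\int\mathcal{K}_b\left(\frac{x-W_i}{b}\right)\ee^{\ii x\xi}\,dx=\ee^{\ii W_i\xi}\int\mathcal{K}_\epsilon(u)\ee^{\ii bu\xi}\,du .
\end{align*}
By Fourier inversion applied to the definition \eqref{stat.ft}, with $K^{\mathrm{ft}}$ compactly supported and bounded, the last integral equals $K^{\mathrm{ft}}(b\xi)/f^{\mathrm{ft}}_\epsilon(\xi)$. I will check the sign convention carefully; symmetry of $K^{\mathrm{ft}}$ makes the outcome insensitive to it. This is the $l=0$ case of \eqref{Power of decon without ME Assm O} and \eqref{Power of decon without ME Assm S} in Lemma \ref{lemma.power of decon}, so I can also simply invoke that lemma. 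Hence
\begin{align*}
G_n(\xi)=\frac{K^{\mathrm{ft}}(b\xi)}{f^{\mathrm{ft}}_\epsilon(\xi)}\,\frac{1}{n}\sum_{i=1}^n\ee^{\ii W_i\xi}.
\end{align*}

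\emph{Step 2 (mean and bias).} Independence of $X$ and $\epsilon$ (Assumption \ref{ass.D}(iii)) gives $\mathbb{E}\,\ee^{\ii W\xi}=f_X^{\mathrm{ft}}(\xi)f_\epsilon^{\mathrm{ft}}(\xi)$. Therefore $\mathbb{E}\,G_n(\xi)=K^{\mathrm{ft}}(b\xi)f_X^{\mathrm{ft}}(\xi)$, and the bias is $[K^{\mathrm{ft}}(b\xi)-1]f_X^{\mathrm{ft}}(\xi)$, where $|f_X^{\mathrm{ft}}|\le 1$.
\begin{itemize}
\item Ordinary smooth case: the moment conditions $\int u^lK(u)\,du=0$ for $1\le l\le p-1$ in Assumption \ref{ass.O}(iii) mean that the derivatives of $K^{\mathrm{ft}}$ at $0$ of orders $1,\dots,p-1$ vanish. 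So $K^{\mathrm{ft}}(b\xi)-1=O(b^p)$ uniformly on the compact set $\Pi$, using the $(p+1)$-smoothness of $K$ or simply $\int|u|^p|K|<\infty$. Assumption \ref{ass.O}(iv), $nb^{2p}\to0$, then gives $o(n^{-1/2})$.
\item Supersmooth case: all moments of $K$ vanish. For the flat-top infinite-order kernel used here, $K^{\mathrm{ft}}\equiv1$ near zero, so $K^{\mathrm{ft}}(b\xi)=1$ exactly on $\Pi$ once $b$ is small. More generally the bias is $o(1)$ of all polynomial orders in $b$; if needed I would impose this $\sqrt{n}$-negligibility explicitly.
\end{itemize}

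\emph{Step 3 (stochastic part).} The remaining term is
\begin{align*}
\frac{K^{\mathrm{ft}}(b\xi)}{f^{\mathrm{ft}}_\epsilon(\xi)}\cdot\frac1n\sum_{i=1}^n\left\{\ee^{\ii W_i\xi}-\mathbb{E}\,\ee^{\ii W\xi}\right\}.
\end{align*}
The prefactor is bounded uniformly on $\Pi$. Indeed, $|K^{\mathrm{ft}}|$ is bounded, and $1/f^{\mathrm{ft}}_\epsilon(\xi)$ is a polynomial in the ordinary smooth case and equals $\ee^{\mu\xi^2}$ in the supersmooth case, so it is continuous and hence bounded on the compact $\Pi$.

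It then suffices to show that $\sup_{\xi\in\Pi}\bigl|n^{-1/2}\sum_i\{\ee^{\ii W_i\xi}-\mathbb{E}\,\ee^{\ii W\xi}\}\bigr|=O_p(1)$. This is the main (mild) obstacle: Donskerness of the class $\{w\mapsto\ee^{\ii w\xi}:\xi\in\Pi\}$.
\begin{itemize}
\item First route: use the Lipschitz bound $|\ee^{\ii w\xi_1}-\ee^{\ii w\xi_2}|\le|w|\,|\xi_1-\xi_2|$. This requires $\mathbb{E}W^2<\infty$, after which Theorem 2.7.11 of \cite{van1996weak} applies, as in the proof of Lemma \ref{lemma.main term known}. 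Finiteness of $\mathbb{E}W^2$ holds in the Laplace and Gaussian cases and follows from Assumption \ref{ass.O}(v) through the $l=0$ term with $h\equiv g$ type envelopes.
\item Second route, if no moment of $W$ is available: invoke the classical result that the empirical characteristic function process is Donsker on compacts under a weak logarithmic moment (Csörgő), or use the bracketing argument of Example 2.10.10 of \cite{van1996weak}, as is done later in the paper.
\end{itemize}
The real and imaginary parts are handled separately.

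Combining Steps 2 and 3 yields \eqref{lemma.Gn known eq1}. The same computation with multipliers $V_i$ (mean zero, bounded, independent of the data) shows that $\sup_\xi|G_n^\ast(\xi)|=O_p(n^{-1/2})$; this is the form used in the proof of Theorem \ref{theorem.known boot}.
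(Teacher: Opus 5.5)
Your argument is correct and follows the paper's proof step for step. Both use the closed form $G_n(\xi)=\{K^{\mathrm{ft}}(b\xi)/f^{\mathrm{ft}}_\epsilon(\xi)\}\,n^{-1}\sum_{i}\ee^{\ii W_i\xi}$. Both bound the bias $[K^{\mathrm{ft}}(b\xi)-1]f^{\mathrm{ft}}_X(\xi)=O(b^p)=o(n^{-1/2})$ using the vanishing derivatives of $K^{\mathrm{ft}}$ at zero. Both then need a uniform $\sqrt{n}$-rate for the empirical characteristic function, multiplied by a prefactor that is bounded on $\Pi$; boundedness is all that is used.

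The one real difference is the tool for the stochastic term.
\begin{itemize}
\item The paper invokes the Hilbert-space CLT, Theorem 3.9 of \cite{chen1998central}, under the trivially true condition $\mathbb{E}\int_\Pi|\ee^{\ii W\xi}|^2\,d\xi<\infty$. This needs no moments of $W$. What it naturally delivers, however, is an $O_p(n^{-1/2})$ bound in the $L_2(\Pi)$-norm, which does not by itself give the supremum in \eqref{lemma.Gn known eq1}.
\item Your Donsker route gives the supremum directly, at the price of a tail condition on $W$. Some such condition is genuinely needed for a sup-norm statement, because the empirical characteristic function process need not be tight on a compact set when $W$ is arbitrarily heavy-tailed. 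Flagging it is to your credit.
\end{itemize}

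Two corrections to how you discharge that condition.
\begin{itemize}
\item $\mathbb{E}W^2<\infty$ does not follow from Assumption \ref{ass.O}(v) in general; for bounded $g$ that condition says nothing about the tails of $W$. What is true is that $\epsilon$ has moments of all orders under \ref{ass.O}(ii) or \ref{ass.S}(ii), since its characteristic function is $C^\infty$. Also, by conditional Jensen, $\mathbb{E}\,g(X;\theta_0)^4\le\mathbb{E}Y^4<\infty$, which controls the tails of $X$ whenever $|g(\cdot;\theta_0)|$ grows at least polynomially.
\item You do not need second moments. The bound $|\ee^{\ii w\xi_1}-\ee^{\ii w\xi_2}|\le 2^{1-\gamma}|w|^{\gamma}|\xi_1-\xi_2|^{\gamma}$, combined with Theorem 2.7.11 of \cite{van1996weak}, needs only $\mathbb{E}|W|^{2\gamma}<\infty$ for some $\gamma\in(0,1]$.
\end{itemize}

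Your treatment of the supersmooth bias is more careful than the paper's, whose $O(b^p)$ bound has no content there. With only $b\to0$ in Assumption \ref{ass.S}(iv), an infinite-order kernel makes $K^{\mathrm{ft}}(b\xi)-1$ smaller than any power of $b$, but not necessarily $o(n^{-1/2})$. The flat-top kernel, by contrast, makes it exactly zero once $b\sup_{\Pi}|\xi|\le 0.05$.

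Finally, rely on your direct Fourier-inversion computation in Step 1 rather than on Lemma \ref{lemma.power of decon}. Its $l=0$ case returns $c_0^{os}(\xi)$ plus a remainder, or $c_0^{ss}(\xi)=\ee^{\mu\xi^2}$, and so does not reproduce the exact factor $K^{\mathrm{ft}}(b\xi)$.
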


\begin{proof}[Proof of Lemma \ref{lemma.Gn known}]
    We first note that the left-hand side of \eqref{lemma.Gn known eq1}, which we aim to prove, can be rewritten as follows,
    \begin{align*}
        \frac{1}{n}\sum_{i=1}^n\int\mathcal{K}_b\left(\frac{x-W_i}{b}\right)\ee^{\ii x\xi}\,dx = \frac{K^{\mathrm{ft}}(b\xi)}{f^{\mathrm{ft}}_\epsilon(\xi)}\left(\frac{1}{n}\sum_{i=1}^n\ee^{\ii W_i\xi}\right),
    \end{align*}
    For the bias term, note that Assumption \ref{ass.O} implies the boundedness of $(K^{\mathrm{ft}})^{(p)}(t)$ when $t$ values between $0$ and $b\xi$, and $(K^{\mathrm{ft}})^{(l)}(0)=0$ for $1\leq l<p$,
    \begin{align}\label{lemmaproof.Gn known bias}
        &\frac{K^{\mathrm{ft}}(b\xi)}{f^{\mathrm{ft}}_\epsilon(\xi)}\mathbb{E}\left(\frac{1}{n}\sum_{i=1}^n\ee^{\ii W_i\xi}\right) = K^{\mathrm{ft}}(b\xi)f^{\mathrm{ft}}_X(\xi) = f_X^{\mathrm{ft}}(\xi)+O_p\left(b^p\right).
    \end{align}
    For the main term, noting that a compact set $\Pi$ implies $\mathbb{E}[\int_\Pi\vert\ee^{\ii W\xi}\vert^2\,d\xi]<\infty$, and invoking Theorem 3.9 of \cite{chen1998central} and Lemma \ref{lemma.power of tsf kernel},
    \begin{align*}
        &\sup_{\xi\in\Pi}\left\vert\frac{1}{n}\sum_{i=1}^n\ee^{\ii W_i\xi}-\mathbb{E}\left[\ee^{\ii W\xi}\right]\right\vert=O_p\left(n^{-\frac{1}{2}}\right), \quad \sup_{\xi\in\Pi}\left\vert \frac{K^{\mathrm{ft}}(b\xi)}{f^{\mathrm{ft}}_\epsilon(\xi)}-c_0^{os}(\xi) \right\vert=o_p\left(1\right).
    \end{align*}
    Thus,
    \begin{align}\label{lemmaproof.Gn known main}
        &\sup_{\xi\in\Pi}\left\vert
        \frac{1}{n}\sum_{i=1}^n\int\mathcal{K}_b\left(\frac{x-W_i}{b}\right)\ee^{\ii x\xi}\,dx-\frac{K^{\mathrm{ft}}(b\xi)}{f^{\mathrm{ft}}_\epsilon(\xi)}\mathbb{E}\left(\frac{1}{n}\sum_{i=1}^n\ee^{\ii W_i\xi}\right)
        \right\vert=O_p\left(n^{-\frac{1}{2}}\right).
    \end{align}
    Combining \eqref{lemmaproof.Gn known bias} and \eqref{lemmaproof.Gn known main}, we obtain \eqref{lemma.Gn known eq1}.
\end{proof}

\begin{lemma}\label{lemma.Gn unknown}
    Suppose that Assumption \ref{ass.D} and \ref{ass.D'} hold, along with either Assumption \ref{ass.O} and \ref{ass.O'} for the ordinary smooth case or Assumption \ref{ass.S} and \ref{ass.S'} for the supersmooth case,
    \begin{align}\label{lemma.Gn unknown eq1}
        &\sup_{\xi\in\Pi}\left\vert
        \frac{1}{n}\sum_{i=1}^n\int\hat{\mathcal{K}}_b\left(\frac{x-W_i}{b}\right)\ee^{\ii x\xi}\,dx-\mathbb{E}\left[\ee^{\ii X\xi}\left(1+\psi_2(\xi)\right)\right]
        \right\vert=O_p\left(n^{-\frac{1}{2}}\right).
    \end{align}
\end{lemma}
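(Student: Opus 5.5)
The plan follows the proof of Lemma \ref{lemma.Gn known}, adding a control of the estimated deconvolution factor. The first step is to write the left-hand side in closed form. Integrating over $x$ against $\hat{\mathcal{K}}_b$ and changing variables $x=W_i+bu$ gives the Fourier identity $\int\hat{\mathcal{K}}_b((x-W_i)/b)\ee^{\ii x\xi}\,dx=\ee^{\ii W_i\xi}K^{\mathrm{ft}}(b\xi)/\hat f^{\mathrm{ft}}_\epsilon(\xi)$. Hence
\begin{align*}
\hat G_n(\xi)=\frac{K^{\mathrm{ft}}(b\xi)}{\hat f^{\mathrm{ft}}_\epsilon(\xi)}\,\frac1n\sum_{i=1}^n\ee^{\ii W_i\xi}.
\end{align*}
I would then use the decomposition $\hat{\mathcal{K}}_\epsilon=\mathcal{K}_\epsilon+\mathcal{K}_{\epsilon,1}+\mathcal{K}_{\epsilon,2}$. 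It gives the exact algebraic identity $1/\hat f^{\mathrm{ft}}_\epsilon=(1+\psi_1+\psi_2)/f^{\mathrm{ft}}_\epsilon$, evaluated at $\xi$. This splits $\hat G_n(\xi)=G_n(\xi)[1+\psi_1(\xi)+\psi_2(\xi)]$.

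The second step handles the pieces separately, uniformly over the compact set $\Pi$. For $G_n(\xi)$, Lemma \ref{lemma.Gn known} gives $\sup_\Pi|G_n(\xi)-f_X^{\mathrm{ft}}(\xi)|=O_p(n^{-1/2})$, and $f_X^{\mathrm{ft}}(\xi)=\mathbb{E}[\ee^{\ii X\xi}]$.

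For $\psi_1(\xi)=n^{-1}\sum_j\Pi_{\epsilon,j}(\xi)$, note that by Assumption \ref{ass.D'} (symmetry and i.i.d.\ replicates) $\mathbb{E}\cos(\xi(W-W^r))=|f^{\mathrm{ft}}_\epsilon(\xi)|^2$. So $\Pi_{\epsilon,j}(\xi)$ is mean zero. It is also Lipschitz in $\xi$ with envelope proportional to $|W-W^r|$, after dividing by $\inf_\Pi|f^{\mathrm{ft}}_\epsilon|^2>0$, which holds on the compact $\Pi$ in both the ordinary smooth and Gaussian cases. The same Euclidean-class/Donsker argument used for \eqref{lemmaproof.main term known main} then yields $\sup_\Pi|\psi_1(\xi)|=O_p(n^{-1/2})$.

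The term $\psi_2(\xi)$ is quadratic in $[\hat f^{\mathrm{ft}}_\epsilon]^2-[f^{\mathrm{ft}}_\epsilon]^2$. Since $f^{\mathrm{ft}}_\epsilon$ is bounded away from zero on $\Pi$ and $\hat f^{\mathrm{ft}}_\epsilon\to f^{\mathrm{ft}}_\epsilon$ uniformly, it is $O_p(n^{-1})$ uniformly and hence bounded.

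The third step combines the pieces. Expanding,
\begin{align*}
\hat G_n(\xi)-f_X^{\mathrm{ft}}(\xi)(1+\psi_2(\xi))=[G_n(\xi)-f_X^{\mathrm{ft}}(\xi)](1+\psi_1(\xi)+\psi_2(\xi))+f_X^{\mathrm{ft}}(\xi)\psi_1(\xi).
\end{align*}
Each term on the right is $O_p(n^{-1/2})$ uniformly, using $|f_X^{\mathrm{ft}}|\le1$. This gives \eqref{lemma.Gn unknown eq1} with centering $\mathbb{E}[\ee^{\ii X\xi}(1+\psi_2(\xi))]$, where $\psi_2$ is treated as a given (sample-dependent) factor.

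The main obstacle is the uniform control of the random denominator $\hat f^{\mathrm{ft}}_\epsilon$ on $\Pi$. One must exclude $\hat f^{\mathrm{ft}}_\epsilon(\xi)$ approaching zero, and the square root and absolute value in its definition are nonsmooth near zero. I would first establish $\sup_\Pi|\hat f^{\mathrm{ft}2}_\epsilon-f^{\mathrm{ft}2}_\epsilon|=O_p(n^{-1/2})$ by the same Donsker argument, and then restrict to the event $\inf_\Pi\hat f^{\mathrm{ft}}_\epsilon\ge\tfrac12\inf_\Pi f^{\mathrm{ft}}_\epsilon$, whose probability tends to one. The required moment for the Lipschitz envelope is supplied by Assumption \ref{ass.D'}(ii).
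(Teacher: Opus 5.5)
Your proposal is correct, but it takes a more direct route than the paper. The paper's omitted proof says to repeat the argument of Lemma \ref{lemma.main term unknown}, with Lemma \ref{lemma.Gn known} in place of Lemma \ref{lemma.main term known}. That means splitting $\hat G_n-G_n$ into the $\mathcal{K}_{\epsilon,1}$ and $\mathcal{K}_{\epsilon,2}$ contributions. The first is treated as a V-statistic in $(D_i,D_j)$: a diagonal part plus the H\'{a}jek projection of a U-statistic. The second is controlled through the moment bounds on $K_{\epsilon,2}$ and $\mathcal{K}_{\epsilon,2}$ in Lemmas \ref{lemma.power of tsf kernel}--\ref{lemma.power of decon}. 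The $\psi_2$ in the centering is what is left over from that second piece.

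You instead use the fact that the mark is constant. The $x$-integral then collapses exactly to $\hat G_n(\xi)=G_n(\xi)\{1+\psi_1(\xi)+\psi_2(\xi)\}$, with $\hat f^{\mathrm{ft}}_\epsilon$ evaluated only at $\xi\in\Pi$. Three things follow:
\begin{itemize}
\item the V-statistic factors into a product of two sample means, so no projection is needed;
\item $\psi_2$ is trivially $O_p(n^{-1})$ uniformly on the compact $\Pi$;
\item the linear term $f^{\mathrm{ft}}_X(\xi)\,n^{-1}\sum_i\Pi_{\epsilon,i}(\xi)$ appears explicitly.
\end{itemize}
Your route also shows that the ambiguous centering $\mathbb{E}[\ee^{\ii X\xi}(1+\psi_2(\xi))]$ can be replaced by $f^{\mathrm{ft}}_X(\xi)$. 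That replacement is exactly what is used in the proof of Theorem \ref{theorem.unknown ordinary smooth under H0}. The paper's route has no advantage for this lemma. Its only virtue is uniformity with the treatment of general marks $H(Y,x)$, where no closed form exists and the behaviour of $\hat f^{\mathrm{ft}}_\epsilon$ at frequencies up to $c_0/b$ genuinely matters.

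One point your write-up leaves implicit. The Fourier identity $\int\hat{\mathcal{K}}_b((x-W_i)/b)\ee^{\ii x\xi}\,dx=\ee^{\ii W_i\xi}K^{\mathrm{ft}}(b\xi)/\hat f^{\mathrm{ft}}_\epsilon(\xi)$ presupposes that $\hat{\mathcal{K}}_b$ is a well-defined integrable kernel. In practice that requires $\hat f^{\mathrm{ft}}_\epsilon(t)$ to stay away from zero for all $|t|\le c_0/b$ with probability tending to one, not only on $\Pi$. In your route, that is where the bandwidth lower bounds of Assumptions \ref{ass.O'}(i) and \ref{ass.S'}(i) would be invoked, rather than in the analysis on $\Pi$ in your final paragraph.
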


\begin{proof}[Proof of Lemma \ref{lemma.Gn unknown}]
    The proof is identical to that of Lemma \ref{lemma.main term unknown} except for using Lemma \ref{lemma.Gn known} instead of Lemma \ref{lemma.main term known} and is therefore omitted.
\end{proof}
\end{document}